\keywords{Bitcoin, BitML, blockchain, smart contracts, liquidity, verification}
\definecolor{BlueViolet}{rgb}{0, 0, 0.55}
\definecolor{RubineRed}{rgb}{0.88, 0.07, 0.37}
\definecolor{ForestGreen}{rgb}{0.13, 0.55, 0.13}
\definecolor{Blue}{rgb}{0.0, 0.0, 1.0}
\definecolor{NavyBlue}{rgb}{0.0, 0.0, 0.5}
\definecolor{Black}{rgb}{0.02, 0.02, 0.02}
\definecolor{MidnightBlue}{rgb}{0.0, 0.2, 0.4}
\definecolor{Gray}{rgb}{0.41, 0.41, 0.41}
\definecolor{TealBlue}{rgb}{0.212,0.459,0.533}
\definecolor{Plum}{rgb}{0.6,0.25,0.6}
\tikzset{
  block/.style={
  draw, 
  rectangle, 
  minimum height=1.5cm, 
  minimum width=2.5cm, 
  align=center,
  draw=red!50!black!50, 
  top color=white, 
  bottom color=red!50!black!20, 
  }, 
  line/.style={->,>=latex'}
}
\tikzset{
  inputblock/.style={
  draw, 
  rounded rectangle, 
  minimum height=1.2cm, 
  minimum width=2.5cm, 
  align=center,
  very thick,draw=black!50,
  top color=white,bottom color=black!20,
  }, 
  line/.style={->,>=latex'}
}
\tikzset{
  outputblock/.style={
  draw, 
  rounded rectangle, 
  minimum height=1.2cm, 
  minimum width=2.5cm, 
  align=center,
  very thick,draw=black!50,
  top color=white,bottom color=blue!20,
  }, 
  line/.style={->,>=latex'}
}
\newlist{inlinelist}{enumerate*}{1}
\setlist*[inlinelist,1]{%
  label=(\roman*),
}
\definecolor{listingBG}{HTML}{FFFFCB}%
\definecolor{listingFrame}{HTML}{BBBB98}%
\definecolor{listingLineno}{rgb}{0.5,0.5,1.0}%
\definecolor{LightGrey}{rgb}{0.975,0.975,0.975}
\lstdefinelanguage{bitml}{
	commentstyle=\color{Gray},
	morecomment=[l]{;},
	morecomment=[s]{/*}{*/},
	classoffset=0,
        escapechar=\$,
	morekeywords={contract,pre,choice,reveal,split,put,revealif,withdraw,after},
	keywordstyle=\color{Blue}\bfseries,
	classoffset=1,
	morekeywords={deposit,secret,participant,sig,versig,fun,unit,int,string,bool,address,uint},
	keywordstyle=\color{TealBlue},
	classoffset=2,
	morekeywords={BTC,true},
	keywordstyle=\color{Plum}\bfseries,
}
\newcommand{\ifempty}[3]{%
  \ifthenelse{\isempty{#1}}{#2}{#3}%
}
\newcommand{\ifdots}[3]{%
  \ifthenelse{\equal{#1}{...}}{#2}{#3}%
}
\newcommand{\mypar}[1]{\paragraph*{#1}}
\newcommand{\Real}[1]{\mathrm{Real}}
\newcommand{\codefont}{\fontsize{10}{10}\selectfont}
\newcommand{\code}[1]{{\tt\codefont {#1}}}
\newcommand{\eg}{e.g.\@\xspace}
\newcommand{\ie}{i.e.\@\xspace}
\newcommand{\wrt}{w.r.t.\@\xspace}
\newcommand{\subst}[2]{\{{#1}/{#2}\}}
\newcommand{\BTC}{\textup{%
  \leavevmode
  \vtop{\offinterlineskip 
    \setbox0=\hbox{B}%
    \setbox2=\hbox to\wd0{\hfil\hskip-.03em
    \vrule height .3ex width .15ex\hskip .08em
    \vrule height .3ex width .15ex\hfil}
    \vbox{\copy2\box0}\box2}}\xspace}
\def\pmvColor{\color{ForestGreen}}
\newcommand{\pmvFmt}[1]{{\pmvColor{\sf #1}}}
\newcommand{\PartT}{\pmvFmt{{Hon}}\xspace} 
\newcommand{\pmv}[2][]{\pmvFmt{#2}_{\pmvColor{#1}}\xspace}
\newcommand{\pmvA}[1][]{\pmv[{#1}]{A}}
\newcommand{\pmvB}[1][]{\pmv[{#1}]{B}}
\newcommand{\pmvC}[1][]{\pmv[{#1}]{C}}
\newcommand{\pmvM}[1][]{\pmv[{#1}]{M}} 
\newcommand{\pmvP}[1][]{\pmv[{#1}]{P}}
\newcommand{\concrightarrow}[1]{\ifempty{#1}{\rightarrow_{\flat}}{\xrightarrow{#1}_{\flat}}}
\newcommand{\absrightarrow}[1]{\ifempty{#1}{\rightarrow_{\sharp}}{\xrightarrow{#1}_{\sharp}}}
\newcommand{\finrightarrow}[1]{\ifempty{#1}{\rightarrow_{\sharp\textit{fin}}}{\xrightarrow{#1}_{\sharp\textit{fin}}}}
\newcommand{\txIn}[2][]{{\small\textsf{in}}\ifempty{#1}{\ifempty{#2}{}{: {#2}}}{({#1})\ifempty{#2}{}{: {#2}}}}
\newcommand{\timeT}{t}
\DeclareMathAlphabet{\mathbfsf}{\encodingdefault}{\sfdefault}{bx}{n}
\newcommand{\mmid}{\,\|\,}
\newcommand{\cn}[1]{\ifempty{#1}{\operatorname{cn}}{\operatorname{cn}(#1)}}
\newcommand{\irule}[2]{\dfrac{#1}{#2}}
\newcommand{\bnfdef}{::=}
\newcommand{\bnfmid}{\;|\;}
\newcommand{\nrule}[1]{{\scriptsize \textsc{#1}}}
\newcommand{\smallnrule}[1]{{\tiny \textsc{#1}}}
\newcommand{\sem}[2][]{\mbox{\ensuremath{\llbracket{#2}\rrbracket_{#1}}}}
\newcommand{\Nat}{\mathbb{N}}
\newcommand{\bind}[2]{\nicefrac{#2}{#1}}
\newcommand{\setenum}[1]{\{#1\}}
\newcommand{\setcomp}[2]{\left\{{#1} \,\middle|\, {#2}\right\}}
\newcommand{\qedef}{\ensuremath{\diamond}}
\definecolor{LightGrey}{rgb}{0.95,0.95,0.95}
\definecolor{keyword}{HTML}{7F0055}
\newlength\replength
\newcommand\repfrac{.1}
\newcommand\rulewidth{.6pt}
\newcommand\tdashfill[1][\repfrac]{\cleaders\hbox to \replength{%
  \smash{\rule[\arraystretch\ht\strutbox]{\repfrac\replength}{\rulewidth}}}\hfill}
\newcommand\tdotfill[1][\repfrac]{\cleaders\hbox to \replength{%
  \smash{\raisebox{\arraystretch\dimexpr\ht\strutbox-.1ex\relax}{.}}}\hfill}
\newcommand{\var}[2][]{#2_{#1}} 
\newcommand{\varX}[1][]{\var[#1]{x}} 
\renewcommand{\varXi}[1][]{\var[#1]{x'}} 
\newcommand{\varY}[1][]{\var[#1]{y}} 
\newcommand{\varZ}[1][]{\var[#1]{z}}
\newcommand{\VarX}[1][]{\var[#1]{X}}
\newcommand{\varph}[2][]{#2_{#1}} 
\newcommand{\varphX}[1][]{\varph[#1]{d}} 
\newcommand{\varphXi}[1][]{\varph[#1]{d'}} 
\newcommand{\varphY}[1][]{\varph[#1]{e}} 
\newcommand{\varphYi}[1][]{\varph[#1]{e'}} 
\newcommand{\dmv}[1][]{\chi_{#1}}
\newcommand{\dmvi}[1][]{\chi'_{#1}}
\newcommand{\authmv}[1][]{\xi_{#1}}
\newcommand{\secr}[2][]{\mathord{{#2}_{#1}}}
\newcommand{\presecret}[2]{{#1}\!:\!\textup{\texttt{secret}}\,{\secr{#2}}}
\newcommand{\secrA}[1][]{\secr[#1]{a}} 
\newcommand{\secrAi}[1][]{\secr[#1]{a'}} 
\newcommand{\secrB}[1][]{\secr[#1]{b}}
\newcommand{\secrOf}[1]{\mathcal{S}_{#1}}
\newcommand{\const}[2][]{#2_{#1}} 
\newcommand{\constK}[1][]{\const[#1]{k}}
\newcommand{\constT}[1][]{\const[#1]{t}}
\newcommand{\constTi}[1][]{\const[#1]{t'}}
\newcommand{\val}[2][]{#2_{#1}} 
\newcommand{\valV}[1][]{\val[#1]{v}}
\newcommand{\valVi}[1][]{\val[#1]{v'}}
\newcommand{\valW}[1][]{\val[#1]{w}}
\newcommand{\true}{\mathit{true}}
\def\contrColor{\color{RubineRed}}
\newcommand{\contrFmt}[1]{{\contrColor{#1}}}
\newcommand{\contrC}[1][]{\mathord{\contrFmt{C}_{\contrColor{#1}}}}
\newcommand{\contrCi}[1][]{\mathord{\contrC[#1]\contrColor{'}}}
\newcommand{\contrCii}[1][]{\mathord{\contrC[#1]\contrColor{''}}}
\newcommand{\contrD}[1][]{\mathord{\contrFmt{D}_{\contrColor{#1}}}}
\newcommand{\contrDi}[1][]{\mathord{\contrD[#1]\contrColor{'}}}
\newcommand{\expE}[1][]{\ensuremath{E}_{#1}}
\newcommand{\contrG}[1][]{\mathord{\contrFmt{G}_{\contrColor{#1}}}}
\newcommand{\contrGi}[1][]{\mathord{\contrG[#1]\contrColor{'}}}
\newcommand{\predP}[1][]{\mathord{p_{#1}}}
\newcommand{\contrAdv}[3][]{\setenum{#2}^{#1}{#3}}  
\newcommand{\contrAdvC}[2]{\mathcal{C}} 
\newcommand{\persdep}[3]{\mbox{\ensuremath{{#1}\textup{:}\,{#2}\,\textup{\texttt{@}}\,{#3}}}}
\newcommand{\assign}[3]{{#1}:{#2}\leftarrow{#3}}
\newcommand{\putname}{\textup{\texttt{put}}}
\newcommand{\revealname}{\textup{\texttt{reveal}}}
\newcommand{\rngtname}{\textup{\texttt{rngt}}}
\newcommand{\wherename}{\textup{\texttt{if}}}
\newcommand{\andputname}{\&}
\newcommand{\putCnoreveal}[1]{%
  \ifempty{#1}{\tau}{\putname \, {#1}}
}
\newcommand{\putCtrue}[2]{%
  \ifempty{#2}
  {\putCnoreveal{#1}}
  {\ifempty{#1}
    {\revealname \, {#2}}
    {\putname \, {#1} \, \revealname \, {#2}}}
}
\newcommand{\putC}[3][]{
  \ifempty{#1}
  {\putCtrue{#2}{#3}}
  {\ifempty{#2}
    {\revealname \, {#3} \, \wherename \, {#1}}
    {\putname \, {#2} \, \andputname \, \revealname \, {#3} \, \wherename \, {#1}}
  }
}
\newcommand{\cnil}{0}
\newcommand{\splitname}{\textup{\texttt{split}}}
\newcommand{\splitC}[1]{\splitname \ifempty{#1}{}{\; #1}}
\newcommand{\splitB}[2]{\ifempty{#1}{}{{#1} \rightarrow }{#2}}
\newcommand{\withdrawname}{\textup{\texttt{withdraw}}}
\newcommand{\withdrawC}[1]{\withdrawname\ifempty{#1}{}{\; {\pmv{#1}}}}
\newcommand{\authC}[2]{{\pmv{#1}}\,\textup{:}\,{#2}}
\newcommand{\aftername}{\texttt{after}}
\newcommand{\afterC}[2]{\textup{\aftername}\,{#1}\,\textup{:}\,{#2}}
\newcommand{\cVar}[2][]{\texttt{\textup{#2}}_{#1}}
\newcommand{\cVarX}[1][]{\cVar[#1]{X}}
\newcommand{\cVarY}[1][]{\cVar[#1]{Y}}
\newcommand{\procParamA}[1][]{\beta_{#1}}
\newcommand{\procParamB}[1][]{\beta'_{#1}}
\newcommand{\sexp}[1][]{\mathcal{E}_{#1}}
\newcommand{\sexpi}[1][]{\mathcal{E}'_{#1}}
\newcommand{\decl}[2]{{#1}({#2})}
\newcommand{\call}[2]{{#1}\ifempty{#2}{}{\langle{#2}\rangle}}
\newcommand{\callX}[1]{\call{\cVarX}{#1}}
\newcommand{\callY}[1]{\call{\cVarY}{#1}}
\newcommand{\advPref}[1]{{\pmvColor *}:{#1}}
\newcommand{\adv}[1]{\advPref{\rngtname}\ifempty{#1}{}{\;}{#1}}
\newcommand{\extadv}[2]{{#1}:\rngtname\ifempty{#1}{}{\;}{#2}}
\newcommand{\ncadv}[1]{\textup{\texttt{call}}\ifempty{#1}{}{\;}{#1}}
\newcommand{\confG}[1][]{\Gamma_{#1}}
\newcommand{\confGi}[1][]{\Gamma'_{#1}}
\newcommand{\confD}[1][]{\Delta_{#1}}
\newcommand{\confDi}[1][]{\Delta'_{#1}}
\newcommand{\gnil}{0}
\newcommand{\confContr}[3][]{\langle {#2}\ifempty{#3}{}{, {#3}} \rangle_{#1}}
\newcommand{\confDep}[3][]{\langle {#2}, {#3} \rangle_{#1}}
\newcommand{\confAuth}[2]{{#1}[{#2}]}
\newcommand{\confRev}[3]{{#1} : {#2}\#{#3}}
\newcommand{\confSec}[3]{\setenum{\confRev{#1}{#2}{#3}}}
\newcommand{\confTsep}{\mid}
\newcommand{\confT}[2]{{#1} \confTsep {#2}}
\newcommand{\authLab}[2]{{#1}:{\ifdots{#2}{\cdots}{#2}}} 
\newcommand{\authCommit}[3][]{\textbf{\#} \rhd \contrAdv[#1]{#2}{#3}}
\newcommand{\authAdv}[4][]{{#2} \rhd \contrAdv[#1]{#3}{#4}}
\newcommand{\authBranch}[2]{{#1} \rhd {#2}}
\newcommand{\authJoin}[4]{{#1},{#2} \rhd \confDep{#3}{#4}}
\newcommand{\authSplit}[4]{{#1} \rhd \confDep{#2}{#3},\confDep{#2}{#4}}
\newcommand{\authDestroy}[3]{{#1},{#2} \rhd {#3}}
\newcommand{\authDonate}[2]{{#1} \rhd {#2}}
\newcommand{\abs}[2][]{\ifempty{#2}{\alpha_{#1}}{\alpha_{#1}{({#2})}}}
\newcommand{\absConf}[2][]{\abs[{#1}]{#2}}
\newcommand{\absContr}[2][]{\abs[{#1}]{#2}}
\newcommand{\absConfG}[1][]{\confG[{#1}]^{\sharp}}
\newcommand{\absConfGi}[1][]{\confGi[{#1}]{\hspace{1pt}}^{\sharp}}
\newcommand{\absRunname}{\mathcal{R}}
\newcommand{\absRunS}[1][]{\absRunname^{\sharp}_{#1}}
\newcommand{\origin}[3][]{\mathit{orig}_{#1}\ifempty{#2}{}{(#2,#3)}}
\newcommand{\descen}[3][]{\mathit{desc}_{#1}\ifempty{#2}{}{(#2,#3)}}
\newcommand{\labS}[1][]{\ell_{#1}}  
\newcommand{\absLabS}[1][]{\ell_{#1}}  
\newcommand{\confS}[1]{\confG[{#1}]}
\newcommand{\runnameS}{\mathcal{R}}
\newcommand{\runS}[1][]{\runnameS_{#1}}
\newcommand{\ZCB}{\ensuremath{\contrFmt{\it ZCB}}\xspace}
\newcommand{\ZCBi}{\ensuremath{\contrFmt{\it ZCB2}}\xspace}
\newcommand{\ZCBii}{\ensuremath{\contrFmt{\it ZCB3}}\xspace}
\newcommand{\Escrow}{\ensuremath{\contrFmt{\it Escrow}}\xspace}
\newcommand{\Resolve}{\ensuremath{\contrFmt{\it Resolve}}\xspace}
\newcommand{\Refund}[1][]{\ensuremath{\cVar{Refd}{}_{#1}}\xspace}
\newcommand{\TCommitment}{\ensuremath{\contrFmt{\it TC}}\xspace}
\begin{document}

\title{Verifying liquidity of recursive Bitcoin contracts}

\author[M.~Bartoletti]{Massimo Bartoletti\rsuper{a}}
\author[S.~Lande]{Stefano Lande\rsuper{a}}
\author[M.~Murgia]{Maurizio Murgia\rsuper{b}}
\author[R.~Zunino]{Roberto Zunino\rsuper{b}}
\address{University of Cagliari, Italy}
\email{\texttt{\{bart,lande\}@unica.it}}
\address{University of Trento, Italy}
\email{\texttt{\{maurizio.murgia,roberto.zunino\}@unitn.it}}

\maketitle

\begin{abstract}
  Smart contracts --- 
  computer protocols that regulate the exchange of crypto-assets
  in trustless environments ---
  have become popular with the spread of blockchain technologies.
  A landmark security property of smart contracts is \emph{liquidity}: 
  in a non-liquid contract, it may happen that some assets remain frozen,
  \ie not redeemable by anyone.
  The relevance of this issue is witnessed by recent liquidity attacks 
  to Ethereum, which have frozen hundreds of USD millions.
  We address the problem of verifying liquidity on BitML, 
  a DSL for smart contracts with a secure compiler to Bitcoin,
  featuring primitives for currency transfers, 
  contract renegotiation and consensual recursion.
  Our main result is a verification technique for liquidity.
  We first transform the infinite-state semantics of BitML 
  into a finite-state one, 
  which focusses on the behaviour of a chosen set of contracts,
  abstracting from the moves of the context.
  With respect to the chosen contracts, this abstraction is sound,
  \ie if the abstracted contract is liquid, 
  then also the concrete one is such.
  We then verify liquidity by model-checking
  the finite-state abstraction. 
  We implement a toolchain that automatically verifies liquidity of 
  BitML contracts and compiles them to Bitcoin, 
  and we assess it through a benchmark of representative contracts.
\end{abstract}

\section{Introduction}
\label{sec:intro}

Smart contracts --- 
computer protocols that regulate the exchange of assets
in trustless environments ---
have become popular with the growth of interest in blockchain technologies.
Mainstream blockchain platforms like Ethereum, Tezos and Cardano, 
feature expressive high-level languages for programming smart contracts.
This flexibility has a drawback in that
it may open the door to attacks that steal or tamper with the
assets controlled by vulnerable contracts~\cite{ABC17post,Luu16ccs}.

An alternative approach is to sacrifice the expressiveness of smart contracts 
to reduce the attack surface.
This approach was pursued first by Bitcoin, where transactions 
can specify simple conditions on how to redeem them,
using a limited set of logic, arithmetic, and cryptographic operators.
Despite the limited expressiveness of these conditions, 
it is possible to encode a variety of smart contracts,
\eg gambling games, escrow services, crowdfunding systems,
by suitably chaining transactions 
\cite{Andrychowicz14bw,Andrychowicz14sp,Andrychowicz16cacm,bitcoinsok,Banasik16esorics,BZ17bw,Bentov14crypto,Kumaresan14ccs,KumaresanB16ccs,Kumaresan15ccs,KumaresanVV16ccs,Miller16zerocollateral}.
The common trait of these works is that they render contracts
as cryptographic protocols, where participants
can exchange/sign messages, read the blockchain, and append transactions.
Verifying the correctness of these protocols is hard, 
since it requires to reason in a computational model, 
where participants can manipulate arbitrary bitstrings,
only being constrained to use PPTIME algorithms.

Departing from this approach, BitML~\cite{BZ18bitml} 
allows to write Bitcoin contracts in a high-level, process-algebraic language.
BitML features a compiler that translates contracts 
into sets of standard Bitcoin transactions.
The compiler enjoys a computational soundness property, which guarantees that 
the execution of the compiled contract is coherent with 
the semantics of the source BitML specification,
even in the presence of adversaries.

In this paper we address the problem of verifying BitML contracts,
in an extension of BitML with renegotiation and recursion~\cite{BMZ20coordination}.
In particular, we focus on a landmark property of smart contracts,
called \emph{liquidity}, which ensures that funds cannot remain frozen
within a contract.
Before discussing our main contributions, we overview below
BitML and our analysis technique.

\subsection{BitML overview}

In BitML, any participant can broadcast a \emph{contract advertisement} 
$\contrAdv{\contrG}{\contrC}$,
where $\contrC$ is the contract,
specifiying the rules to transfer bitcoins (\BTC),
while $\contrG$ is a set of \emph{preconditions} to its stipulation.
Preconditions may require participants to deposit some $\BTC$ in the contract,
or to commit to some secret.
Once $\contrAdv{\contrG}{\contrC}$ has been advertised,
each participant can choose whether to accept it, or not.
When all the preconditions $\contrG$ have been satisfied, 
and all the involved participants have accepted,
the contract $\contrC$ becomes \emph{stipulated}.
Stipulated contracts have a \emph{balance}, 
initially set to the sum of the deposits required by its preconditions.
This balance is updated when participants execute the contract, 
\eg by depositing/withdrawing funds to/from the contract. 

A contract $\contrC$ is a \emph{choice} among zero or more branches.
Each branch is a \emph{guarded contract}, consisting of one
action, and zero or more continuations.
The guarded contract 
\[
\withdrawC{A}
\]
transfers the whole balance to $\pmvA$, and then terminates.
The guarded contract
\[
\splitC{\!\mmid_{i=1}^n \splitB{\valW[i]}{\contrC[i]}}
\]
decomposes the contract into $n$ parallel components $\contrC[i]$, 
distributing the balance according to the weights $\valW[i]$. 
The guarded contract 
\[
\putC[\predP]{}{\vec{\secrA}}
\]
checks that all the secrets $\vec{\secrA}$ have been revealed
and satisfy the predicate $\predP$
(of course, a secret can be revealed only by the participant
who has chosen it, as we will see in Section~\ref{sec:bitml}).

When enabled, the above-mentioned actions can be fired by anyone, at anytime. 
To restrict \emph{who} can execute actions and \emph{when}, 
one can use the decoration $\authC{\pmvA}{\contrD}$,
which requires the authorization of $\pmvA$,
and the decoration $\afterC{\constT}{\contrD}$,
which requires to wait until time $\constT$.

Finally, the guarded contract 
\[
\adv{\callX{\vec{\sexp}}}
\]
allows participants to renegotiate the contract. 
This requires first a definition for the variable $\callX{}$,
which is given by an equation of the form
$\decl{\cVarX}{\vec{\procParamA}} = \contrAdv{\contrG}{\contrC}$.
To finalize the renegotiation, 
all the participants involved in the current contract
must accept the new contract $\contrC$,%
\footnote{We use the asterisk in $\adv{\callX{\vec{\sexp}}}$ to stress
that \emph{all} the participants must accept the renegotiation.}
by satisfying its precondition $\contrG$ (similarly to stipulation).
When this happens, the control passes to $\contrC$, 
where the formal parameters $\vec{\procParamA}$ are instantiated
to the actual parameters $\vec{\sexp}$.
Note that $\contrC$ could refer to $\cVarX$, so enabling recursion.

We exemplify BitML by specifying the \emph{timed commitment} contract,
a basic protocol to construct more complex contracts, 
like \eg lotteries and other games~\cite{Andrychowicz14sp}.
Assume that a participant $\pmvA$ wants to choose a secret,
and promises to reveal it before some time $\constT$.
The contract must ensure that if $\pmvA$ does not reveal 
the secret in time, then she will pay a penalty of $1\BTC$ to $\pmvB$ 
(\eg, the opponent in a game).
In BitML, this is modelled as follows:
\[
\contrAdv
{\persdep{\pmvA}{1}{\varX} \mid \presecret{\pmvA}{\secrA}}
{\;
  ( 
  \putC{}{\secrA} . \, \withdrawC{\pmvA}
  \;\; + \;\;
  \afterC{\constT}{\withdrawC{\pmvB}}
  )
}
\]

The precondition requires $\pmvA$ to pay upfront $1\BTC$, 
deposited in a transaction $\varX$,
and to commit to a secret $\secrA$.
The contract is a guarded choice between two branches:
any participant can fire an enabled guard, and make the contract evolve to its continuation.
The guard of the left branch is $\putC{}{\secrA}$, which is enabled only 
after $\pmvA$ reveals the secret.
Its continuation $\withdrawC{\pmvA}$ allows anyone to transfer $1 \BTC$ to $\pmvA$.
The guard of the right branch is 
$\afterC{\constT}{\withdrawC{\pmvB}}$, 
which is enabled only after time $\constT$ and
allows anyone to transfer $1 \BTC$ to $\pmvB$
(here there is no continuation, since the $\withdrawC{}$ terminates).
So, before time $\constT$, 
$\pmvA$ has the option to reveal $\secrA$ (avoiding the penalty),
or to keep it secret (paying the penalty).
If no branch is taken by time $\constT$,
the first participant who fires its $\withdrawC{}$ gets $1\BTC$.

\subsection{Liquidity}

The \emph{liquidity} property requires that the contract balance 
can always be transferred eventually to some participant:
in a non-liquid contract, funds can be frozen forever, 
unavailable to anyone, hence effectively destroyed%
\footnote{To the best of our knowledge, the use of the term ``liquidity''
to refer to a contract property was first introduced in~\cite{Tsankov18ccs}, 
in the setting of Ethereum contracts.}.
A simple form of liquidity could just require that
participants can always cooperate to unfreeze funds.
However, this notion would contrast with the setting of smart contracts,
where participants are mutually untrusted, and may refuse to cooperate.
For instance, consider a contract
where $\pmvA$ and $\pmvB$ contribute $1\BTC$ each for a donation of $2\BTC$ 
to either $\pmvC$ or $\pmv{D}$:
\[
\contrAdv
{\persdep{\pmvA}{1}{\varX} \mid \persdep{\pmvB}{1}{\varY}}
{\;
\big(
\authC{\pmvA}{\authC{\pmvB}{\withdrawC{\pmvC}}} \; + \; \authC{\pmvA}{\authC{\pmvB}{\withdrawC{\pmv{D}}}}
\big)
\;}
\]
As in the timed commitment example, this contract is a choice between two branches, both decorated with $\authC{\pmvA}{\authC{\pmvB}{\cdots}}$. 
This means that taking any branch requires the authorization of both users: 
if $\pmvA$ and $\pmvB$ disagree on the branch to take, the funds are frozen.
When $\pmvA$ and $\pmvB$ agree on the recipient of the donation, the funds in the contract
are unlocked, and they can be transferred to the chosen recipient.

This contract would be liquid only by assuming the cooperation between $\pmvA$ and $\pmvB$:
indeed, $\pmvA$ alone cannot guarantee that the $2\BTC$ will eventually be donated,
as $\pmvB$ can choose a different recipient, or even refuse to give any authorization.
Consequently, unless $\pmvA$ trusts $\pmvB$, it makes sense to consider this contract as \emph{non-liquid} from $\pmvA$'s point of view 
(and for similar reasons, also from that of $\pmvB$).

Consider again the timed commitment contract
(we omit the preconditions for brevity):
\[
  \putC{}{\secrA} . \, \withdrawC{\pmvA}
  \;\; + \;\;
  \afterC{\constT}{\withdrawC{\pmvB}} 
\]

This contract is liquid from $\pmvA$'s point of view,
\emph{even} if $\pmvB$ is dishonest: 
indeed, $\pmvA$ can reveal the secret and then redeem the funds from the contract.
The timed commitment is also liquid from $\pmvB$'s point of view:
if $\pmvA$ does not reveal the secret (making the first branch stuck),
the funds in the contract can be redeemed through the second branch, after time $\constT$.

In a \emph{mutual} timed commitment,
where $\pmvA$ and $\pmvB$ have to exchange their secrets or pay a $1\BTC$ penalty,
achieving liquidity is a bit more challenging.
We first consider a wrong attempt:
\begin{align*}
  & \putC{}{\secrA} . \, \putC{}{\secrB} . \,  
    \splitC{(
    \splitB{1\BTC}{\withdrawC{\pmvA}}
    \mid
    \splitB{1\BTC}{\withdrawC{\pmvB}}
    )}
  \\
  & + \; \afterC{\constT}{\withdrawC{\pmvB}}
\end{align*}

This contract is liquid for $\pmvB$, but not for $\pmvA$.
Indeed, if $\pmvA$ performs the $\putC{}{\secrA}$ action, 
$\pmvB$ could refuse to reveal $\secrB$, making the contract stuck.
Instead, $\pmvB$ can wait time $\constT$ and then fire $\withdrawC{\pmvB}$;
if, in the meanwhile, $\pmvA$ has fired $\putC{}{\secrA}$,
$\pmvB$ can reveal his secret and fire $\putC{}{\secrB}$, 
and then liquidate the contract.

To make the contract liquid for both participants, we amend it as follows:
\begin{align*}
  &
    \putC{}{\secrA} . \, 
    \big(\putC{}{\secrB} . \, 
    \splitC{(
    \splitB{1\BTC}{\withdrawC{\pmvA}}
    \mid
    \splitB{1\BTC}{\withdrawC{\pmvB}}
    )}
  \\
  & \hspace {50pt} + \,
    \afterC{\constT+1}{\withdrawC{\pmvA}} 
    \big)
  \\
  & + \; \afterC{\constT}{\withdrawC{\pmvB}}
\end{align*}
Now, if $\pmvA$ has fired $\putC{}{\secrA}$ but $\pmvB$ refuses to reveal
$\secrB$, after time $\constT+1$ she can liquidate the contract
by performing the $\withdrawC{\pmvA}$.

As a more involved example, consider a recursive variant 
of the timed commitment:
\begin{align*}
  \decl{\cVarX}{n} 
  & = \contrAdv{\presecret{\pmvA}{\secrA} \mid \persdep{\pmvA}{1}{\varphX}}{\contrC}
  \\
  \contrC 
  & = \adv{\call{\cVarX}{n+1}}
    \; + \;
    \putC{}{\secrA} . \, \withdrawC{\pmvA}
    \; + \;
    \afterC{(\constT+n)}{\withdrawC{\pmvB}}
\end{align*}

This contract is a toy example of a recursive contract, where 
$\call{\cVarX}{n}$ can be renegotiated, 
transferring its balance to $\call{\cVarX}{n+1}$. 
When this happens, $\pmvA$ must commit to a new secret $\secrA$, 
and provide an additional deposit $\varphX$ of $1 \BTC$. 
Beyond renegotiation, $\call{\cVarX}{n}$ allows $\pmvA$ to reveal 
her secret and withdraw all the bitcoins deposited in the contract so far. 
If she does not reveal, $\pmvB$ can fire the last branch after time $\constT+n$,
transferring the whole balance to himself.
This contract is liquid for both $\pmvA$ and $\pmvB$.
In every reachable state, a participant can stop renegotiating the 
current contract $\call{\cVarX}{n}$.
Then, anyone can liquidate the contract
by either waiting until time $\constT + n$ and then performing 
the $\withdrawC{\pmvB}$, 
or firing the $\withdrawC{\pmvA}$ when this action is enabled.

The examples above, albeit elementary, 
show that detecting if a contract is liquid is not straightforward, in general.
Automatic techniques for the verification 
can be useful tools for the developers of smart contracts.

\subsection{Verifying liquidity}

One of the main contributions of this paper is
a verification technique for the liquidity of BitML contracts.
Our technique is based on a more general result,
\ie a correspondence between the concrete semantics of BitML
and a new abstract semantics, which is finite-state and
correctly approximates the concrete semantics.
To obtain a finite-state abstraction, 
we need to cope with several sources of infiniteness
of the concrete semantics:
the unbounded passing of time,
the stipulation and renegotiation of contracts,
and the operations on bitcoin deposits.
When studying the liquidity of a set of contracts $\VarX$ from the point of 
view of a participant $\pmvA$,
we abstract away all this, by just recording the actions which can be
performed on the descendants of $\VarX$, distinguishing between
the actions doable by $\pmvA$ alone from those which require 
the cooperation of other participants.
This abstraction produces a finite-state transition system, 
which we model-check for liquidity.

\subsection{Contributions}

We summarise our main contributions as follows:
\begin{itemize}

\item We introduce an extension of BitML featuring 
  the renegotiation primitive $\adv{\callX{}}$.
  Compared to the version in~\cite{BZ18bitml}, the current language
  is more expressive: 
  besides allowing participants to provide new deposits and secrets
  at run-time, it also allows for \emph{unbounded} recursion,
  still admitting compilation to Bitcoin.

\item We formalize a notion of liquidity (Definition~\ref{def:liquid}).
  With respect to a participant $\pmvA$, a contract is liquid 
  when $\pmvA$ alone can ensure that funds do not remain frozen within 
  the contract, even in the presence of adversaries.

\item We introduce an abstraction of the semantics of BitML 
  which is finite-state, and sound with respect to the concrete 
  (infinite-state) semantics.
  Building upon this abstraction,
  we devise a sound verification technique for liquidity in BitML
  (Theorem~\ref{th:abs-bitml:liquidity}).
  
\item We develop a toolchain for writing and verifying the liquidity of 
  BitML contracts, and for deploying them on Bitcoin.
  The toolchain is based on a BitML embedding in Racket~\cite{Flatt12cacm},
  which allows for programming BitML contracts within the DrRacket IDE.
  The toolchain also implements a compiler from BitML contracts to standard Bitcoin transactions.

\item We implement a collection of BitML contracts,
  which we use as a benchmark to evaluate our toolchain.
  This collection contains a variety of complex contracts, including
  financial services, auctions, lotteries, and other gambling games.

\item We discuss alternative renegotiation primitives,
  which allow participants to choose some parameters 
  (\eg the amounts to be deposited) at renegotiation time,
  to change the set of participants involved in the renegotiated contract,
  and to renegotiate contracts without the consent of all participants.

\item We discuss alternative notions of liquidity,
  \eg taking into account participants' strategies.

\end{itemize}

\subsection{Comparison with previous work}
This paper borrows and extends the contributions of some past papers of ours.
BitML was originally introduced in~\cite{BZ18bitml}, in a version without 
renegotiation and recursion.
A main limitation of this version was that the participants could not renegotiate the terms
of a stipulated contract: this prevented from expressing common financial contracts, where 
funds have to be added by participants at run-time.
Renegotiation and recursion were added in~\cite{BMZ20coordination},
where we showed that, despite the increased expressiveness,
it was still possible to execute BitML on standard Bitcoin,
preserving the security guarantees of BitML.
These papers did not deal with verification of contracts, and with liquidity.
This notion was introduced in \cite{BZ19post}, in the original version of BitML, 
\ie without renegotiation and recursion.
The BitML toolchain was first presented in~\cite{bitmlracket}, supporting the
compilation and verification of contracts in the original version of BitML.

The current paper is the first one which studies liquidity 
in the full BitML.
Recursion adds significant complexity to verification, as it makes the calculus Turing-complete.
Because of this, the abstraction in~\cite{BZ19post} is no longer usable,
so the current paper devises an alternative verification technique.
The current paper also improves the BitML toolchain, extending the compiler and the liquidity verifier to contracts with renegotiation and recursion. 

\section{BitML with renegotiation and recursion}
\label{sec:bitml}

We assume a set of \emph{participants}, 
ranged over by $\pmvA, \pmvB, \ldots$,
a set of \emph{deposit names} $\varX, \varY, \ldots$,
a set of \emph{deposit variables} $\varphX,\varphY,\hdots$,
and a set of \emph{secret names} $\secrA, \secrB, \ldots$.
We use $\dmv,\dmvi,\hdots$ to range over deposits (both names and variables),
and $\valV, \valVi, \valW$ to range over non-negative rational values.
We denote with $\PartT$ the set of the \emph{honest} participants.
We denote with $\secrOf{\pmvA}$ the set of secret names usable by $\pmvA$,
requiring that $\secrOf{\pmvA} \cap \secrOf{\pmvB} = \emptyset$
if $\pmvA \neq \pmvB$.

\begin{defi}[Contract precondition]
  \label{def:bitml:precondition}
  Contract preconditions have the following syntax
  (the deposits $\dmv$ in a contract precondition $\contrG$ must be distinct):
  \begin{align*}
    \contrG \bnfdef \;\;
      & \persdep{\pmvA}{\valV}{\dmv}
      && \text{deposit of $\valV \BTC$ put by $\pmvA$}
    \\
    \bnfmid\;
      & \presecret{\pmvA}{\secrA} 
      && \text{secret committed by $\pmvA$ ($\secrA \in \secrOf{\pmvA}$)}
    \\
    \bnfmid\;
      & \contrG \mid \contrG
      && \text{composition}
         \tag*{$\qedef$}
  \end{align*}
\end{defi}

The precondition $\persdep{\pmvA}{\valV}{\dmv}$ 
requires $\pmvA$ to own $\valV \BTC$ in a deposit $\dmv$,
and to spend it for stipulating the contract.
The precondition $\presecret{\pmvA}{\secrA}$ requires $\pmvA$ to generate a
secret $\secrA$, and commit to it before the contract starts. 
After stipulation, $\pmvA$ can choose whether to disclose 
the secret $\secrA$, or not.

To define contracts,
we assume a finite set of \emph{recursion variables}, 
ranged over by $\cVarX,\cVarY,\hdots$,
and a language of \emph{static expressions} $\sexp,\sexpi,\hdots$, 
formed by integer constants $k$, 
integer variables $\procParamA,\procParamB,\hdots$,
and the usual arithmetic operators.
We omit to define the syntax and semantics of static
expressions, since they are standard.
We assume that a closed static expression evaluates to an integer value.
We use the $\vec{\cdot}$ notation for finite sequences.

\begin{defi}[Contract]
  \label{def:bitml:contract}
  Contracts are terms with the syntax in~\Cref{fig:bitml:contract}, where:
  \begin{inlinelist}
  \item the summation $\textstyle \sum_{i \in I} \contrD[i]$ is over a finite set of indices $I$;
  \item each recursion variable $\cVarX$ has a unique defining equation 
    $\decl{\cVarX}{\vec{\procParamA}} = \contrAdv{\contrG}{\contrC}$;
  \item renegotiations $\adv{\callX{\vec{\sexp}}}$ 
    have the correct number of arguments;
  \item the names $\vec{\secrA}$ in
    $\putC[p]{}{\vec{\secrA}}$ are distinct, 
    and they include those in $p$.
  \end{inlinelist}
  We denote with $\cnil$ the empty sum. 
  The order of decorations is immaterial,
  \eg, $\afterC{\sexp}{\authC{A}{\authC{B}{\contrD}}}$ is equivalent to
  $\authC{B}{\authC{A}{\afterC{\sexp}{\contrD}}}$.
  \hfill{$\qedef$}
\end{defi}

A contract $\contrC$ is a choice among guarded contracts. 
The guarded contract $\withdrawC{A}$ transfers the whole balance to $\pmvA$.
A guarded contract $\putC[\predP]{}{\vec{\secrA}} .\, \contrCi$ 
continues as $\contrCi$
once all the secrets $\vec{\secrA}$ have been revealed
and satisfy the predicate $\predP$.
The guarded contract 
\mbox{$\splitC{(\splitB{\valW[1]}{\contrC[1]} \mid \cdots \mid \splitB{\valW[n]}{\contrC[n]})}$}
divides the contract into $n$ contracts $\contrC[i]$, 
splitting the balance according to the weights $\valW[i]$. 
To restrict \emph{who} can execute a branch and \emph{when}, 
one can use the decoration $\authC{\pmvA}{\contrD}$, 
requiring to wait for $\pmvA$'s authorization,
and the decoration $\afterC{\sexp}{\contrD}$,
requiring to wait until the time specified by the static
expression~$\sexp$.
The guarded contract $\adv{\callX{\vec{\sexp}}}$ allows the participants 
involved in the contract to renegotiate it.
Intuitively, if
$\decl{\cVarX}{\vec{\procParamA}} = \contrAdv{\contrG}{\contrC}$,
then the contract continues as 
$\contrC\setenum{\bind{\vec{\procParamA}}{\vec{\sexp}}}$  
if all the participants 
mentioned in $\contrG$ give their authorization,
and satisfy the precondition $\contrG$.

\begin{figure*}[t]
  \footnotesize
  \begin{minipage}{0.62\textwidth}
    \begin{align*}
      \contrC 
                 & \bnfdef 
                   \textstyle \sum_{i \in I} \contrD[i]
                                                       && \text{contract}
      \\[4pt]
      \contrD
                 & \bnfdef
                                                       && \text{guarded contract}
      \\
                 & \withdrawC{\pmvA}
                                                       && \text{transfer the balance to $\pmvA$}
      \\[2pt]
      \bnfmid
                 & \putC[\predP]{}{\vec{\secrA}} . \, \contrC
                                                       && \text{reveal secrets (if $\predP$ is true)} 
      \\[2pt]
      \bnfmid
                 & \splitC{\|_i\, (\splitB{\valW[i]}{\contrC[i]})}
                                                       && \text{split the balance}
      \\[2pt]
      \bnfmid
                 & \authC{\pmvA}{\contrD}
                                                       && \text{wait for $\pmvA$'s authorization}
      \\[2pt]
      \bnfmid
                 & \afterC{\sexp}{\contrD}
                                                       && \text{wait until time $\sexp$}
      \\[2pt]
      \bnfmid
                 & \adv{\callX{\vec{\sexp}}}
                                                       && \text{renegotiate the contract}
    \end{align*}
  \end{minipage}
  \begin{minipage}{0.35\textwidth}
    \begin{align*}
      \predP 
      \bnfdef \; &
           \true
      && \text{truth}
      \\
      \bnfmid
                 & \predP \land \predP
        && \text{conjunction}
      \\
      \bnfmid
                 & \neg \predP
        && \text{negation}
      \\
      \bnfmid
                 & \expE = \expE 
        && \text{equality} 
      \\
      \bnfmid
                 & \expE < \expE 
        && \text{less than}
      \\
      \expE
      \bnfdef \; &
           \sexp
      && \text{static expression}
      \\
      \bnfmid
                 & \secrA
        && \text{secret}
      \\
      \bnfmid
                 & \expE + \expE
        && \text{addition} 
      \\
      \bnfmid
                 & \expE - \expE
        && \text{subtraction}
    \end{align*}
  \end{minipage}
  \caption{Syntax of BitML contracts.}
  \label{fig:bitml:contract}
\end{figure*}

\begin{defi}[Contract advertisement]
  \label{def:bitml:contrAdv}
  A contract advertisement is a term $\contrAdv[\circ]{\contrG}{\contrC}$,
  such that:
  \begin{inlinelist}
  \item $\circ$ is either empty or a deposit name;
  \item each secret name in $\contrC$ occurs in $\contrG$; 
  \item \label{item:bitml:contrAdv:persistent}
    $\contrG$ requires a deposit from each 
    $\pmvA$ in $\contrAdv[\circ]{\contrG}{\contrC}$;
  \item each $\adv{\callX{\vec{\sexp}}}$ in $\contrC$ refers to
    a defining equation
    $\decl{\cVarX}{\vec{\procParamA}} = \contrAdv{\contrGi}{\contrCi}$
    where the participants in $\contrGi$ are the same as those in
    $\contrG$.
    \hfill{$\qedef$}
  \end{inlinelist}
\end{defi}

Intuitively, $\contrAdv{\contrG}{\contrC}$ is the advertisement of 
a contract $\contrC$ with preconditions $\contrG$,
while $\contrAdv[\varX]{\contrG}{\contrC}$ is the advertisement of 
a renegotiation of an existing contract $\varX$.
Condition~\ref{item:bitml:contrAdv:persistent} 
is used to guarantee that the contract is stipulated 
only if \emph{all} the involved participants give their consent:
namely, $\pmvA$'s consent is rendered as $\pmvA$'s authorization to spend one 
of her deposits.
The last condition is only used to simplify the technical development:
we outline in Section~\ref{sec:variants-bitml} how to relax it, 
by allowing renegotiations to exclude some participants,
or to include new ones, which were not among those who originally 
stipulated the contract.

We now define the semantics of BitML, starting from its configurations.

\begin{defi}[Configuration]
  \label{def:bitml:conf}
  Configurations are terms with the syntax in~\Cref{fig:bitml:conf}, where:
  \begin{inlinelist}
  \item in a committed secret, $N \in \Nat \cup \setenum{\bot}$
    (where $\bot$ denotes an ill-formed commitment);
  \item in a revealed secret, $N \in \Nat$;
  \item \label{def:bitml:conf:no-dup-auth}
    in a configuration there are no duplicate authorizations;
  \item in a configuration containing $\confContr[\varX]{\cdots}{}$ and
    $\confContr[\varY]{\cdots}{}$, it must be $\varX \neq \varY$;
  \item there exists at most one term $\constT$.
  \end{inlinelist}
  We assume that $(\mid,\gnil)$ is a commutative monoid, and 
  we denote indexed parallel compositions with~$\mmid_i$.
  We say that $\confG$ is \emph{initial} when it contains only deposits 
  (\ie, terms $\confDep[x]{\pmvA}{\valV}$),
  and that it is a \emph{timed configuration} 
  when it contains a term $\constT$.
  We denote with $\cn{\confG}$ the set of contract names $\varX$
  such that $\confG$ contains $\confContr[\varX]{\contrC}{\valV}$,
  for some $\contrC$ and $\valV$.
  \hfill\qedef
\end{defi}

The intuition behind the various terms in configurations is the following:
\begin{itemize}

\item $\confContr[\varX]{\contrC}{\valV}$
  is a stipulated contract storing $\valV \BTC$, 
  uniquely identified by the name $\varX$;

\item $\confDep[\varX]{\pmvA}{\valV}$ is a deposit of $\valV \BTC$ 
  owned by $\pmvA$, and uniquely identified by the name $\varX$;

\item $\confAuth{\pmvA}{\authmv}$
  is $\pmvA$'s \emph{authorizations} to perform some action $\authmv$;

\item \mbox{$\confSec{\pmvA}{\secrA}{N}$}
  represents $\pmvA$'s commitment to a secret $N$, identified by $\secrA$;

\item $\confRev{\pmvA}{\secrA}{N}$
  represents a secret $N$, identified by $\secrA$, and revealed by $\pmvA$.

\end{itemize}

\begin{figure*}
  \footnotesize
  \resizebox{\textwidth}{!}{
    \begin{minipage}{0.45\textwidth}
      \begin{align*}
        \confG
        & \bnfdef
        && \text{configuration}
        \\
        & \cnil
        && \text{empty} 
        \\
        \bnfmid
        & \contrAdv[\circ]{\contrG}{\contrC}
        && \text{contract advertisement}
        \\
        \bnfmid
        & \confContr[x]{\contrC}{\valV}
        && \text{active contract storing $\valV \BTC$}
        \\
        \bnfmid
        & \confDep[x]{\pmvA}{\valV}
        && \text{$\pmvA$'s deposit of $\valV \BTC$}
        \\
        \bnfmid
        & \confAuth{\pmvA}{\authmv}
        && \text{$\pmvA$'s authorization for $\authmv$}
        \\
        \bnfmid
        & \confSec{\pmvA}{\secrA}{N}
        && \text{committed secret of $\pmvA$}
        \\
        \bnfmid
        & \confRev{\pmvA}{\secrA}{N}
        && \text{revealed secret of $\pmvA$}
        \\
        \bnfmid
        & \timeT
        && \text{global time}
        \\
        \bnfmid
        & \confG \mid \confGi
        && \text{composition}
      \end{align*}
    \end{minipage}
    \begin{minipage}{0.4\textwidth}
      \begin{align*}
        \authmv 
        & \bnfdef
        && \text{authorization to \ldots}
        \\
        & \authCommit[\circ]{\contrG}{\contrC}
        && \text{commit to $\contrAdv[\circ]{\contrG}{\contrC}$}
        \\
        \bnfmid
        & \authAdv[\circ]{\varX}{\contrG}{\contrC}
        && \text{spend $\varX$ for $\contrAdv[\circ]{\contrG}{\contrC}$}
        \\
        \bnfmid
        & \authBranch{\varX}{\contrD}
        && \text{take branch $\contrD$ in contract $\varX$}
        \\
        \bnfmid
        & \authJoin{\varX}{\varY}{\pmvA}{\valV}
        && \text{join deposits $x$ and $y$}
        \\      
        \bnfmid
        & \authSplit{x}{\pmvA}{\valV}{\valVi}
        && \text{split deposit $x$ in two}
        \\
        \bnfmid
        & \authDonate{x}{\pmvB}
        && \text{donate deposit $x$ to $\pmvB$}
        \\
        \bnfmid
        & \authDestroy{\vec{\varX}}{i}{\varY}
        && \text{destroy $i$-th deposit in $\vec{x}$}
      \end{align*}
    \end{minipage}}
  \caption{Configurations and authorizations.}
  \label{fig:bitml:conf}
\end{figure*}

\begin{defi}[BitML semantics]
  \label{def:bitml:timed-LTS}
  The semantics of BitML is a Labelled Transition System (LTS)
  between timed configurations.
  In the rest of this~\namecref{sec:bitml} we describe 
  the reduction rules of the LTS, 
  which implicitly define the labels $\labS$.
  A \emph{concrete run} $\runS$ is a sequence
  \(
  \confG[0]
  \concrightarrow{\labS[0]}
  \confG[1]
  \concrightarrow{\labS[1]}
  \cdots
  \),
  where $\confG[0]$ is timed.
  If $\runS$ is finite, we write 
  $\confS{\runS}$ for the untimed part of its last configuration.
\end{defi}

Below we gently introduce the BitML semantics, 
first illustrating each construct through examples, 
and then giving the general rule.
Labels represent the actions performed by participants. 
A decoration $\authLab{\pmvA}{\cdots}$  
in the label means that the action can be performed only by $\pmvA$.
The absence of such a decoration means that the action can be performed by anyone.
Note that labels are not instrumental to define the BitML semantics: yet, they are essential to the definition of liquidity, since there we need to associate actions to the participants who can perform them.
In the examples, we will omit the labels.

\mypar{Deposits}

A deposit $\confDep[x]{\pmvA}{\valV}$ can be reduced in several ways:
it can be split into smaller deposits, joined with another deposit,
transferred to another participant, or destroyed. 
In all cases, its owner $\pmvA$ must first authorise the action.
The reduction rules for deposits are detailed in~\Cref{fig:bitml:semantics:dep}.
Rule \nrule{[Dep-AuthJoin]} allows $\pmvA$ 
to authorize the merge of two deposits $x,y$ into a single one, 
creating the needed authorization. 
The label of the form $\authLab{\pmvA}{\cdots}$ records that only $\pmvA$ 
can perform this move. 
Rule \nrule{[Dep-Join]} uses this authorization to create a single deposit $z$ of $\pmvA$. 
The rules \nrule{[Dep-AuthDivide]} and \nrule{[Dep-Divide]} act similarly, 
allowing a deposit of $\pmvA$ to be divided in two parts. 
The rules \nrule{[Dep-AuthDonate]} and \nrule{[Dep-Donate]} allow $\pmvA$ to transfer one of her deposits to another participant.
The rules~\nrule{[Dep-AuthDestroy]} and~\nrule{[Dep-Destroy]} allow a set of participants to destroy a set of deposits 
$x_1 \cdots x_n$. 
To do that, first each participant $\pmvA[i]$ must provide 
the needed authorization
$\confAuth{\pmvA[i]}{\authDestroy{\vec{x}}{i}{y}}$
for their own deposit $\varX[i]$.
When all the authorizations have been collected, rule~\nrule{[Dep-Destroy]}
eliminates the deposits.
The last two rules in~\Cref{fig:bitml:semantics:dep} 
are needed to properly represent the fact that computational participants
can create (and put on the ledger) transactions without a counterpart 
in the symbolic model.
To achieve a meaningful correspondence between the symbolic and the computational models, 
putting on the ledger such transactions is rendered  with the rule~\nrule{[Dep-destroy]}.

\begin{figure*}[h!]
  \small
  \begin{tabular}{c}
    \(
    \irule
    {}
    {\confDep[x]{\pmvA}{\valV} \mid \confDep[y]{\pmvA}{\valVi} \mid \confG
    \concrightarrow{\authLab{\pmvA}{x,y}} 
    \confDep[x]{\pmvA}{\valV} \mid \confDep[y]{\pmvA}{\valVi} \mid 
    \confAuth{\pmvA}{\authJoin{x}{y}{\pmvA}{\valV + \valVi}}  \mid \confG
    }
    \smallnrule{[Dep-AuthJoin]}
    \)
    \\[20pt]
    \(
    \irule
    {\confG = \confAuth{\pmvA}{\authJoin{x}{y}{\pmvA}{\valV + \valVi}} \mid \confAuth{\pmvA}{\authJoin{y}{x}{\pmvA}{\valV + \valVi}} 
    \mid \confGi
    \quad z \; \text{fresh}}
    {\confDep[x]{\pmvA}{\valV} \mid \confDep[y]{\pmvA}{\valVi} \mid \confG
    \concrightarrow{{\it join}(x,y)} 
    \confDep[z]{\pmvA}{\valV + \valVi} \mid \confGi
    }
    \smallnrule{[Dep-Join]}
    \)
    \\[20pt]
    \(
    \irule
    {}
    {\confDep[x]{\pmvA}{\valV + \valVi} \mid \confG
    \concrightarrow{\authLab{\pmvA}{\varX,\valV,\valVi}} 
    \confDep[x]{\pmvA}{\valV+\valVi} \mid 
    \confAuth{\pmvA}{\authSplit{\varX}{\pmvA}{\valV}{\valVi}} \mid \confG
    }
    \smallnrule{[Dep-AuthDivide]}
    \)
    \\[20pt]
    \(
    \irule
    {\confG = \confAuth{\pmvA}{\authSplit{x}{\pmvA}{\valV}{\valVi}} 
    \mid \confGi
    \quad y,y' \; \text{fresh}}
    {\confDep[x]{\pmvA}{\valV + \valVi} \mid \confG
    \concrightarrow{{\it divide}(\varX,\valV,\valVi)} 
    \confDep[y]{\pmvA}{\valV} \mid \confDep[y']{\pmvA}{\valVi}
    \mid \confGi
    }
    \smallnrule{[Dep-Divide]}
    \)
    \\[20pt]
    \(
    \irule
    {}
    {\confDep[\varX]{\pmvA}{\valV} \mid \confG
    \concrightarrow{\authLab{\pmvA}{\varX,\pmvB}} 
    \confDep[\varX]{\pmvA}{\valV} \mid 
    \confAuth{\pmvA}{\authDonate{\varX}{\pmvB}} \mid \confG
    }
    \smallnrule{[Dep-AuthDonate]}
    \)
    \\[20pt]
    \(
    \irule
    {\confG = \confAuth{\pmvA}{\authDonate{x}{\pmvB}} 
    \mid \confGi
    \quad \varY \; \text{fresh}}
    {\confDep[\varX]{\pmvA}{\valV} \mid \confG
    \concrightarrow{{\it donate}(\varX,\pmvB)} 
    \confDep[\varY]{\pmvB}{\valV}
    \mid \confGi
    }
    \smallnrule{[Dep-Donate]}
    \)
    \\[20pt]
    \(
    \irule
    {\vec{x} = x_1 \cdots x_n \quad j \in 1..n \quad
    y \text{ fresh (except in destroy authorizations for $\vec{x}$)}
    }
    {\big( \mmid_{i=1}^{n} \confDep[x_i]{\pmvA[i]}{\valV[i]} \big) \mid \confG
    \concrightarrow{\authLab{\pmvA[j]}{\vec{\varX},j}}
    \big( \mmid_{i=1}^{n} \confDep[x_i]{\pmvA[i]}{\valV[i]} \big) \mid 
    \confAuth{\pmvA[j]}{\authDestroy{\vec{x}}{j}{y}} \mid 
    \confG
    }
    \smallnrule{[Dep-AuthDestroy]}
    \)
    \\[20pt]
    \(
    \irule
    {\vec{x} = x_1 \cdots x_n \quad 
    \confG = \big( \mmid_{i=1}^{n} \, \confAuth{\pmvA[i]}{\authDestroy{\vec{x}}{i}{y}} \big) \mid \confGi}
    {\big( \mmid_{i=1}^{n} \confDep[x_i]{\pmvA[i]}{\valV[i]} \big) \mid \confG
    \concrightarrow{{\it destroy}(\vec{\varX})}
    \confGi}
    \smallnrule{[Dep-Destroy]}
    \)
  \end{tabular}
  \caption{Semantics of Bitcoin deposits.}
  \label{fig:bitml:semantics:dep}
\end{figure*}

\mypar{Stipulation: advertisement.}

Any participant can broadcast a contract advertisement
$\contrAdv{\contrG}{\contrC}$,
provided that all the deposits mentioned in $\contrG$ exist 
in the current configuration, 
and that the names of the secrets in $\contrG$ are fresh.
This is formalised by the following rule:
\medskip
\[
\irule
{
  \forall \persdep{\pmvA}{\valV}{\varX} \text{ in } \contrG :
  \confDep[{\varX}]{\pmvA}{\valV} \text{ in } \confG
  \qquad
  \text{ all secrets in } \contrG \text{ fresh }
}
{\confG 
  \concrightarrow{{\it adv}(\contrAdv{\contrG}{\contrC})}
  \contrAdv{\contrG}{\contrC} \mid \confG}
\smallnrule{[C-Adv]}
\]

\noindent
We exemplify this and the following rules through a running example.
Let:
\[
\contrG = 
\persdep{\pmvA}{1}{\varX} \mid 
\persdep{\pmvB}{1}{\varphX} \mid 
\presecret{\pmvA}{\secrA}
\]
This precondition requires both $\pmvA$ and $\pmvB$ to deposit
$1\BTC$, but $\pmvA$'s deposit name $\varX$ is known,
while $\pmvB$'one is not known yet, so we refer to it  
through a deposit variable $\varphX$.
Let $\contrC$ be an arbitrary contract involving only $\pmvA$ and $\pmvB$, and 
let $\confG[0] = \confG \mid \confDep[\varX]{\pmvA}{1} \mid \confDep[\varY]{\pmvB}{1}$
for some~$\confG$.
By rule \nrule{[C-Adv]}, the configuration $\confG[0]$ can take the transition:
\[
\confG[0] \;\concrightarrow{}\; \confG[0] \mid \contrAdv{\contrG}{\contrC} 
\; = \; \confG[1]
\]

\mypar{Stipulation: commitment}

To stipulate an advertised contract, all the participants mentioned in it
must fulfill the preconditions, by making available the required deposits, 
and committing to the required secrets.
In our example, $\pmvA$ has one secret to commit,  
so she can perform the following step:
\begin{equation}
  \label{eq:bitml:AuthCommitA}
  \confG[1]
  \; \concrightarrow{} \;
  \confG[1]
  \mid \confSec{\pmvA}{\secrA}{N}
  \mid \confAuth{\pmvA}{\authCommit{\contrG}{\contrC}}
  \; = \; \confG[2]
\end{equation}
where the term $\confSec{\pmvA}{\secrA}{N}$ represents  
$\pmvA$'s commitment to the secret $N$, while 
$\confAuth{\pmvA}{\authCommit{\contrG}{\contrC}}$ 
represents finalising the commitment phase for $\pmvA$.
Participant $\pmvB$ has no secrets to commit, but he must choose
one of his deposits (\eg, $\confDep[\varY]{\pmvB}{1}$)
to fulfill the precondition $\persdep{\pmvB}{1}{\varphX}$:
\begin{equation}
  \label{eq:bitml:AuthCommitB}
  \confG[2]
  \; \concrightarrow{} \;
  \confG[2]
  \mid \assign{\pmvB}{\varphX}{\varY}
  \mid \confAuth{\pmvB}{\authCommit{\contrG}{\contrC}}
  \; = \; \confG[3]
\end{equation}
In general, these steps are formalised by the following rule:
\[
\small
\irule
{\begin{array}{c}
   \begin{array}{l}
     \confD[s] = \mmid_{i=1}^{k} \; \confSec{\pmvA}{\secrA[i]}{N_i} 
     \\[3pt]
     \secrA[1] \cdots \secrA[k] \text{ are all the secrets of $\pmvA$ in $\contrG$} 
     \\[3pt]
     \forall i \in 1..k : \nexists N : 
     \confSec{\pmvA}{\secrA[i]}{N} \text{ in } \confG
     \\[3pt]
     \forall i \in 1..k : \nexists N : 
     \confRev{\pmvA}{\secrA[i]}{N} \text{ in } \confG
     \\[3pt]
     \forall i \in 1..k : N_i \in \begin{cases}
       \Nat & \text{if $\pmvA \in \PartT$} \\
       \Nat \cup \setenum{\bot} & \text{otherwise}
     \end{cases}
   \end{array}
   \begin{array}{l}
     \confD[d] = \mmid_{i=1}^{h} \;
     \assign{\pmvA}{\varphX[i]}{\varX[i]} 
     \\[3pt]
     \varphX[1] \cdots \varphX[h] \text{ are all the deposits of $\pmvA$ in $\contrG$}
     \\[3pt]
     \forall i \in 1..h : \persdep{\pmvA}{\valV[i]}{\varphX[i]}
     \text{ in } \contrG
     \\[3pt]
     \forall i \in 1..h : \nexists \varX : (\assign{\pmvA}{\varphX[i]}{\varX}) 
     \text{ in } \confG 
     \\[3pt]
     \forall i \in 1..h : \confDep[{\varX[i]}]{\pmvA}{\valV[i]} \text{ in } \confG
     \\[3pt]
     \forall i \neq j \in 1..h : \varX[i] \neq \varX[j]
     \\[3pt]
     \forall \varX, \valV, i \in 1..h : 
     \persdep{\pmvA}{\valV}{\varX} \text{ in } \contrG
     \implies {\varX[i]} \neq \varX
   \end{array}
 \end{array}
}
{\contrAdv[\circ]{\contrG}{\contrC} \mid \confG
  \concrightarrow{\authLab{\pmvA}{\contrAdv[\circ]{\contrG}{\contrC},\confD[s] \mid \confD[d]}}
  \contrAdv[\circ]{\contrG}{\contrC} 
  \mid \confG \mid \confD[s] \mid \confD[d]
  \mid
  \confAuth{\pmvA}{\authCommit[\circ]{\contrG}{\contrC}}
}
\smallnrule{[C-AuthCommit]}
\]

\medskip
The rule preconditions ensures that the final configuration fulfills the 
conditions required by $\contrG$.
We use the notation $\contrAdv[\circ]{\contrG}{\contrC}$ to refer to a contract 
advertisement of the form $\contrAdv{\contrG}{\contrC}$ or
$\contrAdv[\varX]{\contrG}{\contrC}$.
The case $\contrAdv{\contrG}{\contrC}$ corresponds to the original 
contract stipulation, while the other case corresponds to renegotiation.
In this way, we can use the same rule in both situations.
Note that condition~\ref{def:bitml:conf:no-dup-auth}
in Definition~\ref{def:bitml:conf} ensures that rule \nrule{[C-AuthCommit]}
cannot be used more than once to generate the same authorization.
The same is true for all the other rules that generate authorizations.

\mypar{Stipulation: authorization}
Back to our example, in the configuration $\confG[3]$ 
of~\eqref{eq:bitml:AuthCommitB},
$\pmvA$ and $\pmvB$ must perform an additional sequence of steps
to authorize the transfer of their deposits $\varX$, $\varY$ to the contract:
\begin{equation}
  \label{eq:bitml:AuthInitDep}
  \confG[3]
  \; \concrightarrow{} \;
  \confG[3]
  \mid \confAuth{\pmvA}{\authAdv{\varX}{\contrG}{\contrC}}
  \; \concrightarrow{} \;
  \confG[3]
  \mid \confAuth{\pmvA}{\authAdv{\varX}{\contrG}{\contrC}}
  \mid \confAuth{\pmvB}{\authAdv{\varY}{\contrG}{\contrC}}
  \; = \; \confG[4]
\end{equation}
where the terms $\confAuth{\pmvA}{\authAdv{x}{\contrG}{\contrC}}$ and 
$\confAuth{\pmvB}{\authAdv{y}{\contrG}{\contrC}}$ represent
the authorizations to spend $\varX$ and $\varY$ for stipulation.
In general, these steps are obtained through the following rule:
\[
\irule
{
  \forall \pmvB \text{ in } \contrG :
  \confAuth{\pmvB}{\authCommit[\circ]{\contrG}{\contrC}}
   \text{ in } \confG
  \qquad
  \text{$\persdep{\pmvA}{\valV}{\dmv} \text{ in } \contrG$}
  \qquad
  \confG \vdash \dmv = \varX}
{\contrAdv[\circ]{\contrG}{\contrC} \mid \confG
  \concrightarrow{\authLab{\pmvA}{\contrAdv[\circ]{\contrG}{\contrC},x}}
  \contrAdv[\circ]{\contrG}{\contrC} \mid \confG \mid \confAuth{\pmvA}{\authAdv[\circ]{x}{\contrG}{\contrC}}
}
\smallnrule{[C-AuthInitDep]}
\]

\noindent
The first premise requires that all participants have finalised the 
commitment phase.
While rule \nrule{[C-AuthCommit]} allows a participant to add all her
commitments to the configuration in a single step,
each application of rule~\nrule{[C-AuthInitDep]} allows one 
to authorize the spending of a single deposit.
If a deposit variable $\varphX$ was used in $\contrG$, 
the relation $\confG \vdash \varphX = \varX$ ensures that 
the configuration contains a binding $\assign{\pmvA}{\varphX}{\varX}$.
The relation $\vdash$ is defined as follows:
\[
\confG \;\vdash\; \varX = \varX \qquad 
\confG \mid \assign{\pmvA}{\varphX}{\varX} \;\vdash\; \varphX = \varX
\]

\mypar{Stipulation: initialization}

In the configuration $\confG[4]$ of~\eqref{eq:bitml:AuthInitDep} 
all the needed authorizations have been granted, 
so the advertisement can be turned into an active contract. 
This step consumes the deposits and the authorizations,
and it initializes the new contract, with a fresh name $\varZ$,
and with a balance corresponding to the sum of all the consumed deposits:
\[
\confG[4]
\; \concrightarrow{} \;
\confG \mid \confSec{\pmvA}{\secrA}{N} \mid \confContr[z]{\contrC}{2}
\]
In general, this step is obtained through the following rule:
\begin{equation*}
  \irule{
    \begin{array}{l}
      \contrG = 
      \big( \mmid_{i \in I} \persdep{\pmvA[i]}{\valV[i]}{\varX[i]} \big) \mid
      \big( \mmid_{i \in J} \persdep{\pmvB[i]}{\valVi[i]}{\varphX[i]} \big) \mid
      \big( \mmid_{i \in K} \presecret{\pmvC[i]}{\secrA[i]} \big)
      \qquad 
      \varZ \text{ fresh}
      \\[3pt]
      \confD = \big( \mmid_{i \in I} \confDep[{\varX[i]\,}]{\pmvA[i]}{\valV[i]} \big)
      \mid
      \big( \mmid_{i \in J} \confDep[{\varXi[i]\,}]{\pmvB[i]}{\valVi[i]} \big)
      \mid
      \big( \mmid_{i \in J} \assign{\pmvB[i]}{\varphX[i]}{\varXi[i]} \big)
      \mid
      \\[3pt]
      \hspace{24pt}
      \big( \mmid_{i \in I} \confAuth{\pmvA[i]}{\authAdv{\varX[i]}{\contrG}{\contrC}} \big)
      \mid
      \big( \mmid_{i \in J} \confAuth{\pmvB[i]}{\authAdv{\varXi[i]}{\contrG}{\contrC}} \big) \mid
      \big( \mmid_{\pmvA \in \contrG} \confAuth{\pmvA}{\authCommit{\contrG}{\contrC}} \big)
    \end{array}
  }
  {\contrAdv{\contrG}{\contrC} \mid \confD \mid \confG
    \concrightarrow{{\it init}(\contrG,\contrC)}
    \confContr[\varZ]{\contrC}{\sum_{i \in I} \valV[i] + \sum_{i \in J} \valVi[i]}
    \mid \confG
  }
  \smallnrule{[C-Init]}
\end{equation*}

\noindent
Note that the part \mbox{$\contrAdv{\contrG}{\contrC} \mid \confD$} 
of the configuration contains all the terms that are consumed by the step.
The following rules define the behaviour of a contract after stipulation.

\mypar{Withdraw}

Executing $\withdrawC{\pmvA}$ terminates the contract,
and transfers its balance to~$\pmvA$:
\[
\confContr[x]{\withdrawC{\pmvA}}{\valV}   
\;\concrightarrow{}\;
\confDep[y]{\pmvA}{\valV} 
\]
After the contract $\varX$ is terminated,
a fresh deposit of $\valV\BTC$ owned by $\pmvA$ is created.
The general rule is the following:
\[
\irule
{\varY \text{ fresh}}
{\confContr[\varX]{\withdrawC{\pmvA}}{\valV} \mid \confG
  \concrightarrow{{\it withdraw}(\pmvA,\valV,\varX)}
  \confContr[\varY]{\pmvA}{\valV}
  \mid \confG}
\smallnrule{[C-Withdraw]}
\]
The case where the action $\withdrawC{\pmvA}$ has an alternative branch 
is dealt with by the rule \nrule{[C-Branch]}, discussed below.

\mypar{Split}

The $\splitname$ primitive divides the contract balance in parts,
each one controlled by its own contract.
For instance:
\[
\confContr[x]
{(\splitC{\splitB{2}{\contrC[1]} \mid \splitB{3}{\contrC[2]}})}
{5} 
\;\concrightarrow{}\;
\confContr[y]{\contrC[1]}{2} \mid 
\confContr[z]{\contrC[2]}{3}
\]
After this step, the new spawned contracts $\contrC[1]$ and $\contrC[2]$
are executed concurrently.
The general rule is the following:
\[
\irule
{
  \valW = \sum_{i=1}^{k} \valW[i]
  \qquad
  \valV[i] = (\valV \cdot \valW[i])/\valW
  \qquad
  \varY[1] \cdots \varY[k] \text{ fresh}
}
{\confContr[\varX]{
    \splitC{\!\mmid_{i=1}^k (\splitB{\valW[i]}{\contrC[i]})}
  }{\valV}
  \mid \confG
  \concrightarrow{{\it split}(\varX)}
  \big( \mmid_{i=1}^{k} \confContr[{\varY[i]}]{\contrC[i]}{\valV[i]} \big) 
  \mid \confG
}
\smallnrule{[C-Split]}
\]

\noindent
Note that the weights $\valW[i]$ in the $\splitname$ do not represent
actual $\BTC$ values, but the proportion \wrt the contract balance.
For instance:
\[
\confContr[x]
{(\splitC{\splitB{2}{\contrC[1]} \mid \splitB{3}{\contrC[2]}})}
{10} 
\;\concrightarrow{}\;
\confContr[y]{\contrC[1]}{4} \mid 
\confContr[z]{\contrC[2]}{6}
\]

\mypar{Revealing secrets}

Any participant can reveal one of her secrets, using the rule:
\[
\small
\irule
{N \neq \bot}
{\confSec{\pmvA}{\secrA}{N} \mid \confG
  \concrightarrow{\authLab{\pmvA}{a}}
  \confRev{\pmvA}{\secrA}{N} \mid \confG}
\smallnrule{[C-AuthRev]}
\]

\noindent
The premise $N \neq \bot$ is needed to avoid the case where
a participant does not know the secret she has committed to.
Indeed, at the level of Bitcoin, commitments are represented as 
cryptographic hashes of bitstrings, 
and revealing a secret amounts to broadcasting a preimage, 
\ie a value whose hash is equal to the committed value.
If a participant commits to a random value, 
then with overwhelming probability she will not be able to 
provide a preimage.
The label $\authLab{\pmvA}{a}$ represents the fact that only 
$\pmvA$, the participant who performed the commitment, 
can fire the transition.

\mypar{Reveal}

The prefix $\putC[\predP]{}{\vec{\secrA}}$ can be fired if all the  
committed secrets $\vec{\secrA}$ have been revealed, 
and satisfy the guard $\predP$.
For instance, 
if $\confG = \confRev{\pmvA}{\secrA}{N} \mid \confRev{\pmvB}{\secrB}{M}$:
\[
\confContr[x]
{\putC[\secrA=\secrB]{}{\secrA\secrB}.\, \contrC}
{\valV}
\mid 
\confG
\; \concrightarrow{} \;
\confContr[y]{\contrC}{\valV}
\mid 
\confG
\tag*{if $M = N$}
\]

\noindent
The general rule is the following:
\[
\irule
{
  \vec{\secrA} = \secrA[1] \cdots \secrA[n]
  \qquad
  \confD = \mmid_{i=1}^{n} \confRev{\pmvB[i]}{\secrA[i]}{N_i}
  \qquad
  \sem[\confD]{\predP} = \true
  \qquad
  \varY \text{ fresh}
}
{\confContr[\varX]{\putC[\predP]{}{\vec{\secrA}}.\,\contrC}{\valV}
  \mid \confG \mid \confD
  \concrightarrow{{\it rev}(\vec{\secrA},\varY)}
  \confContr[\varY]{\contrC}{\valV}
  \mid \confG \mid \confD 
}
\smallnrule{[C-Rev]}
\]
where the semantics of predicates $\sem[\confD]{\predP}$
is defined by the following equations:
\[
\begin{array}{c}
  \sem[\confD]{\true} 
  = \true
  \qquad
  \sem[\confD]{\predP[1] \land \predP[2]} 
  = \sem[\confD]{\predP[1]} \text{ and } \sem[\confD]{\predP[2]}
  \qquad
  \sem[\confD]{\neg \predP} 
  = \text{not } \sem[\confD]{\predP}  
  \\[8pt]
  \sem[\confD]{\secrA} 
  = N 
  \;\;\text{if $\confD$ contains $\confRev{\pmvA}{\secrA}{N}$}
  \qquad
  \sem[\confD]{\expE[1] \bullet \expE[2]} 
  = \sem[\confD]{\expE[1]} \bullet \sem[\confD]{\expE[2]}
  \quad (\bullet \in \setenum{+,-})
  \\[8pt]
  \sem[\confD]{N} = N
  \qquad
  \sem[\confD]{\expE[1] \circ \expE[2]} 
  = \sem[\confD]{\expE[1]} \circ \sem[\confD]{\expE[2]}
  \quad (\circ \in \setenum{=,<})
\end{array}
\]

\mypar{Authorizing branches}

A branch $\authC{\pmvA}{\contrD}$ 
can be taken only provided that $\pmvA$ has granted her authorization.
This can be done through the following rule:
\[
\irule
{
}
{\confContr[\varX]{\authC{\pmvA}{\contrD} + \contrC}{\valV} \mid \confG
  \concrightarrow{\authLab{\pmvA}{(\varX,\authC{\pmvA}{\contrD})}}
  \confContr[\varX]{\authC{\pmvA}{\contrD} + \contrC}{\valV} 
  \mid \confAuth{\pmvA}{\authBranch{\varX}{\authC{\pmvA}{\contrD}}}
  \mid \confG
}
\smallnrule{[C-AuthBranch]}
\]

\mypar{Reducing branches}

Once all the authorizations for a branch occur in the configuration,
anyone can trigger the transition, provided that the time constraints
(if any) are respected.
For instance, if $\contrD = \authC{\pmvA}{\afterC{1000}{\withdrawC{\pmvB}}}$,
we have the transition:
\[
  \confContr[x]{\contrD + \contrC}
    {\valV}
    \mid
    \confAuth{\pmvA}{\authBranch{x}{\contrD}}
    \mid
    1051
    \concrightarrow{}\;
    \confDep[y]{\pmvB}{\valV}
    \mid
    1051
\]

\noindent
The general rule is the following:
\[
\irule
{
  \begin{array}{lll}
    \confContr[\varX]{\contrDi}{\valV} 
    \mid 
    \confG
    \concrightarrow{\labS} \confGi
    & \contrD = \authC{\pmvA[1]}{\authC{\cdots}{\authC{\pmvA[k]}{\afterC{\constT[1]}{\authC{\cdots}{\afterC{\constT[m]}{\contrDi}}}}}}
    & \contrDi \neq \authC{\pmvA}{\cdots}
    \\[3pt]
    \varX \not\in \cn{\confGi}
    & \constT \geq \constT[1],\ldots,\constT[m]
    & \contrDi \neq \afterC{\constTi}{\cdots}
  \end{array}
}
{
  \confContr[\varX]{\contrD + \contrC}{\valV} 
  \mid
  \mmid_{i=1}^k \confAuth{\pmvA[i]}{\authBranch{\varX}{\contrD}}
  \mid 
  \confG
  \mid 
  \constT
  \;\; \concrightarrow{\; \labS \;} \;\;
  \confGi
  \mid 
  \constT
}
\smallnrule{[C-Branch]}
\]

\mypar{Delaying}

In any configuration, we always allow time to advance:
\[ 
\irule
{\delta > 0}
{\confT{\confG}{\constT} \concrightarrow{\delta} \confT{\confG}{\constT+\delta}}
\smallnrule{[C-Delay]}
\]

\mypar{Renegotiation: advertisement}

Contract renegotiation is similar to stipulation, 
including advertisement, commitment, authorization, and contract initialization.
We illustrate these phases through a running example.
Consider a configuration:
\[
\confG[0] \; = \;
\confContr[\varX]{\adv{\callX{\constK}} + \contrC[{\textit{alt}}]}{\valV} \mid
\confG
\tag*{where $\callX{\procParamA} = \contrAdv{\contrG}{\contrC}$}
\] 
where $\contrC[{\textit{alt}}]$ contains the branches alternative to \rngtname,
\(
\contrG = 
\persdep{\pmvA}{1}{\varphX} \mid
\persdep{\pmvB}{1}{\varphY} \mid
\presecret{\pmvA}{\secrA}
\),
and $\contrC$ is an arbitrary contract involving only $\pmvA$ and $\pmvB$,
possibly containing the integer variable $\procParamA$ in static expressions.
The renegotiation is advertised as follows:
\begin{equation}
  \label{eq:bitml:renegotiation-advertisement}
  \confG[0]
  \; \concrightarrow{} \;
  \confG[0]
  \mid \contrAdv[\varX]{\contrGi}{\contrCi}
  \; = \; \confG[1]
\end{equation}
where $\contrAdv[\varX]{\contrGi}{\contrCi}$ is
obtained by transforming $\contrAdv{\contrG}{\contrC}$ as follows:
\begin{inlinelist}
\item variables $\varphX,\varphY$ are
  renamed into fresh ones $\varphXi,\varphYi$,
  and similarly the secret name $\secrA$ into $\secrAi$,
\item the static expressions in $\contrC$ are evaluated, assuming
  $\procParamA = \constK$, and replaced with their results.
\end{inlinelist}
The superscript $\varX$ in $\contrAdv[\varX]{\contrGi}{\contrCi}$ 
is used to record that,
when the renegotiation is concluded, the contract $\varX$ must be closed.
In general, this step is defined through the following rule:
\[
\irule
{
  \begin{array}{l}
    \contrAdv{\contrG}{\contrC} \equiv \callX{\vec{\sexp}}
    \\
    \text{all secrets in } \contrG \text{ fresh}
    \\
    \forall \persdep{\pmvA}{\valV}{\varphX} \text { in } \contrG :
    \varphX \text{ fresh }
  \end{array}
  \quad
  \begin{array}{l}
    \forall \contrAdv[\varX]{\contrGi}{\contrCi} \text{ in } \confG :
    \contrAdv{\contrGi}{\contrCi} \not\equiv \callX{\vec{\sexp}}
    \\[5pt]
    \forall \persdep{\pmvA[i]}{\valV[i]}{\varX[i]}
    \text{ in } \contrG :
    \confDep[{\varX[i]}]{\pmvA[i]}{\valV[i]} \text{ in } \confG
  \end{array}
}
{\begin{array}{l}
   \confContr[\varX]{\adv{\callX{\vec{\sexp}}} + 
   \contrC[{\textit{alt}}]}{\valV}
   \mid \confG
   \concrightarrow{{\it advRngt}(\contrAdv[\varX]{\contrG}{\contrC})}
   \\[3pt]
   \confContr[\varX]{\adv{\callX{\vec{\sexp}}} + \contrC[{\textit{alt}}]}{\valV} \mid
   \confG \mid
   \contrAdv[\varX]{\contrG}{\contrC} 
 \end{array}
}
\smallnrule{[C-AdvRngt]}
\]

\medskip
The relation $\contrAdv{\contrG}{\contrC} \equiv \callX{\vec{\sexp}}$ 
used in the rule premise holds when, for some $\contrGi$, $\contrCi$,
there exists a defining equation
$\decl{\cVarX}{\vec{\procParamA}} = \contrAdv{\contrGi}{\contrCi}$
such that $\contrAdv{\contrG}{\contrC}$ is the transformation of
$\contrAdv{\contrGi}{\contrCi}$ obtained by instantiating the formal parameters
with the actual ones, and by \mbox{$\alpha$-converting} the secret names 
and deposit variables, 
as done \eg in Equation~\eqref{eq:bitml:renegotiation-advertisement}.
Before formalising this transformation below
in Equation~\eqref{eq:bitml:renegotiation-advertisement:alpha},
we need a few auxiliary notions.
We denote with $\sem{\sexp}$ the evaluation of a 
closed static expression $\sexp$ 
(the actual definition is standard, so we omit it).
We overload $\sem{-}$ to contracts and contract advertisements: namely,
$\sem{\contrC}$ is the contract obtained by substituting all the occurring 
static expressions with their valuation, and 
$\sem{\contrAdv{\contrG}{\contrC}} = \contrAdv{\contrG}{\sem{\contrC}}$. 
Then, we denote with 
$\contrAdv{\contrG}{\contrC} \equiv_{\alpha} \contrAdv{\contrGi}{\contrCi}$
the $\alpha$-equivalence between two contract advertisements
\wrt secret names and deposit variables.
Finally, we define: 
\begin{equation}
  \label{eq:bitml:renegotiation-advertisement:alpha}
  \contrAdv{\contrG}{\contrC} \equiv \callX{\vec{\sexp}}
  \iff
  \exists \contrGi,\contrCi :
  \decl{\cVarX}{\vec{\procParamA}} = \contrAdv{\contrGi}{\contrCi}
  \text{ and }
  \sem{\contrAdv{\contrGi}{\contrCi}
    \subst{\vec{\sem{\sexp}}}{\vec{\procParamA}}} 
  \equiv_{\alpha} \contrAdv{\contrG}{\contrC}
\end{equation}

\mypar{Renegotiation: commitment}

In the subsequent steps participants choose the actual deposit names, 
and $\pmvA$ commits to her secret.
If $\pmvA$ owns in $\confG$ a deposit $\confDep[\varY]{\pmvA}{1}$,
she can choose $\varphXi = \varY$ to satisfy the precondition $\contrG$.
Similarly, $\pmvB$ can choose $\varphYi = \varZ$ if he owns such a deposit
in $\confG$.
These choices are performed as follows:
\begin{align*}
  \confG[1] 
  \; \concrightarrow{} \;
  \confG[1] 
  & \mid 
    \assign{\pmvA}{\varphXi}{\varY} \mid 
    \confSec{\pmvA}{\secrAi}{N} \mid \confAuth{\pmvA}{\authCommit[\varX]{\contrGi}
    {\contrCi}}
  && = \confG[2]
  \\ 
  \confG[2] 
  \; \concrightarrow{} \;
  \confG[2] 
  & \mid \assign{\pmvB}{\varphYi}{\varZ} \mid \confAuth{\pmvB}
    {\authCommit[\varX]{\contrGi}{\contrCi}}
  && = \confG[3]
\end{align*}

The general case is defined by the rule \nrule{[C-AuthCommit]}
described before.

\mypar{Renegotiation: authorization}

At this point, participants must authorise to spend their deposits
and the balance of the contract at $\varX$.
This is done through a series of steps:
\begin{align*}
  \confG[3] 
  & \concrightarrow{}
  \confG[3] 
  \mid \confAuth{\pmvA}{\authAdv[\varX]{\varY}{\contrGi}{\contrCi}}
  \\
  & \concrightarrow{}
  \confG[3] 
  \mid \confAuth{\pmvA}{\authAdv[\varX]{\varY}{\contrGi}{\contrCi}}
  \mid \confAuth{\pmvA}{\authAdv[\varX]{\varX}{\contrGi}{\contrCi}}
  && = \confG[4]
  \\ 
  & \concrightarrow{}
    \confG[4]
    \mid \confAuth{\pmvB}{\authAdv[\varX]{\varZ}{\contrGi}{\contrCi}} 
  \\ 
  & \concrightarrow{}
    \confG[4]
    \mid \confAuth{\pmvB}{\authAdv[\varX]{\varZ}{\contrGi}{\contrCi}} 
    \mid \confAuth{\pmvB}{\authAdv[\varX]{\varX}{\contrGi}{\contrCi}}
  && = \confG[5]
\end{align*}

\noindent
The general case is defined by the rule \nrule{[C-AuthInitDep]}
described above.

\mypar{Renegotiation: initialization}

Finally, the renegotiated contract is stipulated. 
This amounts to closing the old contract $\varX$, 
consuming the deposits $\varY$ and $\varZ$,
and transferring the balance $\valV$ of the old contract to the new one,
which is given a fresh name $\varXi$:
\[
\confG[5] 
\; \concrightarrow{} \;
\confContr[\varXi]{\contrCi}{\valV + 2} 
\mid \confGi
\]
where $\confGi$ is $\confG$ without the deposits $\varY$ and $\varZ$.
Note that the branches in $\contrC[{\textit{alt}}]$
are discarded only in the last step above, where we complete the renegotiation.
Before this step, it would have been possible to take one of the
branches in $\contrC[{\textit{alt}}]$, aborting the renegotiation.

The initialization step is defined by the following rule,
which is analogous to \nrule{[C-Init]}:
\[
\small
\irule{
  \begin{array}{l}
    \contrAdv{\contrG}{\contrC} \equiv \callX{\vec{\sexp}}
    \qquad
    \varY \text{ fresh}
    \\[3pt]
    \contrG = 
    \big( \mmid_{i \in I} \persdep{\pmvA[i]}{\valV[i]}{\varX[i]} \big) \mid
    \big( \mmid_{i \in J} \persdep{\pmvB[i]}{\valVi[i]}{\varphX[i]} \big) \mid
    \big( \mmid_{i \in K} \presecret{\pmvC[i]}{\secrA[i]} \big)
    \\[3pt]
    \confD = \big( \mmid_{i \in I} \confDep[{\varX[i]}]{\pmvA[i]}{\valV[i]} \big)
    \mid
    \big( \mmid_{i \in J} \confDep[{\varXi[i]}]{\pmvB[i]}{\valVi[i]} \big)
    \mid
    \big( \mmid_{i \in J} \assign{\pmvB[i]}{\varphX[i]}{\varXi[i]} \big)
    \mid
    \\[3pt]
    \hspace{22pt}
    \big( \mmid_{i \in I} \confAuth{\pmvA[i]}{\authAdv[\varX]{\varX[i]}{\contrG}{\contrC}} \big)
    \mid
    \big( \mmid_{i \in J} \confAuth{\pmvB[i]}{\authAdv[\varX]{\varXi[i]}{\contrG}{\contrC}} \big)
    \mid
    \\[3pt]
    \hspace{22pt}
    \big( \mmid_{\pmvA \in \contrG} \confAuth{\pmvA}{\authCommit[\varX]{\contrG}{\contrC}} \mid \confAuth{\pmvA}{\authAdv[\varX]{x}{\contrG}{\contrC}} \big)
  \end{array}
}
{\begin{array}{l}
   \confContr[\varX]{\adv{\callX{\vec{\sexp}}} + \contrC[{\textit{alt}}]}{\valV} \mid \contrAdv[\varX]{\contrG}{\contrC} \mid \confG \mid \confD
   \concrightarrow{{\it rngt}(\varX,\contrG,\contrC)}
   \\[3pt]
   \confContr[\varY]{\contrC}{\valV + \sum_{i \in I} \valV[i] + \sum_{i \in J} \valVi[i]}
   \mid \confG
 \end{array}
}
\smallnrule{[C-Rngt]}
\]

\medskip
The main difference between \nrule{[C-Init]} and \nrule{[C-Rngt]} 
is that the latter transfers the balance $\valV$ of the old contract 
to the new one.

\paragraph{Executing BitML on Bitcoin}

Stipulating or renegotiating a BitML contract $\contrC$ in Bitcoin requires each participant to invoke the BitML compiler, which was first introduced in~\cite{BZ18bitml}
and then extended in~\cite{BMZ20coordination} with renegotiation and recursion.
In particular, each participant has to 
\begin{inlinelist}
\item generate a key pair for each subcontract of $\contrC$,
\item exchange the generated public keys with the other participants,
\item sign each subcontract of $\contrC$, and
\item exchange all the signatures.
\end{inlinelist}
The whole protocol is detailed in~\cite{BZ18arxiv} (Definition 21).
When dealing with a contract $\contrC$ among $N$ participants and having $M$ subcontracts,
such protocol requires $O(N M)$ broadcasts.
After stipulation/renegotiation, executing a step of the BitML semantics corresponds to 
appending a transaction to the Bitcoin blockchain.

\section{Liquidity}
\label{sec:liquidity}

In this~\namecref{sec:liquidity} we formalise 
a notion of contract liquidity.
Aiming at generality, we parameterise this notion over: 
\begin{itemize}
\item an LTS $\rightarrow$, which models the contract behaviour;
\item a subset $\mathcal{L}_{\pmvA}$ of the labels of the LTS $\rightarrow$, 
  which represents the moves that can be performed by an \emph{honest} 
  participant $\pmvA$, without requiring the cooperation of the other participants.
\end{itemize}

Once these parameters are fixed, we define when,
in a configuration $\confG$ of the LTS,
a set $\VarX$ of contracts in $\confG$ is liquid.
Roughly, this happens when the honest participant $\pmvA$
can always make the funds stored by the contracts $\VarX$ 
be transferred to some participant. 
In the meanwhile, the other participants may play against her, 
\eg by not revealing some secrets, 
or by not granting their authorizations for some branch.
Note that by suitably instantiating the parameters $\rightarrow$
and $\mathcal{L}_{\pmvA}$, we will be able to use 
the same notion of liquidity
both with the concrete and with the abstract BitML semantics.
For instance, for the concrete semantics we choose
$\rightarrow \, = \, \concrightarrow{}$,
and as $\mathcal{L}_{\pmvA}$ all the labels except those of the form
$\pmvB:\ell$ with $\pmvB \neq \pmvA$.

We start by introducing an auxiliary partial function 
$\origin[\confG]{\runS}{\varX}$ that,
given a contract name $\varX$
and a run $\runS$ starting from $\confG$,
determines the (unique) ancestor $\varY$ of $\varX$ in $\confG$, if any.
Intuitively, $\origin[\confG]{\runS}{\varX} = \varY$ means that 
$\varY$ has evolved along the run $\runS$, 
eventually leading to $\varX$, and possibly to other contracts.
In BitML, this happens in one of the following cases.
First, a $\splitC{}{}$ can spawn new contracts, \eg:
\[
\confContr[\varX]
{\splitC{(\splitB{\valV[1]}{\contrC[1]} \mid \splitB{\valV[2]}{\contrC[2]})}}
{\valV[1]+\valV[2]} 
\concrightarrow{{\it split}(\varX)}
\confContr[{\varY[1]}]{\contrC[1]}{\valV[1]} \mid 
\confContr[{\varY[2]}]{\contrC[2]}{\valV[2]}
\]
Here, both $\varY[1]$ and $\varY[2]$ have $\varX$ as ancestor.
Second, $\revealname$ reduces as follows:
\[
\confContr[\varX]{\putC{}{\secrA} . \, \contrC}{\valV} \mid \cdots
\;\concrightarrow{{\it reveal}(\secrA,\varX)}\;
\confContr[\varY]{\contrC}{\valV} \mid \cdots
\]
In this case, the ancestor of $\varY$ is $\varX$.
Third, a $\rngtname$ evolves as follows:
\[
\confContr[\varX]{\adv{\callX{\vec{\sexp}}}}{\valV} \mid \cdots
\concrightarrow{{\it init}(\varX,\contrG,\contrC)}
\confContr[\varY]{\contrC}{\valV + \cdots}
\mid \cdots
\]
Also in this case, the ancestor of $\varY$ is $\varX$.

\begin{defi}[Origin]
  \label{def:origin}
  Let $\runS$ be a run starting from $\confG$,
  and let $\varX$ be a contract name.
  We define the partial function $\origin[\confG]{\runS}{\varX}$ 
  by induction on the length of $\runS$ in~\Cref{fig:origin}.
\end{defi}

Note that, in~\Cref{fig:origin},
the condition $\varX \in \cn{\confGi} \setminus \cn{\confS{\runS}}$
checks that the name $\varX$ has been introduced in the last transition
of the run.
Dually, $\setenum{\varY} = \cn{\confS{\runS}} \setminus \cn{\confGi}$
checks that $\varY$ has been consumed.
The use of a singleton $\setenum{\varY}$ is justified by the fact that in BitML
each transition can consume at most one contract name.

\begin{figure}[t]
  \small
  \[
  \begin{array}{rcl}
    \origin[\confG]{\confG}{\varX} 
    & = & x \quad \text{if $\varX \in \cn{\confG}$}
    \\[4pt]
    \origin[\confG]{\runS \xrightarrow{\labS} \confGi}{\varX} 
    & = & \begin{cases}
      \origin[\confG]{\runS}{\varX} 
      & \text{if $\,\varX \in \cn{\confS{\runS}} \cap \cn{\confGi}$}
      \\
      \origin[\confG]{\runS}{\varY} 
      & \text{if $\varX \in \cn{\confGi} \setminus \cn{\confS{\runS}}$ and
        $\setenum{\varY} = \cn{\confS{\runS}} \setminus \cn{\confGi}$}
    \end{cases}
  \end{array}
  \]
  \caption{Origin of a contract name within a run.}
  \label{fig:origin}
\end{figure}

\begin{exa}
  \label{ex:origin}
  Let $\confG = \confContr[y]{\contrC[1]}{v} \mid \confDep[{\varZ}]{\pmvA}{\valV}$,
  and let $\runS$ be the following run starting from $\confG$,
  where the contracts $\contrC[1]$ and $\contrC[2]$ are immaterial,
  but for the fact that they enable the displayed moves:
  \begin{align*}
    \confContr[y]{\contrC[1]}{\valV} \mid \confDep[{\varZ}]{\pmvA}{\valV}
    & \concrightarrow{} 
      \confContr[y]{\contrC[1]}{\valV} \mid \confDep[{\varZ}]{\pmvA}{\valV} \mid \contrAdv{\contrG}{\contrC[2]} 
      \concrightarrow{}^*
      \confContr[y]{\contrC[1]}{\valV} \mid \confContr[x]{\contrC[2]}{\valV} 
    \\
    & \concrightarrow{{\it split}(x)} 
      \confContr[y]{\contrC[1]}{\valV} \mid \confContr[x']{\contrCi[2]}{\valV}
    \\
    & \concrightarrow{{\it split}(y)} 
      \confContr[y']{\contrCi[1]}{\valVi} \mid \confContr[y'']{\contrCii[1]}{\valV-\valVi} \mid
      \confContr[x']{\contrCi[2]}{\valV}
  \end{align*}
  We have that
  $\origin[\confG]{\runS}{y'} = \origin[\confG]{\runS}{y''} = y$,
  since the corresponding contracts have been obtained through a split of
  the ancestor $y$, which was in $\confG$.
  Instead, $\origin[\confG]{\runS}{x'}$ is undefined,
  because its ancestor $x$ is not in $\confG$.
  Further, $\origin[\confG]{\runS}{y} = y$, 
  while $\origin[\confG]{\runS}{x}$ is undefined.
\end{exa}

Before formalising liquidity, we give some further intuition.
Assume that $\pmvA$ is an honest participant, 
who cares about the liquidity of a set of contracts $\VarX$ in $\confG$.
After an arbitrary sequence of transitions 
$\confG \rightarrow \cdots \rightarrow \confGi$,
where any participant may perform actions,
$\pmvA$ wants to liquidate all the contracts in $\confGi$
originating from $\VarX$, transferring their funds to 
participants' deposits.
We want $\pmvA$ to be able to liquidate contracts \emph{without} 
the help of the other participants.

\begin{defi}[Liquidity]
  \label{def:liquid}
  Let $\rightarrow$ be an LTS, let $\mathcal{L}_{\pmvA}$ be a subset of 
  its labels, let $\confG$ be a configuration of the LTS, 
  and let $\varX$ be a contract name.
  We say that:
  \begin{itemize}

  \item $\varX$ is \emph{liquidable} in $\confG$ if there exists a run
    \(
    \runS \; = \; \confG \xrightarrow{\ell_1} \cdots \xrightarrow{\ell_n} \confGi
    \)
    such that:
    \begin{enumerate}
    \item for all $i \in 1..n$, $\ell_i \in \mathcal{L}_{\pmvA}$;
    \item \label{item:liquidability:cn}
      there exists no $\varZ \in \cn{\confGi}$ such that
      $\origin[\confG]{\runS}{\varZ} = \varX$.
    \end{enumerate}

  \item $\varX$ is \emph{liquid} in $\confG$
    if, for all runs $\runS = \confG \xrightarrow{} \cdots \xrightarrow{} \confGi$,
    all the contract names $\varY$
    with $\origin[\confG]{\runS}{\varY} = \varX$
    are liquidable in $\confGi$.

  \end{itemize}
  We extend this to sets $\VarX$ of contracts names:
  $\VarX$ is liquid in $\confG$ iff all $\varX \in \VarX$ are liquid in $\confG$.
\end{defi}

Intuitively, a contract $\varX$ is liquidable by $\pmvA$
when $\pmvA$ can perform a sequence of transitions which
eventually lead to a configuration containing no contract names 
originated from $\varX$.
Consequently, all the funds in the contract $\varX$ have been 
transferred to participants' deposits.
When liquidating $\varX$, 
$\pmvA$'s moves can not reveal secrets of other participants, 
or generate authorizations for them:
$\pmvA$ must be able to unfreeze the funds on her own,
performing actions $\ell_i \in \mathcal{L}_{\pmvA}$.
Note that if $\varX \not\in \cn{\confG}$, then 
$\varX$ is trivially liquidable in $\confG$.

The notion of liquidity is based upon that of liquidability.
A contract $\varX$ is liquid in a configuration $\confG$
when, after an arbitrary sequence of moves
performed by \emph{any} participant,
the contract names originated by $\varX$ are liquidable by $\pmvA$.

We remark that, although~Definition~\ref{def:liquid} 
will be instantiated with the semantics of BitML,
the basic concepts it relies upon 
(runs, origin of contracts, moves of a participant)
are quite general. 
Hence, our notion of liquidity, 
as well as the variants proposed in Section~\ref{sec:variants-liquidity}, 
can be applied to other languages for smart contracts,
using their transition semantics.

\begin{exa}
  Recall the timed commitment contract $\TCommitment$ from Section~\ref{sec:intro}:
  \[
  \contrAdv
  {\persdep{\pmvA}{1}{\varZ} \mid \presecret{\pmvA}{\secrA}}
  {\;
    ( 
    \putC{}{\secrA} . \, \withdrawC{\pmvA}
    \;\; + \;\;
    \afterC{\constT}{\withdrawC{\pmvB}}
    )
  }
  \]
  Let
  \(
  \confG = 
  \confContr[x]{\TCommitment}{1} \mid \confSec{\pmvA}{\secrA}{N} 
  \)
  be a configuration where the contract has been stipulated.
  We show that $\varX$ is liquid in $\confG$ with respect to any participant.
  In the configurations reachable from $\confG$,
  the contract $\TCommitment$ has not progressed,
  or it has reduced to $\withdrawC{\pmvA}$.
  In the first case, $\varX$ is liquidable by anyone, 
  by firing $\withdrawC{\pmvB}$ after time $\constT$ 
  (alternatively, $\pmvA$ may reveal the secret and 
  then fire $\withdrawC{\pmvA}$).
  Instead, if the contract has reduced to $\withdrawC{\pmvA}$,
  anyone can liquidate it.
  Since all the descendants of $\varX$ are liquidable, $\varX$ is liquid.
\end{exa}

\begin{exa}
  Let $\contrG = \persdep{\pmvA}{1}{\varY} \mid \persdep{\pmvB}{1}{\varZ} \mid
  \presecret{\pmvA}{\secrA}$.
  Consider the following contracts, 
  where $\predP$ is an arbitrary predicate on $\secrA$:
  \begin{align*}
    \contrC[1] 
    & = \putC[\predP]{}{\secrA} . \, \withdrawC{\pmvA}
      + \putC[\neg \predP]{}{\secrA} . \, \withdrawC{\pmvB}
    \\
    \contrC[2]
    & = \splitC{\;
      \big(
      \splitB{1}{\putC[\predP]{}{\secrA} . \, \withdrawC{\pmvA}}}
      \;\mid\;
      \splitB{1}{\putC[\neg \predP]{}{\secrA} . \, \withdrawC{\pmvB}} 
      \big)
    \\
    \contrC[3]
    & = \withdrawC{\pmvA} + \authC{\pmvB}{\splitC{\big( \splitB{1}{\authC{\pmvB}{\withdrawC{\pmvA}}} \mid \splitB{1}{\withdrawC{\pmvB}}} \big)}
  \end{align*}
  For $i \in \setenum{1,2,3}$, let 
  $\confG[i] = \confContr[x]{\contrC[i]}{2} \mid \confSec{\pmvA}{\secrA}{N}$, 
  with $N \neq \bot$.  
  We have that:
  \begin{itemize}
  \item $\varX$ is liquid in $\confG[1]$ for $\pmvA$,
    but not for any other participant.
    $\contrC[1]$ has three reducts: $\contrC[1]$ itself,
    $\withdrawC{\pmvA}$, and $\withdrawC{\pmvB}$.
    The last two contracts are trivally liquidable by anyone.
    Instead $\contrC[1]$ is liquidable only by $\pmvA$,
    by revealing $\secrA$ and firing a \revealname\, branch
    (since their guards are $\predP$ and $\neg \predP$, one of them 
    will be enabled),
    and finally firing the corresponding $\withdrawC{}$.
    Instead, from $\pmvB$'s viewpoint $\varX$ is not liquid,
    because $\pmvA$ could refuse to reveal.

  \item $\varX$ is not liquid in $\confG[2]$ for anyone.
    Indeed, none of the reducts of $\contrC[2]$ is liquidable,
    because one of the two $\revealname$ branches is stuck.

  \item $\varX$ is liquid in $\confG[3]$ for $\pmvB$, but not for $\pmvA$.
    Indeed, if $\pmvB$ authorizes and performs the $\splitname$,
    then the reduct is not liquidable by $\pmvA$, since $\pmvB$
    could deny his second authorization.
  \end{itemize}
\end{exa}

Note that liquidability in Definition~\ref{def:liquid} 
requires that the moves $\labS[1] \ldots \labS[n]$ are performed atomically,
effectively forbidding the adversary to nterfere.
Atomicity might be realistic in some blockchains, but not in others. 
For instance, 
Algorand features atomic sequences of transactions natively~\cite{BB+21fc},
while in Ethereum it is possible to perform atomically
a sequence $\labS[1] \ldots \labS[n]$ of calls  
by deploying a new contract with a suitable function which 
calls $\labS[1] \ldots \labS[n]$ in sequence.
Bitcoin, instead, does not support atomic sequences of transactions: 
an honest participant could start to perform the sequence, 
but at some point in the middle the adversary can interfere. 
Repeated interference could lead to an infinite run, 
where each attempt by the honest participant is hindered by the adversary.
To illustrate the issue, consider a (not-BitML) LTS 
with states 0, 1, 2 and transitions
$0 \xrightarrow{s} 1$,
$1 \xrightarrow{s} 2$,
and $1 \xrightarrow{p} 0$,
where 0 is the initial state and 2 is the final state, where the funds have been liquidated.
If atomic sequences of moves were allowed, 
an honest participant could always reach the final state by firing the atomic sequence $s \ s$.
Otherwise, an adversary could always prevent the participant from reaching the final state,
by firing $p$ after each $s$.
Hence, this LTS would be considered liquid only by assuming atomic sequences of moves.

However, infinite adversary interference as those shown above
are not reproducible in BitML, for the following reason.
Our notion of liquidity requires that any descendent 
of the contract must be liquidable by $\pmvA$ \emph{alone}. 
This means that $\pmvA$ can do that by performing a sequence of moves
which do not include any renegotiation, since renegotiations 
can be finalized only with the cooperation of all the other participants.
Without renegotiation, the contract eventually terminates,
so an adversary can interfere at most a finite number of times.
After the last interference, $\pmvA$ has still a way of 
terminating the contract, and she can do this alone,
satisfying item~\eqref{item:liquidability:cn} of Definition~\ref{def:liquid}. 

\section{Case studies}
\label{sec:examples}
\label{sec:cfg}

In this~\namecref{sec:examples} we illustrate BitML and liquidity 
through a few example contracts. 
The automatic verification of these contracts will be discussed later on
in Section~\ref{sec:toolchain}.

\paragraph{A fair lottery}

\newcommand{\Lottery}[1][]{\ensuremath{\contrFmt{\it Lottery}\ifempty{#1}{}{(#1)}}\xspace}
\newcommand{\Win}[1][]{\ensuremath{\contrFmt{\it Win}_{\ifempty{#1}{}{\contrFmt{#1}}}}\xspace}

Consider a lottery between two players.
The preconditions require $\pmvA$ and $\pmvB$ to 
commit to one secret each ($\secrA$ and $\secrB$, respectively),
and to put a deposit of $3 \BTC$ each
($1\BTC$ as a bet, and $2\BTC$ as a penalty for dishonest behaviour):
\begin{align*}
  & \Lottery[\Win]
    = 
    \splitC{}  \big(
  \\[-2pt]
  & \hspace{25pt}
    \splitB{2}{}
    ( \putC[0 \leq \, \secrB \leq 1]{}{\secrB} . \; \withdrawC{\pmvB} )
    \; + \; 
    ( \afterC{\constT}{\withdrawC{\pmvA}} )
  \\[-2pt] 
  & 
    \hspace{16pt}
    \mid 
    \splitB{2}{}
    ( \putC{}{\secrA} . \; \withdrawC{\pmvA} )
    \; + \; 
    ( \afterC{\constT}{\withdrawC{\pmvB}} )
  \\[-2pt]
  & \hspace{16pt}
    \mid  
    \splitB{2}{\Win}
    \big)
  \\
  & \Win = \putC[\secrA=\secrB]{}{\secrA \, \secrB} . \; \withdrawC{\pmvA}
    \; + \;
    \putC[\secrA \neq \secrB]{}{\secrA \, \secrB} . \; \withdrawC{\pmvB}
\end{align*}
The contract splits the balance in three parts, of $2\BTC$ each.
The first part allows $\pmvB$ to reveal $\secrB$ and then redeem $2\BTC$;
otherwise, after the deadline $\pmvA$ can redeem $\pmvB$'s penalty
(as in the timed commitment).
Similarly, the second part allows $\pmvA$ to redeem $2\BTC$
by revealing $\secrA$.
To determine the winner we compare the secrets, in the subcontract $\Win$:
$\pmvA$ wins if the secrets are the same, 
otherwise $\pmvB$ wins.
This lottery is \emph{fair}, since:
\begin{inlinelist}
\item if both players are honest,
  then they will reveal their secrets within the deadline 
  (redeeming $2\BTC$ each),
  and then they will have a $1/2$ probability of winning%
  \footnote{Note that $\pmvB$ could increase his probability 
    to win the lottery by choosing a secret $\secrB>1$,
    since doing so would increase the chances that $\secrA \neq \secrB$. 
    We require that $0 \leq \secrB \leq 1$, so that
    if $\pmvB$ chooses a secret outside that range
    he will lose his $2\BTC$ deposit
    in the first part of $\splitname$, and so 
    $\pmvB$'s \emph{average} payoff would be negative.
    Instead, $\pmvA$ can only decrease decrease her probability to win
    by choosing a secret $\secrA \not\in \setenum{0,1}$.
    For this reason, it is not necessary to require that $0 \leq \secrA \leq 1$.};
\item if a player is dishonest, not revealing the secret,
  then the other player has a positive payoff,
  since she can redeem $4\BTC$.
\end{inlinelist}

Although fair, the lottery is \emph{not} liquid,
neither from $\pmvA$'s nor from $\pmvB$'s point of view,
because if one of the two players does not reveal her secret, 
then the $\Win$ subcontract is not liquidable,
and so the $2\BTC$ stored therein are frozen.
We can recover liquidity by replacing $\Win$ with the following contract,
where $\constTi > \constT$:
\begin{align*}
  & \Win[2] 
    \; = \; \Win 
    + (\afterC{\constTi}{\putC{}{\secrA} . \; \withdrawC{\pmvA}})
    + (\afterC{\constTi}{\putC{}{\secrB} . \; \withdrawC{\pmvB}})
\end{align*}

\noindent
Now, even if one of the two players does not reveal, 
the honest player fire her \revealname\; at time $\constTi$,
liquidating the $2\BTC$ stored in $\Win[2]$.

\paragraph{Zero-coupon bonds}

A zero-coupon bond~\cite{PeytonJones00icfp} is a financial contract
where an investor $\pmvA$ pays $1 \BTC$ upfront to a bank $\pmvB$,
and receives back $2 \BTC$ after a maturity date (say, year 2030).
We can express this contract in BitML as follows.
The contract precondition requires $\pmvA$ to provide a deposit $\varX$
of $1 \BTC$, and $\pmvB$ to provide a deposit of $2 \BTC$.
The contract is as follows:
\begin{align*}
  \ZCB 
  & \; = \;
    \splitname \; \big( 
    \splitB{1}{\withdrawC{\pmvB}}
    \; \mid \;
    \splitB{2}{\afterC{\text{2030}}{\withdrawC{\pmvA}}}
    \big)
\end{align*}

Upon stipulation, all the deposits required in the preconditions
pass under the control of $\ZCB$, 
and can no longer be spent by $\pmvA$ and $\pmvB$.
The contract splits these funds in two parts:
$1\BTC$, that can be withdrawn by $\pmvB$ at any moment,
and $2\BTC$, that can be withdrawn by $\pmvA$ after the maturity date.

Although $\ZCB$ correctly implements the functionality of zero-coupon bounds, 
it is quite impractical:
for the whole period from the stipulation to the maturity date,
$2\BTC$ are frozen within the contract, and cannot be used by the bank
in any way.
Although this is a desirable feature for the investor,
since it guarantees that he will receive $2\BTC$ even if the bank bankrupts,
it is quite undesirable for the bank.
In the real world, the bank would be free to use its own funds,
together with those of investors, 
to make further financial transactions through which to repay the investments.
The risk that the bank bankrupts is mitigated
by external mechanisms, like insurances or government intervention.

To overcome this issue, we can exploit renegotiation.
We first revise the precondition, which now requires only $\pmvA$'s deposit. 
The revised contract is:
\begin{align*}
  \ZCBi
  & \; = \; \splitname \; \big( 
    \splitB{1}{\withdrawC{\pmvB}}
    \; \mid \;
    \splitB{0}{\adv{\callX{}}}
    \big)
  \\
  \callX{} 
  & \;= \; \contrAdv{\persdep{\pmvB}{2}{\varphX}}{\;\afterC{\text{2030}}{\withdrawC{\pmvA}}}
\end{align*}

As before, the bank can withdraw $1\BTC$ at any moment after stipulation.
In the second part of the $\splitname$, 
the participants renegotiate the contract:
if they both agree, $0\BTC$ pass under the control of the contract $\callX{}$.
The precondition of $\callX{}$ requires the bank 
to provide $2\BTC$ in a fresh deposit;
upon renegotiation, $\pmvA$ can withdraw $2\BTC$ after the maturity date.
The crucial difference with $\ZCB$ is that the deposit variable $\varphX$
is instantiated at \emph{renegotiation} time, unlike $\varX$,
which must be fixed at \emph{stipulation} time.

The revised contract $\ZCBi$ solves the problem of $\ZCB$, 
in that it no longer freezes $2\BTC$ for the whole duration of the bond:
the bank could choose to renegotiate the contract,
paying $2\BTC$, just before the maturity date.
This flexibility comes at a cost, since $\pmvA$ loses
the guarantee to eventually receive $2\BTC$.
To address this issue we need to add, as in the real world,
an external mechanism.
More specifically, we assume an insurance company $\pmv{I}$ that,
for an annual premium of $p \BTC$ paid by the bank, 
covers a face amount of $f \BTC$ (with $2 > f > 10 p$):
\[
\persdep{\pmvA}{1}{\varX[1]} 
\mid
\persdep{\pmvB}{p}{\varX[2]}
\mid
\persdep{\pmv{I}}{f}{\varX[3]}
\]
We revise the bond contract as follows:
\begin{align*}
  \ZCBii
  & \; = \; \splitname \; \big( 
    \splitB{1}{\withdrawC{\pmvB}}
  \\
  & \hspace{48pt} \mid\hspace{1pt}
    \splitB{p}{\withdrawC{\pmv{I}}}
  \\
  & \hspace{48pt} \mid
    \splitB{f}{\adv{\callX{1}} + \afterC{2021}{\withdrawC{\pmvA}}}
    \big)
  \\
  \callX{n \in 1..9} 
  & \;= \; \contrAdv{\persdep{\pmvB}{p}{\varphX}}{}
  \\
  & \hspace{22pt} \splitname \; \big( \splitB{p}{\withdrawC{\pmv{I}}}
  \\
  & \hspace{48pt} \mid
  \splitB{f}{\adv{\callX{n+1}} + \afterC{(2021+n)}{\withdrawC{\pmvA}}}
  \big)
  \\
  \callX{10} 
  & \;= \; \contrAdv{\persdep{\pmvB}{2}{\varphX}}{}
  \\
  & \hspace{22pt} \splitname \; \big( \splitB{f}{\withdrawC{\pmv{I}}}
  \\
  & \hspace{48pt} \mid
    \splitB{2}{\afterC{\text{2030}}{\withdrawC{\pmvA}}}
    \big)
\end{align*}

The contract starts by transferring $1\BTC$ to the bank,
and the first year of the premium to the insurer.
The remaining $f\BTC$ are transferred to the renegotiated contract
$\callX{1}$, or, if the renegotiation is not completed by 2021,
to the investor.

The contracts $\callX{n}$, for $n \in 1..9$, 
allow the insurer to receive the annual premium until 2030:
if the bank does not renegotiate the contract for the following year
(paying the corresponding premium), then
the investor can redeem the face amount of $f \BTC$.
Finally, the contract $\callX{10}$ can be triggered if the bank deposits
the $2\BTC$: when this happens, the face amount is given back to the insurer,
and the investor can redeem $2\BTC$ after the maturity date.
 
Compared to $\ZCBi$, the contract $\ZCBii$ offers more protection 
to the investor.
To see why, we must evaluate $\pmvA$'s payoff for all the possible 
behaviours of the other participants.
If $\pmvB$ and $\pmv{I}$ are both honest,
then $\pmvA$ will redeem $2 \BTC$, as in the ideal contract $\ZCB$.
Instead,
if either $\pmvB$ or $\pmv{I}$ do not accept to renegotiate some $\callX{n}$, 
then $\pmvA$ can redeem $f \BTC$ as a partial compensation
(unlike in $\ZCBi$, where $\pmvA$ just loses $1\BTC$).
In the real world, $\pmvA$ could use this compensation
to cover the legal fee to sue the bank in court;
also, $\pmv{I}$ could \eg increase the premium for future interactions 
with $\pmvB$.
By further refining the contract, 
we could model these real-world mechanisms as oracles, 
which sanction dishonest participants
according to the evidence collected in the blockchain
and in messages broadcast by participants.
For instance, if $\pmvB$ and $\pmv{I}$ accept the renegotiation $\callX{n}$
but $\pmvA$ does not, 
then the oracle would be able to detect $\pmvA$'s dishonesty
by inspecting the authorizations broadcast in year $2021+n$.
The sanction could consist \eg in preventing $\pmvA$ from buying other bonds from $\pmvB$.

Despite the apparent complexity, the contract is liquid for any participant:
indeed, any potentially blocking renegotiation has an alternative
timeout branch, which allows to liquidate the contract.

\paragraph{A fair recursive coin flipping game}

\newcommand{\CFG}{\ensuremath{\contrFmt{\it CFG}}\xspace}
\newcommand{\Split}[1][]{\ensuremath{\contrFmt{{\it Split}_{#1}}}\xspace}
\newcommand{\CFGindent}{\hspace{40pt}}
\newcommand{\CFGindentNP}{\hspace{30pt}}
\newcommand{\CFGtab}{\hspace{10pt}}

Consider a simple game where two players repeatedly flip coins,
and the one who wins two consecutive flips takes the pot.
The precondition requires each player to deposit $3\BTC$ and choose a secret:
\begin{align*}
  \persdep{\pmvA}{3}{x} \;\mid\; \presecret{\pmvA}{\secrA}
  \; \mid \;
  \persdep{\pmvB}{3}{y} \;\mid\; \presecret{\pmvB}{\secrB}
\end{align*}

\begin{figure}[t]
  \small
  \[
  \begin{array}{l}
    \contrFmt{\it CFG}
    =
    \hspace{6pt}
    \putC[0 \leq \secrB \leq 1]{}{\secrB} . \big( 
    \\
    \CFGindent\CFGtab
    \putC[\secrA = \secrB]{}{\secrA\secrB}. \;
    (
    \adv{\call{\cVarX[\pmvA]}{1}}
    \, + \,
    \afterC{3}{\Split[\pmvA]}
    )
    \\
    \CFGindent
    + \;\putC[\secrA \neq \secrB]{}{\secrA\secrB} . \; 
    (
    \adv{\call{\cVarX[\pmvB]}{1}}
    \, + \,
    \afterC{3}{\Split[\pmvB]}
    )
    \\ 
    \CFGindent
    + \;
    \afterC{2}{\withdrawC{\pmvB}}
    \big)
    \\
    \CFGindentNP
    + \; \afterC{1}{\withdrawC{\pmvA}}
    \\[4pt]
    \call{\cVarX[\pmvA]}{n} 
    = \contrAdv{\presecret{\pmvA}{\secrA} \mid 
    \presecret{\pmvB}{\secrB}}{} 
    \\
    \CFGindent
    \putC[0 \leq \secrB \leq 1]{}{\secrB} . \big( 
    \\
    \CFGindent\CFGtab
    \putC[\secrA = \secrB]{}{\secrA\secrB} . \; \withdrawC{\pmvA}
    \\
    \CFGindent
    + \;\putC[\secrA \neq \secrB]{}{\secrA\secrB} . \;
    (
    \adv {\call{\cVarX[\pmvB]}{n + 1}} 
    \, + \,
    \afterC{(3n+3)}{\Split[\pmvB]}
    )
    \\ 
    \CFGindent + \;
    \afterC{(3n+2)}{\withdrawC{\pmvB}} \big)
    \\
    \CFGindentNP
    + \; \afterC{(3n+1)}{\withdrawC{\pmvA}}
    \\[4pt]
    \call{\cVarX[\pmvB]}{n} 
    = \contrAdv{\presecret{\pmvA}{\secrA} \mid 
    \presecret{\pmvB}{\secrB}}{}
    \\
    \CFGindent
    \putC[0 \leq \secrB \leq 1]{}{\secrB} . \big( 
    \\
    \CFGindent\CFGtab
    \putC[\secrA = \secrB]{}{\secrA\secrB} . \;
    (
    \adv{\call{\cVarX[\pmvA]}{n + 1}}
    \; + \;
    \afterC{(3n+3)}{\Split[\pmvA]}
    )
    \\
    \CFGindent
    + \;\putC[\secrA \neq \secrB]{}{\secrA\secrB} . \; 
    \withdrawC{\pmvB}
    \\ 
    \CFGindent + \;
    \afterC{(3n+2)}{\withdrawC{\pmvB}} \big)
    \\
    \CFGindentNP
    + \; \afterC{(3n+1)}{\withdrawC{\pmvA}}
    \\[4pt]
    \Split[\pmvA] 
    =
    \splitC{(
    \splitB{4}{\withdrawC{\pmvA}} \mid 
    \splitB{2}{\withdrawC{\pmvB}})}
    \\[4pt]
    \Split[\pmvB] 
    =
    \splitC{(
    \splitB{4}{\withdrawC{\pmvB}} \mid 
    \splitB{2}{\withdrawC{\pmvA}})}
  \end{array}
  \]
  \caption{A recursive coin flipping game.}
  \label{fig:cfg}
\end{figure}

The contract \CFG (\Cref{fig:cfg})
asks $\pmvB$ to reveal his secret first:
if $\pmvB$ waits too much, $\pmvA$ can withdraw
the contract funds after time 1.
Then, it is $\pmvA$'s turn to reveal 
(before time 2, otherwise $\pmvB$ can withdraw the funds).
The current flip winner is $\pmvA$
if the secrets of $\pmvA$ and $\pmvB$ are equal, 
otherwise it is $\pmvB$.
At this point, the contract can be renegotiated as 
$\call{\cVarX[\pmvA]}{1}$ or $\call{\cVarX[\pmvB]}{1}$, 
depending on the flip winner
(the parameter $1$ represents the round).
If players do not agree on the renegotiation,
then the funds are split fairly, according to the current expected win.
The contract $\call{\cVarX[\pmvA]}{n}$ requires $\pmvA$ and $\pmvB$
to generate fresh secrets for the $n$-th round.
If $\pmvA$ wins again, she can withdraw the pot, 
otherwise the contract can be renegotiated as $\call{\cVarX[\pmvB]}{n+1}$.
If the players do not agree on the renegotiation,
the pot is split fairly between them. 
The contract $\cVarX[\pmvB]$ is similar.

This game is fair, \ie
the expected payoff of a rational player is always non-negative,
notwithstanding the behaviour of the other player.
Rational players must choose random secrets in $\setenum{0,1}$,
since non uniformly distributed secrets 
can make the adversary bias the coin flip in her favour.
Further, choosing a secret different from $0$ or $1$ would be irrational:
if done by $\pmvB$, this would prevent himself from revealing
(by the predicate in the $\putC{}{\secrB}$), 
and so $\pmvA$ could win after the timeout;
if done by $\pmvA$, this would make $\pmvB$ win the round
(since $\pmvB$ wins when the secrets are different).
Rationality also requires to reveal secrets in time 
(before the alternative $\aftername$ branch is enabled),
and to take the $\Split$ branch if restipulation does not occur in time.
This ensures that, when renegotiation happens, 
there is still time to reveal the round secrets.
Indeed, a late renegotiation could enable the other player 
to win by timeout.
To show fairness,
first consider the case where renegotiation always happens.  
A rational player wins each coin flip with probability $1/2$, at
least: so, the probability of winning the whole game is also $1/2$,
at least.  In the general case, the renegotiation at the end of each
round may fail. When this happens, the rational player takes the
$\Split$ branch, distributing the pot according to the expected
payoff in the \emph{current} game state, thus ensuring the fairness
of the game. 
The player who won the last coin flip is expected to win $p\BTC$, with
$p = \nicefrac{1}{2}\cdot 6 +
\nicefrac{1}{2}\cdot(\nicefrac{1}{2}\cdot p + \nicefrac{1}{2}\cdot
0)$, giving $p=4$.  Accordingly, the $\Split$ contracts transfer
$4\BTC$ to the winner of the last flip and $(6-4)\BTC = 2\BTC$ to
the other player.

The contract $\CFG$ is liquid, because the $\aftername$ branches 
always offer liquidable alternatives to potentially blocking branches.

\section{A safe abstraction of the BitML  semantics}
\label{sec:abs-bitml}

The concrete BitML semantics is infinite-state, 
because participants can always create new deposits,
stipulate new contracts and renegotiate them, 
and can advance the current time.
In this~\namecref{sec:abs-bitml} 
we introduce an abstract semantics of BitML,
which reduces the state space to a finite one,
safely approximating liquidity.
We construct our abstraction in three steps:
\begin{itemize}

\item First, we abstract concrete configurations $\confG$
  as \emph{abstract configurations} $\absConf[\pmvA,\VarX]{\confG}$,
  where $\pmvA \in \PartT$, and
  $\VarX$ is the (finite) set of contract names under observation.
  Roughly, $\absConf[\pmvA,\VarX]{\confG}$ discards all the terms in the configuration,
  except for the contracts $\VarX$, which are abstracted as follows.
  We remove $\revealname$ actions, authorizations, 
  and time constraints, only recording whether reducing a contract $\contrD$
  requires cooperation from some participant different from $\pmvA$.
  Further, the abstraction discards the actual parameters $\sexp$ of renegotiations.

\item We define a semantics $\absrightarrow{}$ of abstract configurations.
  This semantics partitions the moves in two sets:
  the moves $\varX$, which represent a reduction of the contract $\varX$ that can be performed by $\pmvA$ \emph{alone}, 
  and the moves $\advPref{\varX}$, which instead represent actions that may
  require cooperation from other participants.
  Intuitively, the moves of the first kind are those required by liquidability.
  Although this abstract semantics substantially simplifies the concrete one,
  it is still infinite-state.
  We establish that the abstract semantics is an over-approximation of the concrete one
  (Theorem~\ref{th:abs-bitml:over-approximation}).
  Further, we show that abstract runs containing only moves of $\pmvA$ alone
  under-approximate concrete runs
  (Theorem~\ref{th:abs-bitml:under-approximation}).

\item We define a finite-state refinement $\finrightarrow{}$ of $\absrightarrow{}$, 
  which allows for model-checking liquidity.
  Liquidity \wrt $\finrightarrow{}$ is proved to be equivalent to liquidity \wrt $\absrightarrow{}$ 
  (Theorem~\ref{th:abs-bitml:fin-liquidity}).

\end{itemize}

\begin{figure}[t]
  \[
  \begin{array}{c}
    \absContr[\pmvA]{\sum_{i \in I} \contrD[i]}
    = \sum_{i \in I} \absContr[\pmvA]{\contrD[i]}
    \qquad
    \absContr[\pmvA]{\withdrawC{B}}
    = \withdrawC{B}
    \\[12pt]
    \absContr[\pmvA]{\putC[\predP]{}{\vec{\secrA}} . \, \contrC}
    = \begin{cases}
      \tau.\,\absContr[\pmvA]{\contrC}
      & \text{if $\predP = \true$ and $\vec{\secrA} \subseteq \secrOf{\pmvA}$}
      \\
      \advPref{\tau.\,\absContr[\pmvA]{\contrC}}
      & \text{otherwise}
    \end{cases}
    \\[20pt]
    \absContr[\pmvA]{\splitC{\big(  \mmid_{i \in I}\; \splitB{\valV[i]}{\contrC[i]} \big)}}
    = 
    \splitC{\big( \mmid_{i \in I}\; \splitB{}{\absContr[\pmvA]{\contrC[i]}} \big)}
    \\[15pt]
    \absContr[\pmvA]{\authC{\pmvB}{\contrD}}
    = \begin{cases}
      \absContr[\pmvA]{\contrD}
      & \text{if $\pmvB = \pmvA$} 
      \\
      \advPref{\absContr[\pmvA]{\contrD}}
      & \text{otherwise}
    \end{cases}
        \qquad
        \absContr[\pmvA]{\afterC{t}{\contrD}}
        = \absContr[\pmvA]{\contrD}
    \\[20pt]
    \absContr[\pmvA]{\adv{\callX{\vec{\sexp}}}}
    = \adv{\callX{}}
    \\[15pt]
    \absConf[{\pmvA,\VarX}]{\confD \mid \confDi} = \absConf[{\pmvA,\VarX}]{\confD} \mid \absConf[{\pmvA,\VarX}]{\confDi}
    \\[12pt]
    \absConf[{\pmvA,\VarX}]{\confContr[\varX]{\contrC}{\valV}} 
    = \confContr[\varX]{\absContr[\pmvA]{\contrC}}{}
    \quad \text{if $x \in \VarX$}
    \qquad\qquad
    \absConf[{\pmvA,\VarX}]{\confD} = \gnil
    \quad
    \text{otherwise}
  \end{array}
  \]
  \caption{Abstraction of contracts and configurations.}
  \label{fig:abs-conf}
\end{figure}

\medskip
We start by defining the abstraction of configurations.
Hereafter, we assume that all the abstractions 
are done \wrt the same honest participant $\pmvA$.

\begin{defi}[Abstraction of configurations]
  \label{def:abs-conf}
  For all sets of contract names $\VarX$,
  we define the contract abstraction function $\absContr[\pmvA]{}$ 
  and the configuration abstraction function $\absConf[{\pmvA,\VarX}]{}$ 
  in~\Cref{fig:abs-conf}.
  We abstract each defining equation
  $\decl{\cVarX}{\vec{\procParamA}} = \contrAdv{\contrG}{\contrC}$ as
  $\cVarX =_{\sharp} \absContr[\pmvA]{\contrC}$.
  The syntax of abstract configurations is implicitly given by the equations
  in Figure~\ref{fig:abs-conf}.
  We further allow recursion variables $\callX{}$ within abstract configurations.
  We identify $\advPref{\advPref{\contrD}}$ with $\advPref{\contrD}$.
\end{defi}

The equations in Figure~\ref{fig:abs-conf} follow the intuition described above.
The prefix $\tau$ used in $\revealname$ is used to abstract the fact that $\pmvA$ alone
can unconditionally reveal some of her secrets.
Instead, we abstract as $\advPref{\tau}$ the case where 
some of the secrets to be revealed do not belong to $\pmvA$, 
or the truth of the predicate $\predP$ is unknown.
Since the secrets have been removed from configurations, 
this is a conservative (but safe) abstraction.

We now describe the abstract semantics of BitML.
In the relation $\absrightarrow{}$ between abstract configurations,
the rules to advertise contracts, for deposits, and for delays are removed.
There is a rule for making a contract $\confContr[x]{\withdrawC{\pmvA}}{\valV}$
reduce to a deposit $\confDep[y]{\pmvA}{\valV}$ is replaced
so that $\confContr[x]{\withdrawC{\pmvA}}{\valV}$
reduces to $\gnil$ (the empty configuration).

\begin{figure}[t]
  \[
  \begin{array}{c}
    \irule
    {}
    {\confContr[\varX]{\withdrawC{\pmvB}}{} \mid \absConfG 
    \absrightarrow{\varX} 
    \absConfG}
    \;\smallnrule{[A-Withdraw]}
    \qquad
    \irule
    {\varY \;\text{fresh}}
    {\confContr[\varX]{\tau.\contrC}{} \mid \absConfG 
    \absrightarrow{\varX} 
    \confContr[\varY]{\contrC}{} \mid \absConfG}
    \;\smallnrule{[A-Rev]}
    \\[20pt]
    \irule
    {\varY[i]\; \text{fresh}}
    {\confContr[\varX]{\splitC{\mmid_i \;\splitB{}{\contrC[i]}}}{} \mid \absConfG
    \absrightarrow{\varX} 
    \mmid_i \;\confContr[{\varY[i]}]{\contrC[i]}{} \mid \absConfG}
    \;\smallnrule{[A-Split]}
    \\[20pt]
    \irule
    {\confContr[\varX]{\contrD}{} \mid \absConfG \absrightarrow{\absLabS} \absConfGi \quad \varX \in \cn{\absLabS}}
    {\confContr[\varX]{\contrD + \contrC}{} \mid \absConfG \absrightarrow{\absLabS} \absConfGi}
    \;\smallnrule{[A-Branch]}
    \quad
    \irule
    {\confContr[\varX]{\contrD}{} \mid \absConfG \absrightarrow{\absLabS} \absConfGi \quad \varX \in \cn{\absLabS}}
    {\confContr[\varX]{\advPref{\contrD}}{} \mid \absConfG \absrightarrow{\advPref{\absLabS}} \absConfGi}
    \;\smallnrule{[A-Ext]}
    \\[25pt]
    \irule
    {\cVarX =_{\sharp} \contrC \quad \varY \;\text{fresh}}
    {\confContr[\varX]{\adv{\callX{}}}{} \mid \absConfG \absrightarrow{\advPref{\varX}} \confContr[\varY]{\contrC}{} \mid \callX{} \mid \absConfG}
    \;\smallnrule{[A-Rngt]}
  \end{array}
  \]
  \caption{Abstract BitML semantics.}
  \label{fig:abs-bitml}
\end{figure}

\begin{defi}[Abstract semantics]
  \label{def:abs-bitml}
  We define the relation $\absrightarrow{}$ 
  between abstract configurations in~\Cref{fig:abs-bitml}.
  We use $\absLabS$ to range over labels, which have the form $\varX$ or 
  the form $\advPref{\varX}$, where $\varX$ is a contract name.
  We identify $\advPref{\advPref{\absLabS}}$ with $\advPref{\absLabS}$.
  We define $\mathcal{L}^{\sharp}$ as the subset of labels of the form $\varX$.
  An \emph{abstract run} $\absRunS$ is a sequence 
  \(
  \confG[0]
  \absrightarrow{}
  \confG[1]
  \absrightarrow{}
  \cdots
  \).
\end{defi}

We briefly comment the rules in Figure~\ref{fig:abs-bitml}, which define
the abstract semantics $\absrightarrow{}$ of BitML.
Most rules are straightforward.
In rule \nrule{[A-Ext]}, we record in the label $\advPref{\absLabS}$
the fact that reducing a contract $\advPref{\contrD}$ might require the cooperation
from some other participant.
In rule \nrule{[A-Rngt]}, we record through the term $\callX{}$ in the configuration 
the fact that we have unfolded the recursion variable $\callX{}$.
This is not strictly needed to make the abstract semantics safely approximate 
the concrete one: rather, it is a technical expedient to build the finite-state semantics
$\finrightarrow{}$ on-top of $\absrightarrow{}$.

\paragraph{Correspondence between the semantics.}

We now establish
a correspondence between the abstract and the concrete semantics of BitML.
In all the statements below, we assume that the concrete configuration
$\confG$ is reachable from an initial configuration.

We introduce below the notion of \emph{descendants} of a set of contracts, 
which is dual to that of originator in Definition~\ref{def:origin}.
This notion is exploited whenever we reduce an abstract configuration
$\absConf[{\pmvA,\VarX}]{\confG}$ mimicking the steps of a concrete run $\runS$
from $\confG$ to $\confGi$.
Intuitively, the resulting configuration will be an abstraction of $\confGi$
against the descendants of $\VarX$.

\begin{defi}[Descendants]
  \label{def:desc}
  For all concrete configurations $\confG$, 
  runs $\runS = \confG \concrightarrow{}^* \confGi$,
  and set of contract names $\VarX$,
  we define the set of contract names $\descen[\confG]{\runS}{\VarX}$
  as follows:
  \[
  \descen[\confG]{\runS}{\VarX} 
  \; = \;
  \setcomp{\varX \in \cn{\confGi}}{\origin[\confG]{\runS}{x} \in \VarX}
  \]
\end{defi}

Theorem~\ref{th:abs-bitml:over-approximation}
establishes that each concrete run $\runS = \confG \concrightarrow{}^* \confGi$
has a corresponding abstract run,
whose first element is the abstraction of $\confG$,
and the last element is the abstraction of $\confGi$, plus recursion variables.

\begin{thm}[Over-approximation]
  \label{th:abs-bitml:over-approximation}
  Let $\VarX \subseteq \cn{\confG}$, 
  and let $\runS = \confG \concrightarrow{}^* \confGi$ be a concrete run.
  Then, there exist $\callX{}_1, \callX{}_2, \ldots$ such that:
  \[
  \absConf[{\pmvA,\VarX}]{\confG}
  \absrightarrow{}^* 
  \absConf[{\pmvA,\descen[\confG]{\runS}{\VarX}}]{\confGi} 
  \mid \mmid_i \, \callX{}_i
  \]
\end{thm}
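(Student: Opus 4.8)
The plan is to proceed by induction on the length of the concrete run $\runS$. In the base case $\runS = \confG$ we have $\origin[\confG]{\confG}{x} = x$ for every $x \in \cn{\confG}$, hence $\descen[\confG]{\confG}{\VarX} = \VarX$; both sides of the statement then coincide and the empty abstract run (with no $\callX{}$ terms) witnesses the claim.

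For the inductive step I would split $\runS = \runSi \concrightarrow{\labS} \confGii$, where $\runSi = \confG \concrightarrow{}^* \confGi$, and abbreviate $\VarY = \descen[\confG]{\runSi}{\VarX}$ and $\VarZ = \descen[\confG]{\runS}{\VarX}$. The induction hypothesis applied to $\runSi$ yields $\absConf[{\pmvA,\VarX}]{\confG} \absrightarrow{}^* \absConf[{\pmvA,\VarY}]{\confGi} \mid \mmid_i \callX{}_i$, so it remains to simulate the single concrete step $\confGi \concrightarrow{\labS} \confGii$, reaching $\absConf[{\pmvA,\VarZ}]{\confGii}$ up to at most one further $\callX{}$ term. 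The accumulated recursion-variable terms $\mmid_i \callX{}_i$ are inert: every abstract rule in \Cref{fig:abs-bitml} matches on an active-contract term $\confContr[\varX]{\cdot}{}$ and leaves the rest of the configuration as an untouched context $\absConfG$, so I would simply absorb the $\callX{}_i$ into that context using the commutative-monoid structure of $\mid$.

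The core of the argument is a case analysis on the rule applied in the last step. I would first dispatch the \emph{silent} rules, which require zero abstract steps: the deposit rules \nrule{[Dep-*]}, the advertisement, commitment and authorization rules (\nrule{[C-Adv]}, \nrule{[C-AuthCommit]}, \nrule{[C-AuthInitDep]}, \nrule{[C-AdvRngt]}, \nrule{[C-AuthBranch]}, \nrule{[C-AuthRev]}), the delay \nrule{[C-Delay]}, the fresh-contract creation \nrule{[C-Init]}, and any of the contract-reducing rules fired on a name $\varX \notin \VarY$. In each of these cases the abstraction is unchanged and $\VarZ = \VarY$: by \Cref{fig:abs-conf} the abstraction discards every term except the observed active contracts, and none of these rules alters an observed contract or creates a descendant of $\VarX$ --- in particular the fresh contract produced by \nrule{[C-Init]} consumes no active contract, so by \Cref{def:origin} its origin is undefined and it lies outside $\VarZ$. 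The \emph{active} rules are \nrule{[C-Withdraw]}, \nrule{[C-Split]} and \nrule{[C-Rev]} fired on some $\varX \in \VarY$ (always within a \nrule{[C-Branch]} context), and the renegotiation \nrule{[C-Rngt]} on such an $\varX$; each is matched by exactly one abstract step, respectively \nrule{[A-Withdraw]}, \nrule{[A-Split]}, \nrule{[A-Rev]} and \nrule{[A-Rngt]}. Since $\varX$ is consumed and each reduct $y$ satisfies $\origin[\confG]{\runS}{y} = \origin[\confG]{\runSi}{\varX} \in \VarX$, the set $\VarZ$ is obtained from $\VarY$ by deleting $\varX$ and inserting those reducts; choosing the abstract fresh names to coincide with the concrete ones makes the target configuration literally equal to $\absConf[{\pmvA,\VarZ}]{\confGii}$, with the single \nrule{[A-Rngt]} step contributing the one extra $\callX{}$ term.

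I expect two points to demand the most care. The first is the handling of decorations in \nrule{[C-Branch]}: the fired core $\contrDi$ is wrapped in authorizations for participants $\pmvA[1], \ldots, \pmvA[k]$ and in timing guards, and I must check that $\absContr[\pmvA]{}$ erases the timing guards and turns the whole branch into $\absContr[\pmvA]{\contrDi}$ when every $\pmvA[j] = \pmvA$, and into $\advPref{\absContr[\pmvA]{\contrDi}}$ otherwise; correspondingly the abstract move is produced by \nrule{[A-Branch]} alone, or by \nrule{[A-Ext]} followed by \nrule{[A-Branch]}, carrying label $\varX$ or $\advPref{\varX}$. Both are legitimate abstract steps leading to the same target, so over-approximation holds in either case. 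The second delicate point is renegotiation: I would prove, as an auxiliary lemma, that $\absContr[\pmvA]{}$ is invariant under the parameter instantiation and the $\alpha$-renaming of secret names and deposit variables performed by \nrule{[C-Rngt]}, so that the body $\absContr[\pmvA]{\contrC}$ recorded by the abstract equation $\cVarX =_{\sharp} \absContr[\pmvA]{\contrC}$ equals the abstraction of the instantiated continuation appearing in $\confGii$. This invariance holds because static expressions survive in $\confGii$ only in positions that $\absContr[\pmvA]{}$ discards (timing guards, split weights, renegotiation arguments), and because the purely syntactic test $\predP = \true$ governing the $\revealname$ clause is unaffected by substituting expressions for integer variables.
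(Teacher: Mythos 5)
Your proposal is correct and follows essentially the same strategy as the paper's proof: induction on the length of the concrete run, with a case analysis showing that each concrete step is matched by zero abstract moves (deposit, advertisement, commitment, authorization, delay, \nrule{[C-Init]}, and reductions of unobserved contracts) or exactly one abstract move (reductions of observed contracts, with \nrule{[A-Rngt]} contributing the extra $\callX{}$ term). Your treatment is somewhat more detailed than the paper's sketch --- notably the explicit inertness of the accumulated $\callX{}_i$, the \nrule{[A-Ext]}/\nrule{[A-Branch]} bookkeeping for decorations, and the auxiliary invariance lemma for parameter instantiation and $\alpha$-renaming under \nrule{[C-Rngt]} --- but these are refinements of the same argument, not a different route.
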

\begin{proof}\emph{(sketch)}
  By induction on the number of concrete moves. 
  We check that each move $\confD \concrightarrow{} \confDi$ 
  is matched by zero or one abstract moves.
  First, note that the abstract configuration contains only contracts in $\VarX$.
  We have the following cases, according to the concrete move:
  \begin{itemize}
    
  \item If $\confD \concrightarrow{} \confDi$ does neither consume nor create a contract, then
    the abstractions of $\confD$ and $\confDi$ are the same, 
    so we match the concrete move with zero abstract moves.
    
  \item If $\confD \concrightarrow{} \confDi$ creates contracts without consuming any contract, 
    then it is an instance of rule \nrule{[C-Init]}. 
    In this case, the created contract is not a descendant of any contract in $\VarX$,
    so it does not occur in the abstraction of $\confDi$.
    As in the previous case, we match the concrete move with zero abstract moves.
    
  \item If $\confD \concrightarrow{} \confDi$ consumes a contract $\varX \not \in \VarX$, then the abstraction
    discards the contract $\varX$ and its descendants, and so we match the concrete move with zero abstract moves.
    
  \item If $\confD \concrightarrow{} \confDi$ consumes a contract $\varX \in \VarX$, then one of the rules
    \nrule{[C-Split]}, \nrule{[C-Rev]}, \nrule{[C-Withdraw]} or \nrule{[C-Rngt]} have been used. 
    The moves \nrule{[C-Split]}, \nrule{[C-Withdraw]} and \nrule{[C-Rngt]}
    are matched by the corresponding \nrule{[A-*]} abstract moves.
    The move \nrule{[C-Rev]} is matched either by \nrule{[A-Rev]}
    or \nrule{[A-Ext]}, depending on how the $\revealname$ was abstracted.
    Note that \nrule{[C-Rngt]} may instantiate the formal parameters 
    in the contract with actual values.
    However, these values are abstracted away by $\absContr[\pmvA]{}$,
    so this instantiation is immaterial.
    The term $\callX{}$ introduced by \nrule{[A-Rngt]} is accounted for
    by the statement (it will belong to $\|_i\, \callX{}_i$).
    \qedhere
  \end{itemize}
\end{proof}

Theorem~\ref{th:abs-bitml:under-approximation} associates some abstract runs with concrete runs. 
More specifically, it considers an abstract run starting from the abstraction of a concrete configuration $\confG$ (plus recursion variables),
and whose labels represent actions performable by $\pmvA$ 
(\eg, the actions in $\mathcal{L}^{\sharp}$).
The theorem constructs a concrete run starting from $\confG$, 
whose last configuration, once abstracted, matches the last configuration in the abstract run.
Further, the labels in the concrete run belong to 
the set of concrete actions performable by $\pmvA$ 
(which we denote by $\mathcal{L}^{\flat}_{\pmvA}$, 
\ie all the concrete labels except those of the form $\pmvB:\ell$ with $\pmvB \neq \pmvA$).

\begin{thm}[Under-approximation]
  \label{th:abs-bitml:under-approximation}
  Let $\VarX \subseteq \cn{\confG}$, and assume that:
  \begin{align*}
    & \absConf[{\pmvA,\VarX}]{\confG}
      \mid \! \mmid_i \, \callX{}_i
      \; \absrightarrow{}^* \;
      \confGi[\sharp]
    && \text{with labels in $\mathcal{L}^{\sharp}$}
    \\
    \intertext{Then, there exist $\runS$ and $\confGi$ such that:}
    & \runS = \confG \; \concrightarrow{}^* \; \confGi
    && \text{with labels in $\mathcal{L}^{\flat}_{\pmvA}$, and $\confGi[\sharp] = \absConf[{\pmvA,\descen[\confG]{\runS}{\VarX}}]{\confGi} \mid \! \mmid_i \, \callX{}_i$}
  \end{align*}
\end{thm}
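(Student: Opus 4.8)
The plan is to prove the statement by induction on the length $n$ of the abstract run, constructing the concrete run $\runS$ incrementally while maintaining the invariant $\confGi[\sharp] = \absConf[{\pmvA,\descen[\confG]{\runS}{\VarX}}]{\confGi} \mid \mmid_i \callX{}_i$; this is the dual of the over-approximation result (Theorem~\ref{th:abs-bitml:over-approximation}). A key preliminary observation is that, since every label of the abstract run lies in $\mathcal{L}^{\sharp}$, none of the moves can be an instance of \nrule{[A-Ext]} or \nrule{[A-Rngt]}, both of which emit labels of the form $\advPref{\varX}$. Hence no abstract move creates or consumes a recursion-variable term $\callX{}$, so the inert context $\mmid_i \callX{}_i$ rides along unchanged, and the whole argument reduces to matching $\absConf[{\pmvA,\VarX}]{\confG} \absrightarrow{}^* \absConf[{\pmvA,\VarZ}]{\confGi}$, with $\VarZ = \descen[\confG]{\runS}{\VarX}$, by a concrete run whose labels all lie in $\mathcal{L}^{\flat}_{\pmvA}$. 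For the base case $n = 0$ I take the empty concrete run: here $\descen[\confG]{\confG}{\VarX} = \VarX$ because $\origin[\confG]{\confG}{x} = x$, so the invariant holds trivially.

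For the inductive step, I split the abstract run as $\absConf[{\pmvA,\VarX}]{\confG} \mid \mmid_i \callX{}_i \absrightarrow{}^* \confGii[\sharp] \absrightarrow{\varX} \confGi[\sharp]$, apply the induction hypothesis to the first $n$ moves to obtain $\runSi = \confG \concrightarrow{}^* \confGii$ with labels in $\mathcal{L}^{\flat}_{\pmvA}$ and $\confGii[\sharp] = \absConf[{\pmvA,\VarZ}]{\confGii} \mid \mmid_i \callX{}_i$, where $\VarZ = \descen[\confG]{\runSi}{\VarX}$, and then realise the last abstract move concretely. Since that move carries a label $\varX \in \mathcal{L}^{\sharp}$, it is an instance of \nrule{[A-Withdraw]}, \nrule{[A-Rev]} or \nrule{[A-Split]}, possibly nested under \nrule{[A-Branch]}, and in all cases the reduced abstract summand is \emph{not} of the form $\advPref{\cdots}$. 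The technical heart is a structural lemma about $\absContr[\pmvA]{}$: a concrete guarded contract $\contrD$ whose abstraction fires with a pure label $\varX$ must have the shape $\authC{\pmvA}{\cdots \authC{\pmvA}{\afterC{\constT[1]}{\cdots \afterC{\constT[m]}{\contrDi}}}}$, where every authorization is $\pmvA$'s own and the core $\contrDi$ is either $\withdrawC{\pmvB}$, or $\splitC{\cdots}$, or a reveal $\putC{}{\vec{\secrA}} . \, \contrCi$ with $\vec{\secrA} \subseteq \secrOf{\pmvA}$ and predicate $\true$. This follows by inspecting the clauses defining $\absContr[\pmvA]{}$: a foreign authorization, a reveal involving others' secrets or a nontrivial predicate, or a renegotiation would each have introduced an $\advPref{\cdots}$ prefix (or would be \nrule{[A-Rngt]}).

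Given this shape, $\pmvA$ realises the move by a block of steps all in $\mathcal{L}^{\flat}_{\pmvA}$: for a reveal branch she first fires \nrule{[C-AuthRev]} on each $\secrA[l] \in \vec{\secrA}$ not yet revealed, which is possible because $\pmvA \in \PartT$ is honest, so by \nrule{[C-AuthCommit]} her committed values satisfy $N_l \in \Nat$, hence $N_l \neq \bot$; then she grants each of her own authorizations via \nrule{[C-AuthBranch]}; next she advances the global clock with \nrule{[C-Delay]}, if necessary, past $\max(\constT[1],\ldots,\constT[m])$, which is a concrete integer because the static expressions were already instantiated at renegotiation time; finally she fires \nrule{[C-Branch]}, reducing through the core rule \nrule{[C-Withdraw]}, \nrule{[C-Rev]} or \nrule{[C-Split]}. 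Each of these carries a delay label, a label $\authLab{\pmvA}{\cdots}$, or an unrestricted label, hence all lie in $\mathcal{L}^{\flat}_{\pmvA}$; I let $\runS$ be $\runSi$ extended by this block, ending in $\confGi$.

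It remains to check the invariant for $\confGi$. The consumed contract $\varX$ disappears, and the contracts it spawns are exactly its descendants: \nrule{[C-Withdraw]} spawns only a deposit, which $\absConf$ discards, matching the empty residue of \nrule{[A-Withdraw]}; \nrule{[C-Rev]} spawns a single $\confContr[\varY]{\contrCi}{\valV}$ with $\absContr[\pmvA]{\contrCi}$ equal to the abstract continuation; and \nrule{[C-Split]} spawns the $\confContr[{\varY[i]}]{\contrC[i]}{\valV[i]}$ matching \nrule{[A-Split]}. Correspondingly $\descen[\confG]{\runS}{\VarX}$ drops $\varX$ and adds the fresh names, so $\absConf[{\pmvA,\descen[\confG]{\runS}{\VarX}}]{\confGi}$ coincides with $\confGi[\sharp]$. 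I expect the main obstacle to be precisely this last bookkeeping: aligning the fresh contract names chosen in the \emph{given} abstract run with those chosen in the \emph{constructed} concrete run, and checking that the origin/descendant relation tracks each reduction faithfully, using the fact (noted after Definition~\ref{def:origin}) that each concrete move consumes at most one contract name. I would resolve it by working up to a bijective renaming of contract names and choosing the concrete fresh names to coincide with the abstract ones, so that the equality in the statement holds on the nose.
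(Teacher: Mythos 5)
Your proposal is correct and follows essentially the same route as the paper's own proof: induction on the number of abstract moves, observing that labels in $\mathcal{L}^{\sharp}$ rule out \nrule{[A-Ext]} and \nrule{[A-Rngt]} and force the contract into one of the shapes handled by \nrule{[A-Withdraw]}, \nrule{[A-Rev]} or \nrule{[A-Split]} (possibly under \nrule{[A-Branch]}), and then realising each abstract step by a block of concrete moves in $\mathcal{L}^{\flat}_{\pmvA}$ --- \nrule{[C-AuthRev]} (justified by $\pmvA \in \PartT$ and the $N \neq \bot$ guarantee of \nrule{[C-AuthCommit]}), \nrule{[C-AuthBranch]}, \nrule{[C-Delay]}, and the corresponding core rule. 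Your treatment is somewhat more explicit than the paper's sketch (the structural lemma on $\absContr[\pmvA]{}$ and the fresh-name bookkeeping), but the underlying argument is the same.
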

\begin{proof}\emph{(sketch)}
  By induction on the number of moves. 
  Since each abstract move must be in $\mathcal{L}^{\sharp}$, 
  it must be derived by one of the rules 
  \nrule{[A-Withdraw]}, \nrule{[A-Rev]}, or \nrule{[A-Split]}, 
  possibly as a premise of \nrule{[A-Branch]}. 
  To fall into these cases, 
  the concrete contract must have one of the following forms:
  \begin{itemize}

  \item $\withdrawC{\pmvB}$ or $\splitC{\cdots}$. 
    Here, the abstract move is matched by the concrete moves \nrule{[C-Withdraw]} or 
    \nrule{[C-Split]}, the labels of which belong to $\mathcal{L}^{\flat}_{\pmvA}$.

  \item $\putC[\true]{}{\vec{\secrA}}$ where $\vec{\secrA}$ are secrets of $\pmvA$, the participant \wrt whom we abstract. 
    In this case, the abstract move \nrule{[A-Rev]} is matched by \nrule{[C-Rev]}, the label of which belongs to $\mathcal{L}^{\flat}_{\pmvA}$. 
    If the secrets have not been revealed, yet, we perform one or more \nrule{[C-AuthRev]} moves (the labels of which also belong to $\mathcal{L}^{\flat}_{\pmvA}$) before the \nrule{[C-Rev]} move.
    Note that the premise $N \neq \bot$ in rule \nrule{[C-AuthRev]} is satisfied, because we are assuming that the participant $\pmvA$ against whom we are abstracting is honest ($\pmvA \in \PartT$), and rule $\nrule{[C-AuthCommit]}$ ensures that honest participants have committed
    to secrets $N \neq \bot$.

  \item any of the above, constrained by $\afterC{t}{\cdots}$ or $\authC{\pmvB}{\cdots}$ with 
    $\pmvB = \pmvA$. Here, if needed, we perform a \nrule{[C-Delay]} to advance the current time until time $t$ (a move in $\mathcal{L}^{\flat}_{\pmvA}$). 
    Then, if needed, $\pmvA$ can perform a \nrule{[C-AuthBranch]} to authorize the branch 
    (this move is also in $\mathcal{L}^{\flat}_{\pmvA}$). 
    After that, the constraints are satisfied, so we proceed as in the previous items.
    \qedhere
  \end{itemize}
\end{proof}

\paragraph{Safe approximation of liquidity.}

Theorem~\ref{th:abs-bitml:liquidity} below
establishes the soundness of our abstraction \wrt liquidity.
It exploits the following corollary for liquidability.

\begin{cor}[Abstraction soundness against liquidability]
  \label{cor:abs-bitml:liquidability}
  Let $\varX \in \VarX \subseteq \cn{\confG}$.
  If $\varX$ is liquidable in 
  $\absConf[{\pmvA,\VarX}]{\confG} \mid \vec{\callY{}}$ 
  \wrt $\absrightarrow{} / \mathcal{L}^{\sharp}$,
  then
  $\varX$ is liquidable in $\confG$ 
  \wrt $\concrightarrow{} / \mathcal{L}^{\flat}_{\pmvA}$.
\end{cor}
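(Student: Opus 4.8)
The plan is to reduce the corollary to Theorem~\ref{th:abs-bitml:under-approximation}, exploiting that liquidability (Definition~\ref{def:liquid}) is a purely run-based property. First I would unfold the hypothesis: since $\varX$ is liquidable in $\absConf[{\pmvA,\VarX}]{\confG} \mid \vec{\callY{}}$ \wrt $\absrightarrow{}/\mathcal{L}^{\sharp}$, there is an abstract run
\[
\absRunS \;=\; \absConf[{\pmvA,\VarX}]{\confG} \mid \vec{\callY{}} \;\absrightarrow{\absLabS[1]}\; \cdots \;\absrightarrow{\absLabS[n]}\; \confGi[\sharp]
\]
with every $\absLabS[i] \in \mathcal{L}^{\sharp}$ and no $\varZ \in \cn{\confGi[\sharp]}$ whose origin along $\absRunS$ equals $\varX$. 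I would first observe that, because each label lies in $\mathcal{L}^{\sharp}$ (\ie has the plain form $\varX'$, not $\advPref{\varX'}$), neither \nrule{[A-Rngt]} nor \nrule{[A-Ext]} is ever used in $\absRunS$; hence no new recursion variable is introduced and the trailing $\vec{\callY{}}$ is preserved unchanged throughout.

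Next I would apply Theorem~\ref{th:abs-bitml:under-approximation} to $\absRunS$, obtaining a concrete run $\runS = \confG \concrightarrow{}^* \confGi$ whose labels all lie in $\mathcal{L}^{\flat}_{\pmvA}$ and such that $\confGi[\sharp] = \absConf[{\pmvA,\descen[\confG]{\runS}{\VarX}}]{\confGi} \mid \vec{\callY{}}$. Taking $\cn{\cdot}$ of both sides, and using that the recursion variables contribute nothing to $\cn{\cdot}$ while $\absConf[{\pmvA,\descen[\confG]{\runS}{\VarX}}]{\confGi}$ retains exactly the contract names in $\descen[\confG]{\runS}{\VarX}$, I get $\cn{\confGi[\sharp]} = \descen[\confG]{\runS}{\VarX}$. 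Condition~(1) of concrete liquidability for $\runS$ is then immediate, since all labels are in $\mathcal{L}^{\flat}_{\pmvA}$; it remains to establish condition~(2): no $\varZ \in \cn{\confGi}$ has $\origin[\confG]{\runS}{\varZ} = \varX$.

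For this I would argue that the two origin functions agree on $\cn{\confGi[\sharp]}$, \ie for every $\varZ \in \cn{\confGi[\sharp]}$ the origin of $\varZ$ along $\absRunS$ coincides with $\origin[\confG]{\runS}{\varZ}$. This is the crux, and I expect it to follow from the shape of the matching built in the proof of Theorem~\ref{th:abs-bitml:under-approximation}: each abstract step (necessarily \nrule{[A-Withdraw]}, \nrule{[A-Rev]} or \nrule{[A-Split]}, possibly under \nrule{[A-Branch]}) is mirrored by the corresponding concrete rule \nrule{[C-Withdraw]}, \nrule{[C-Rev]} or \nrule{[C-Split]}, which consumes and creates \emph{exactly the same} contract names (choosing the fresh names to coincide), whereas the auxiliary concrete moves \nrule{[C-Delay]}, \nrule{[C-AuthRev]} and \nrule{[C-AuthBranch]} leave $\cn{\cdot}$ untouched. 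Since $\origin$ depends only on the sequence of contract-name creations and consumptions (Figure~\ref{fig:origin}), the two functions coincide on the observed names.

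Finally I would conclude by contradiction: if some $\varZ \in \cn{\confGi}$ satisfied $\origin[\confG]{\runS}{\varZ} = \varX$, then, as $\varX \in \VarX$, we would have $\varZ \in \descen[\confG]{\runS}{\VarX} = \cn{\confGi[\sharp]}$, and by the origin agreement the origin of $\varZ$ along $\absRunS$ would also be $\varX$, contradicting condition~(2) of the abstract liquidability of $\varX$. Hence no such $\varZ$ exists, condition~(2) holds for $\runS$, and $\varX$ is liquidable in $\confG$ \wrt $\concrightarrow{}/\mathcal{L}^{\flat}_{\pmvA}$. The main obstacle is precisely the origin-agreement step: it requires the under-approximation construction to be name-preserving on the observed contracts, which I expect to extract from the proof of Theorem~\ref{th:abs-bitml:under-approximation} (or as a minor strengthening of it) rather than from its statement alone.
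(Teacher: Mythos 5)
Your proposal is correct and follows essentially the same route as the paper's own proof: unfold the abstract liquidability hypothesis, invoke Theorem~\ref{th:abs-bitml:under-approximation} to obtain a concrete run with labels in $\mathcal{L}^{\flat}_{\pmvA}$, and conclude from the fact that the abstraction of the final concrete configuration (taken \wrt the descendants of $\VarX$) retains exactly the contracts originated from $\varX$, so their absence in the abstract final configuration forces their absence in the concrete one. The paper compresses the origin-agreement step you flag as the crux into the single remark that ``the abstraction preserves all the contracts originated from $\varX$''; your more explicit treatment of it is a faithful elaboration, not a different argument.
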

\begin{proof}
  Since $\varX$ is liquidable in 
  $\absConf[{\pmvA,\VarX}]{\confG} \mid \vec{\callY{}}$,
  there exists an abstract run with labels in $\mathcal{L}^{\sharp}$
  and leading to a configuration $\absConfGi$
  without contracts originated from $\varX$.
  By Theorem~\ref{th:abs-bitml:under-approximation},
  there is a corresponding concrete run with labels in $\mathcal{L}^{\flat}_{\pmvA}$
  starting from $\confG$ and leading to a configuration $\confGi$
  the abstraction of which is $\absConfGi$.
  Since the abstraction preserves all the contracts originated from $\varX$,
  in $\confGi$ there are none.
  Therefore, $\varX$ is liquidable in $\confG$.
\end{proof}

\begin{thm}[Abstraction soundness against liquidity]
  \label{th:abs-bitml:liquidity}
  Let $\varX \in \VarX \subseteq \cn{\confG}$.
  If $\varX$ is liquid in $\absConf[{\pmvA,\VarX}]{\confG}$ 
  \wrt $\absrightarrow{} / \mathcal{L}^{\sharp}$,
  then $\varX$ is liquid in $\confG$ 
  \wrt $\concrightarrow{} / \mathcal{L}^{\flat}_{\pmvA}$.
\end{thm}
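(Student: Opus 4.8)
The plan is to unfold the definition of concrete liquidity (\Cref{def:liquid}) and reduce the goal to a statement about liquidability of individual descendants, which can then be discharged by chaining the over-approximation theorem with the liquidability corollary. Concretely, to prove that $\varX$ is liquid in $\confG$ \wrt $\concrightarrow{} / \mathcal{L}^{\flat}_{\pmvA}$, I would fix an arbitrary concrete run $\runS = \confG \concrightarrow{}^* \confGi$ and an arbitrary contract name $\varY$ with $\origin[\confG]{\runS}{\varY} = \varX$, and show that $\varY$ is liquidable in $\confGi$ \wrt $\concrightarrow{} / \mathcal{L}^{\flat}_{\pmvA}$.

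First I would apply \Cref{th:abs-bitml:over-approximation} to the run $\runS$, obtaining recursion variables $\callX{}_1, \callX{}_2, \ldots$ and an abstract run
\[
\absRunS \;=\; \absConf[{\pmvA,\VarX}]{\confG} \;\absrightarrow{}^*\; \absConf[{\pmvA,\descen[\confG]{\runS}{\VarX}}]{\confGi} \mid \mmid_i \callX{}_i .
\]
Since $\varX \in \VarX$ and $\origin[\confG]{\runS}{\varY} = \varX$, the definition of $\descen[\confG]{\runS}{\VarX}$ gives $\varY \in \descen[\confG]{\runS}{\VarX} \subseteq \cn{\confGi}$; hence, by the definition of $\absConf[{\pmvA,-}]{-}$, the name $\varY$ is preserved (carrying its abstracted contract) in the last configuration of $\absRunS$.

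Next I would invoke the hypothesis that $\varX$ is liquid in $\absConf[{\pmvA,\VarX}]{\confG}$ \wrt $\absrightarrow{} / \mathcal{L}^{\sharp}$. Applied to the abstract run $\absRunS$, which indeed starts from $\absConf[{\pmvA,\VarX}]{\confG}$, abstract liquidity yields that every contract name whose abstract origin along $\absRunS$ is $\varX$ is liquidable in the last configuration of $\absRunS$. In particular $\varY$, whose origin is $\varX$, is liquidable in $\absConf[{\pmvA,\descen[\confG]{\runS}{\VarX}}]{\confGi} \mid \mmid_i \callX{}_i$ \wrt $\absrightarrow{} / \mathcal{L}^{\sharp}$. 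I would then apply \Cref{cor:abs-bitml:liquidability}, instantiated with the configuration $\confGi$, the set $\descen[\confG]{\runS}{\VarX}$, the name $\varY$, and the recursion variables $\mmid_i \callX{}_i$, to conclude that $\varY$ is liquidable in $\confGi$ \wrt $\concrightarrow{} / \mathcal{L}^{\flat}_{\pmvA}$. As $\runS$ and $\varY$ were arbitrary, this establishes concrete liquidity of $\varX$.

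The main obstacle I anticipate is the step that transfers the origin relation across the abstraction, namely that a contract name $\varY$ with concrete origin $\varX$ along $\runS$ also has abstract origin $\varX$ along the corresponding abstract run $\absRunS$. This is not purely formal: it requires that the abstract run produced by \Cref{th:abs-bitml:over-approximation} can be chosen so that the fresh names introduced by the abstract rules \nrule{[A-Split]}, \nrule{[A-Rev]} and \nrule{[A-Rngt]} coincide with those introduced by the matching concrete rules, so that the entire ancestor/descendant tree of $\varX$ is reflected name-for-name in the abstraction. I would support this with an auxiliary lemma stating that the over-approximation preserves origins on the descendants of $\VarX$, proved by an easy induction on the length of $\runS$: each concrete step consuming a contract in the descendant tree of $\VarX$ is matched by exactly one abstract step with the same consumed and created names, while steps outside that tree are matched by zero abstract steps and leave the relevant origins unaffected. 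Everything else is bookkeeping with the definitions of liquidity, origin, and descendants.
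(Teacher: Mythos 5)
Your proposal is correct and follows essentially the same route as the paper's proof: fix an arbitrary concrete run and descendant $\varY$ of $\varX$, lift via Theorem~\ref{th:abs-bitml:over-approximation} to an abstract run, apply abstract liquidity to get abstract liquidability of $\varY$, and transfer back via Corollary~\ref{cor:abs-bitml:liquidability}. The only difference is that you explicitly flag (and propose a lemma for) the preservation of the origin relation under abstraction, a point the paper's proof treats as immediate from the statement of the over-approximation theorem.
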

\begin{proof}
  Let $\runS  = \confG \xrightarrow{} \cdots \xrightarrow{} \confGi$ 
  be a concrete run, and let $\varY$ be such that 
  $\origin[\confG]{\runS}{\varY} = \varX$. 
  By Theorem~\ref{th:abs-bitml:over-approximation}, 
  there exists a corresponding abstract run:
  \[
  \absConfG \; = \; \absConf[{\pmvA,\VarX}]{\confG}
  \; \absrightarrow{}^* \; 
  \absConfGi \; = \; \absConf[{\pmvA,\descen[\confG]{\runS}{\VarX}}]{\confGi} 
  \mid \mmid_i \, \callX{}_i
  \]
  Since $\varX$ is liquid in $\absConfG$ and $\varY$ is a descendant of $\varX$
  in the abstract run, 
  $\varY$ is liquidable in $\absConfGi$.
  By Corollary~\ref{cor:abs-bitml:liquidability},
  $\varY$ is liquidable in $\confGi$.
  Therefore, $\varX$ is liquid in $\confG$.
\end{proof}

\paragraph{A finite-state abstraction of BitML}

The abstract semantics $\absrightarrow{}$ is infinite-state,
even up-to renaming of contract names.
Indeed, each application of rule \nrule{[A-Rngt]} unfolds a contract $\callX{}$,
which can spawn other parallel contracts before recursing.
Therefore, the number of parallel contracts in reachable abstract configurations 
can grow unboundedly.
This hinders verification based on model-checking the whole state space.

We now introduce another abstract semantics, called $\finrightarrow{}$,
which has a \emph{finite} number of reachable configurations
(up-to renaming of contract names),
and which preserves the liquidity \wrt $\absrightarrow{}$:
more specifically, a contract is liquid \wrt $\absrightarrow{}$ 
if and only if it is liquid \wrt $\finrightarrow{}$.
We define the transition relation $\finrightarrow{}$ by the following rule:
\[
\irule
{\confG \absrightarrow{\absLabS} \confGi
  \quad  
  \nexists \confD, \callX{} : \confGi = \confD \mid \callX{} \mid \callX{}}
{\confG \finrightarrow{\absLabS} \confGi}
\]

Intuitively, $\finrightarrow{}$ can mimic any move of $\absrightarrow{}$,
except for the moves \nrule{[A-Rngt]} which renegotiate a contract 
which has already been renegotiated.
Technically, this is ensured by constraining the configuration to contain at most one 
occurrence of each recursion variable~$\callX{}$.

\begin{thm}
  \label{th:liquidity-fin-bitml:finite-state}
  Starting from any abstract configuration,
  the set of states reachable through $\finrightarrow{}$ is finite,
  up-to renaming of contract names.
\end{thm}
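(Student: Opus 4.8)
The plan is to show that, along every $\finrightarrow{}$-run, both the syntactic shape of each parallel component and the total number of components stay bounded; finiteness up-to renaming of contract names then follows.

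First I would fix a finite set $\mathcal{D}$ of abstract contracts, namely the subterms of the contracts occurring in the starting configuration together with the subterms of the right-hand sides $\contrC$ of all abstract defining equations $\cVarX =_{\sharp} \contrC$. As there are finitely many recursion variables and each definition is a finite term, $\mathcal{D}$ is finite. By induction on the length of the run I would check that every active contract $\confContr[\varX]{\contrD}{}$ in a reachable configuration has $\contrD \in \mathcal{D}$: rules \nrule{[A-Withdraw]}, \nrule{[A-Rev]} and \nrule{[A-Split]} (possibly applied under \nrule{[A-Branch]} or \nrule{[A-Ext]}) replace a contract by some of its subterms and so stay inside $\mathcal{D}$, while \nrule{[A-Rngt]} replaces $\adv{\callX{}}$ by the definition $\contrC$ with $\cVarX =_{\sharp} \contrC$, which is itself in $\mathcal{D}$. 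Since contract names do not occur inside contract bodies, up-to renaming of the names $\varX$ each active component ranges over the finite set $\mathcal{D}$ and each marker ranges over the finite set of recursion variables.

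It then remains to bound the \emph{number} of components. The markers are immediate: no rule removes a marker and the side condition of $\finrightarrow{}$ forbids two copies of the same $\callX{}$, so \nrule{[A-Rngt]} fires at most once per recursion variable, giving at most one marker per recursion variable and at most that many renegotiation steps in any run. For the active contracts I would introduce a measure $w(\contrD)$, the maximal number of parallel contracts reachable from $\confContr[\varX]{\contrD}{}$ using only \emph{non}-renegotiation moves, defined by structural recursion over $\mathcal{D}$ by $w(\withdrawC{B}) = w(\adv{\callX{}}) = w(\cnil) = 1$, $w(\tau.\contrC) = w(\advPref{\tau.\contrC}) = w(\contrC)$, $w(\advPref{\contrD}) = w(\contrD)$, $w(\sum_i \contrD[i]) = \max_i w(\contrD[i])$, and $w(\splitC{\mmid_i \splitB{}{\contrC[i]}}) = \sum_i w(\contrC[i])$; note that $\adv{\callX{}}$ is a base case (no unfolding), so $w$ is well defined on the finite set $\mathcal{D}$, and I put $W = \max_{\contrD \in \mathcal{D}} w(\contrD)$. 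Extending $w$ additively to configurations by $\Phi(\confG) = \sum_j w(\contrD[j])$ over the active components, a case analysis on the rules shows that \nrule{[A-Withdraw]}, \nrule{[A-Rev]}, \nrule{[A-Split]}, \nrule{[A-Branch]} and \nrule{[A-Ext]} never increase $\Phi$ --- in particular \nrule{[A-Split]} leaves it unchanged, merely realising the parallelism already counted by the $\sum_i$ clause --- whereas a renegotiation step raises $\Phi$ by at most $W$. As renegotiations happen at most once per recursion variable, $\Phi$ stays below $\Phi(\confG[0])$ plus $W$ times the number of recursion variables throughout the run; since $w(\contrD) \geq 1$, this bounds the number of active contracts.

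Combining the two bounds, every reachable configuration has boundedly many components, each either a marker from a finite set or an active contract whose body lies in $\mathcal{D}$; because names $\varX$ do not occur inside bodies, a configuration is determined up-to renaming by the multiset of bodies and the set of markers, and there are only finitely many of these. I expect the main obstacle to be precisely the bound on the number of active contracts: the easy remark that each recursion variable unfolds at most once does not, by itself, stop the $\splitname$ moves interleaved between renegotiations from multiplying contracts, and it is the measure $w$/$\Phi$ that pins down the fact that splitting only realises parallelism already present and hence cannot cause unbounded growth.
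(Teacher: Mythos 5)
Your proof is correct. There is, however, little in the paper to compare it against: the paper's entire argument for this theorem is the sentence that it is a ``direct consequence of the definition of the abstract semantics''. Your write-up supplies precisely the content that this one-liner glosses over. The two easy ingredients --- closure of the reachable contract bodies under a finite set $\mathcal{D}$ of subterms of the initial contracts and of the defining equations, and the observation that the side condition of $\finrightarrow{}$ together with the persistence of the $\callX{}$ markers lets each recursion variable be unfolded at most once per run --- are presumably what the authors had in mind, but you are right that they do not by themselves bound the number of parallel components: \nrule{[A-Split]} can multiply components between renegotiations, and one needs something like your weight $w$ and the additive measure $\Phi$ (invariant under splits, non-increasing on all non-renegotiation steps, increased by at most $W$ on each of the finitely many renegotiation steps) to pin down the bound. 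Your final step --- a reachable configuration is determined up to renaming by a bounded multiset over $\mathcal{D}$ plus a subset of the finitely many markers --- then closes the argument. One presentational nit: a step derived by \nrule{[A-Branch]} or \nrule{[A-Ext]} whose premise is an \nrule{[A-Rngt]} step does increase $\Phi$, so your case split should be phrased on whether the derivation uses \nrule{[A-Rngt]} rather than on the topmost rule of the derivation; the bound you obtain is unaffected.
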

\begin{proof}
  Direct consequence of the definition of the abstract semantics. 
\end{proof}

The following lemma establishes the equivalence of $\absrightarrow{}$ and
$\finrightarrow{}$ against liquidability,
and it is instrumental to prove that they also agree on liquidity
(Theorem~\ref{th:abs-bitml:fin-liquidity}).

\begin{lem}
  \label{lem:abs-bitml:fin-liquidability}
  Let $\absConfG$ be an abstract configuration
  without multiple occurrences of any $\callX{}$.
  Then, for all contract names $\varX$:
  \[
  \varX \text{ liquidable in } \absConfG 
  \text{ \wrt } \absrightarrow{} / \mathcal{L}^{\sharp}
  \; \iff \;
  \varX \text{ liquidable in } \absConfG
  \text{ \wrt } \finrightarrow{} / \mathcal{L}^{\sharp}
  \]
\end{lem}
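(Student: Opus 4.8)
The plan is to prove both implications by directly transferring witnessing runs between the two semantics, exploiting the fact that $\finrightarrow{}$ is obtained from $\absrightarrow{}$ merely by forbidding certain transitions, and that none of the forbidden transitions ever carries a label in $\mathcal{L}^{\sharp}$.

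For the implication from $\finrightarrow{}$ to $\absrightarrow{}$, I would simply observe that $\finrightarrow{}\, \subseteq\, \absrightarrow{}$ and that a shared transition receives the same label under both relations. Hence a run witnessing liquidability of $\varX$ \wrt $\finrightarrow{}/\mathcal{L}^{\sharp}$---whose labels all lie in $\mathcal{L}^{\sharp}$ and which ends in a configuration containing no contract originating from $\varX$---is, verbatim and with the same labels, a run \wrt $\absrightarrow{}$. Since the origin function of Definition~\ref{def:origin} depends only on the sequence of configurations and their contract-name sets $\cn{\cdot}$, it is unchanged, so liquidability \wrt $\absrightarrow{}/\mathcal{L}^{\sharp}$ follows at once.

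The converse direction is where the hypothesis on the occurrences of recursion markers is used. First I would establish the key observation that a transition whose label $\absLabS$ lies in $\mathcal{L}^{\sharp}$ (\ie is of the bare form $\varX$) can only be derived by \nrule{[A-Withdraw]}, \nrule{[A-Rev]}, or \nrule{[A-Split]}, possibly under \nrule{[A-Branch]}; indeed \nrule{[A-Rngt]} and \nrule{[A-Ext]} produce labels of the form $\advPref{\cdots}$, which \nrule{[A-Branch]} propagates unchanged and which by definition never belong to $\mathcal{L}^{\sharp}$. Now \nrule{[A-Rngt]} is the unique rule that introduces a standalone recursion marker $\callX{}$ into a configuration, and none of the three admissible rules introduces, consumes, or otherwise touches such markers (they all preserve the surrounding context $\absConfG$ intact). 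Consequently the number of occurrences of each $\callX{}$ is invariant along any $\absrightarrow{}$ run whose labels all lie in $\mathcal{L}^{\sharp}$.

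Starting from $\absConfG$, which by hypothesis contains at most one occurrence of each $\callX{}$, this invariant guarantees that every configuration reached along such a run---in particular the target of each individual step---still has no repeated $\callX{}$. Therefore each step meets the side condition defining $\finrightarrow{}$, so the entire witnessing $\absrightarrow{}$ run is already a $\finrightarrow{}$ run, again with identical labels and identical origin function; this yields liquidability of $\varX$ \wrt $\finrightarrow{}/\mathcal{L}^{\sharp}$ and completes the converse. The only real obstacle is conceptual rather than computational: it lies in isolating this invariant and in verifying that $\mathcal{L}^{\sharp}$-labelled moves are precisely the non-renegotiation moves; once that is done, both implications reduce to reusing the same run, since $\finrightarrow{}$ and $\absrightarrow{}$ coincide on all transitions whose target has no repeated recursion marker.
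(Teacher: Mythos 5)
Your proof is correct and follows essentially the same route as the paper's: both directions rest on the observation that $\finrightarrow{}$ only ever forbids moves whose labels lie outside $\mathcal{L}^{\sharp}$, so liquidability witnesses transfer verbatim. You merely spell out in more detail the invariant the paper leaves implicit, namely that $\mathcal{L}^{\sharp}$-labelled steps never create recursion markers and hence preserve the ``no repeated $\callX{}$'' property along the run.
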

\begin{proof}
  Recall that liquidability only considers moves in $\mathcal{L}^{\sharp}$.
  The only difference between $\absrightarrow{}$ and $\finrightarrow{}$ is 
  that the latter forbids, in some configurations, 
  the application of rule \nrule{[A-Rngt]}.
  Since the label of this rule is not in $\mathcal{L}^{\sharp}$, this
  difference is immaterial for liquidability.
\end{proof}

\begin{thm}
  \label{th:abs-bitml:fin-liquidity}
  Let $\absConfG$ be an abstract configuration
  without occurrences of any $\callX{}$.
  Then, for all contract names $\varX$:
  \[
  \varX \text{ liquid in } \absConfG 
  \text{ \wrt } \absrightarrow{} / \mathcal{L}^{\sharp}
  \; \iff \;
  \varX \text{ liquid in } \absConfG
  \text{ \wrt } \finrightarrow{} / \mathcal{L}^{\sharp}
  \]
\end{thm}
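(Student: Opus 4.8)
The plan is to prove the two implications separately, exploiting that $\finrightarrow{}$ is a sub-relation of $\absrightarrow{}$ (it blocks only the renegotiation moves \nrule{[A-Rngt]} that would duplicate a recursion marker $\callX{}$) together with Lemma~\ref{lem:abs-bitml:fin-liquidability}, which already equates the two relations on \emph{liquidability} whenever markers are non-duplicated. Two observations drive everything. First, liquidability runs use only labels in $\mathcal{L}^{\sharp}$, and since \nrule{[A-Rngt]} is the unique rule whose label ($\advPref{\varX}$) lies outside $\mathcal{L}^{\sharp}$, no liquidability run ever renegotiates; consequently the markers $\callX{}$ are inert during liquidation, and each remaining abstract rule is frame-preserving, of the shape $\confContr[\varX]{\cdots}{} \mid \absConfG \to \cdots \mid \absConfG$. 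Hence liquidability of a contract name $\varY$ in any abstract configuration depends only on the contract term sitting at $\varY$, not on the surrounding context nor on the markers present; I will call this the \emph{locality} of liquidability. Second, because the parameters of renegotiations are abstracted away, unfolding a recursion variable via \nrule{[A-Rngt]} always reproduces the same fixed term $\contrC$ (where $\cVarX =_{\sharp} \contrC$), independently of how often it has already been unfolded.

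For the forward implication (liquid \wrt $\absrightarrow{}$ implies liquid \wrt $\finrightarrow{}$) I would argue directly. Any $\finrightarrow{}$-run $\runS$ from $\absConfG$ to some $\absConfGi$ is also an $\absrightarrow{}$-run, and the origin function is computed identically in both, so a descendant $\varY$ of $\varX$ along $\runS$ is liquidable in $\absConfGi$ \wrt $\absrightarrow{} / \mathcal{L}^{\sharp}$ by the assumed liquidity. Since $\absConfG$ contains no marker and $\finrightarrow{}$ preserves the invariant that each $\callX{}$ occurs at most once, $\absConfGi$ has no duplicated markers, so Lemma~\ref{lem:abs-bitml:fin-liquidability} upgrades this to liquidability \wrt $\finrightarrow{} / \mathcal{L}^{\sharp}$; thus $\varX$ is liquid \wrt $\finrightarrow{}$.

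For the backward implication (liquid \wrt $\finrightarrow{}$ implies liquid \wrt $\absrightarrow{}$) I take an arbitrary $\absrightarrow{}$-run $\runS$ from $\absConfG$ to $\absConfGi$ and a descendant $\varY$ of $\varX$, carrying contract term $t$, and must show $\varY$ liquidable in $\absConfGi$. By locality it suffices to exhibit some $\finrightarrow{}$-reachable descendant of $\varX$ bearing the same term $t$: then liquidity \wrt $\finrightarrow{}$, plus Lemma~\ref{lem:abs-bitml:fin-liquidability} (the $\finrightarrow{}$-reached configuration has distinct markers), plus locality give liquidability of $t$, hence of $\varY$. To produce such a run I restrict $\runS$ to the chain of reductions tracing $\varX$ down to $\varY$ (a path in the descendant tree), which by frame-preservation is a legitimate reduction sequence on its own. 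I then make this path repetition-free by splicing out any loop $t_i \to \cdots \to t_j$ with $t_i = t_j$; the result still reaches $t$, because each reduction depends only on the current term. A repetition-free path cannot unfold the same recursion variable twice, since both unfoldings would yield the identical term $\contrC$ and hence a repeated state; therefore every variable is unfolded at most once, the induced markers are pairwise distinct, and the whole sequence is a valid $\finrightarrow{}$-run starting from the marker-free $\absConfG$ and touching no other contract. Its final descendant carries exactly $t$, as required.

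The main obstacle I expect is this backward direction, specifically justifying that the restricted, loop-spliced $\varX$-thread is simultaneously a valid $\finrightarrow{}$-run and still reaches the target term $t$. This hinges on two points that must be argued with care: that abstract reductions are genuinely local (so confining attention to the $\varX$-thread neither enables nor disables moves, and the \emph{global} single-occurrence constraint of $\finrightarrow{}$ is met simply by never unfolding a variable twice), and that unfolding a recursion variable is idempotent at the level of terms, which is what makes loop-splicing sound and guarantees that no reachable contract term is lost when passing from $\absrightarrow{}$ to $\finrightarrow{}$. The forward direction and the locality observation are comparatively routine once Lemma~\ref{lem:abs-bitml:fin-liquidability} is in hand.
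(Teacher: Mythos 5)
Your proposal is correct and follows essentially the same route as the paper's proof: the forward direction is the inclusion $\finrightarrow{}\subseteq\absrightarrow{}$ plus Lemma~\ref{lem:abs-bitml:fin-liquidability}, and the backward direction rests on the same key observation --- that parallel contracts do not interact and each unfolding of a recursion variable spawns the same term, so any term reachable via $\absrightarrow{}$ is already reachable via a run that never renegotiates the same $\callX{}$ twice, i.e.\ a $\finrightarrow{}$-run. Your explicit ``restrict to the $\varX$-thread and splice out repeated terms'' construction, phrased directly rather than by contradiction, is just a more detailed rendering of the paper's sketch (with one cosmetic slip: \nrule{[A-Rngt]} is not the \emph{only} rule with a label outside $\mathcal{L}^{\sharp}$, since \nrule{[A-Ext]} also produces $\advPref{\cdot}$ labels, but your conclusion that liquidability runs never renegotiate is unaffected).
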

\begin{proof}
  \emph{(sketch)}
  For $\Rightarrow$, consider a run $\absRunS$ of $\finrightarrow{}$ 
  starting from $\absConfG$ and leading to some $\absConfGi$, and let $\varY$ be a descendant of $\varX$.
  Since $\absrightarrow{}$ includes $\finrightarrow{}$, 
  $\absRunS$ is also a run of $\absrightarrow{}$.
  Since $\varX$ is liquid in $\absConfG$ \wrt $\absrightarrow{}$, then
  $\varY$ must be liquidable in $\absConfGi$ \wrt $\absrightarrow{}$.
  By Lemma~\ref{lem:abs-bitml:fin-liquidability}, it follows that
  $\varY$ is liquidable also in $\finrightarrow{}$.
  Therefore, $\varX$ is liquid in $\absConfG$ \wrt $\finrightarrow{}$.

  For $\Leftarrow$, take any contract $\confContr[\varY]{\contrC}{}$ which originates from the contract $\varX$ in a run of $\absrightarrow{}$ starting from $\absConfG$. 
  By contradiction, assume that $\varY$ is not liquidable for $\absrightarrow{}$.
  In the $\absrightarrow{}$ run, $\confContr[\varY]{\contrC}{}$ might be reached after several renegotiations involving some defined contracts $\callX{}_1,\callX{}_2,\ldots$ more than once. 
  In such case, it is also possible to reach $\contrC$ 
  without renegotiating the same $\callX{}_i$ more than once, 
  because each use of rule \nrule{[A-Rngt]} on $\callX{}_i$ spawns the same contract, and contracts in parallel do not interact.
  Hence, there exists some $\absrightarrow{}$ run leading to a non liquidable 
  $\confContr[\varZ]{\contrC}{}$ and whose configurations never include 
  any $\callX{}_i$ more than once.
  This run is therefore also a $\finrightarrow{}$ run, 
  leading to a non liquidable $\varZ$ according to Lemma~\ref{lem:abs-bitml:fin-liquidability}
  --- contradiction with the liquidity of $\varX$ \wrt $\finrightarrow{}$.
\end{proof}

\section{The BitML toolchain}
\label{sec:toolchain}

\begin{figure}
  \footnotesize
    \begin{tikzpicture}[>=triangle 45]
      \node[block] (a) {BitML \\ on DrRacket};
      \node[inputblock, below =0.7cm of a]   (ap){Queries};
      \node[inputblock, above =0.7cm of a]   (ac){Contract};

      \node[block, above right = 0.2cm and 2cm of a] (b) {Abstract BitML \\ semantics};
      \node[block, right =0.7cm of b]   (mc){Model \\ checker};
      \node[outputblock,right =1cm of mc]   (qr){Query \\ result};

      \node[block, below =2cm of b]   (c1){BitML to \\ Balzac};
      \node[block, right =0.7cm of c1]   (c2){Balzac to \\ Bitcoin};
      \node[outputblock,right =1cm of c2]   (tx){Bitcoin \\ transactions};

      \node at (4.2,3) {\textsc{Security Analyzer}};
      \draw [ultra thick, draw=black, fill=none, opacity=0.05]
      (2.8,0.5) -- (2.8,2.8) -- (9.4,2.8) -- (9.4,0.5) -- cycle;

      \node at (3.5,-0.5) {\textsc{Compiler}};
      \draw [ultra thick, draw=black, fill=none, opacity=0.05]
      (2.8,-2.9) -- (2.8,-0.7) -- (9.4,-0.7) -- (9.4,-2.9) -- cycle;

      \draw[arrows=->,line width=0.5pt] (ac.south) -| (a.north);
      \draw[arrows=->,line width=0.5pt] (ap.north) -| (a.south);

      \draw[arrows=->,line width=0.5pt] (a.east) -| (2.25,0) |- (b.west);
      \draw[arrows=->,line width=0.5pt] (a.east) -| (2.25,0) |- (c1.west);
      \draw[arrows=->,line width=0.5pt] (b.east) |- (mc.west);
      \draw[arrows=->,line width=0.5pt] (mc.east) |- (qr.west);

      \draw[arrows=->,line width=0.5pt] (c1.east) |- (c2.west);
      \draw[arrows=->,line width=0.5pt] (c2.east) |- (tx.west);
    \end{tikzpicture}
  \caption{Toolchain architecture.}
  \label{fig:toolchain-architecture}
\end{figure}

We now describe the BitML toolchain,
whose architecture is displayed in~\Cref{fig:toolchain-architecture}.
The development workflow is the following:
\begin{inlinelist}[(a)]
\item write the BitML contract in the DrRacket IDE;
\item verify that the contract is liquid \wrt the given participant;
\item compile the contract to Bitcoin transactions;
\item execute the contract, by appending these transactions to the Bitcoin blockchain
  according to the chosen strategy.
\end{inlinelist}
The verifier implements the abstract BitML semantics in Maude, a
model-checking framework based on rewriting logic~\cite{maude01}.
The toolchain is open-source%
\footnote{\url{https://github.com/bitml-lang}};
a tutorial is available online%
\footnote{\url{https://blockchain.unica.it/bitml}},
including references to our experiments on the Bitcoin testnet.

\paragraph{Benchmarks}

To evaluate our toolchain, 
we use a benchmark of representative use cases%
\footnote{\url{https://github.com/bitml-lang/bitml-compiler/tree/master/examples/benchmarks}}
including financial contracts~\cite{Thompson18isola,Biryukov17wtsc}, 
auctions, lotteries~\cite{Andrychowicz16cacm,Miller16zerocollateral} 
and gambling games.
For each contract in the benchmark, 
we display in \Cref{fig:evaluation:benchmarks}
the number of involved participants,
the number of transactions obtained by the compiler,
and the number of Maude rewrites for checking liquidity.
Notably, the tool automatically verifies that all the contracts
in the benchmark are liquid.
The verification time for all the benchmarks is in the order of milliseconds
on a consumer-grade laptop, 
except for $\ZCBii$, which requires $\sim 1 s$.

\begin{table}[t!]
  \centering
  \small
  \begin{tabular}{cccc}\toprule
    Contract & Participants & Transactions & Rewrites \\
    \midrule
    Mutual timed commitment & 2 & 15 & 72 \\
    Mutual timed commitment & 3 & 34 & 207 \\
    Mutual timed commitment & 4 & 75 & 644 \\
    Mutual timed commitment & 5 & 164 & 2097 \\
    Escrow (early fees) & 3 & 12 & 104 \\    
    Escrow (late fees) & 3 & 11 & 110 \\ 
    Zero Coupon Bond & 3 & 8 & 189 \\
    Coupon Bond & 3 & 18 & 9101 \\
    Future$(C)$ & 3 & 5 + $\mathit{T}_C$ & 136 + $R_C$ \\
    Option$(C,D)$ & 3 & 14 + $T_{C} + T_{D}$ & 162 + $R_C + R_D$ \\
    Lottery (quadratic collateral) & 2 & 15 & 1466 \\
    Lottery ($0$ collateral) & 2 & 8 & 31 \\
    Lottery ($0$ collateral) & 4 & 587 & 167 \\
    Rock-Paper-Scissors & 2 & 23 & 2322 \\
    Morra game & 2 & 40 & 89 \\
    Shell game & 2 & 23 & 48 \\
    Auction (2 turns) & 2 & 42 & 218 \\
    Coin flipping game & 2 & 32 & 563 \\
    Zero coupon bond (v3) & 3 & 44 & 1196813 \\
    \bottomrule
  \end{tabular}
  \caption{Benchmarks for the BitML toolchain.}
  \label{fig:evaluation:benchmarks}
\end{table}

We compare the performance of our tool against~\cite{Andrychowicz14formats}, 
which models Bitcoin contracts in Uppaal,
a model-checking framework based on Timed Automata.
The most complex contract modelled in~\cite{Andrychowicz14formats}
is the mutual timed commitment with 2 participants:
this requires \mbox{$\sim 30$s} to be verified in Uppaal,
while our tool verifies the same property in \mbox{$\sim 1$ms}.
This speedup is due to the higher abstraction level of BitML
over~\cite{Andrychowicz14formats}, 
which operates at the (lower) level of Bitcoin transactions.
Note that increasing the number of participants in the mutual timed commitment significantly affects verification time. This is because the size of the contract increases and there are more committed secrets in play. 
Indeed, a larger number of secrets increases the branching in the (abstract) BitML LTS, since in each state any secret not revealed so far can be revealed. 
This increases the size of the state space, and consequently the complexity of model checking.

We observe that the number of Maude rewritings for $\ZCBii$ is particularly large. 
This is due to the contract generating many parallel components, which cause the explosion
of the state space.
More specifically, $\ZCBii$ performs 10 steps of recursion, each one involving a split, which creates several parallel sub-contracts (half of them with a choice).

The exponential blow-up due to parallel components is a common performance issue of general model checking, and so it also occurs in our setting.
Indeed, the benchmarks in Table 1 only include the contracts for which we have managed to verify liquidity. 
We have not explicitly looked for the simplest examples 
for which the verification is practically unfeasible. 
However, by the discussion above, we expect that increasing 
the number of recursion steps by $\sim$20 units in the $\ZCBii$ 
contract would be enough to make the model checker exhaust the available 
resources.
Besides increasing the number of recursion steps, 
there are many other ways to design a contract for which verifying liquidity is unfeasible.
For instance, if the contract contains a split of $N$ withdraw actions, 
the size of the state space is at least $O(2^N)$.
A similar size is obtained for a contract with $N$ reveal actions in sequence, 
since at each point of the computation one has the option of revealing 
any of their secrets.
The very same problem is witnessed by a contract requiring $N$ authorizations, \eg as in
$\authC{{\pmvA[1]}}{\cdots : \authC{{\pmvA[N]}}{\contrD}}$.

\paragraph{Limitations}

One of the main difficulties that we have encountered 
in developing contracts is that some complex BitML specifications
can not be compiled to pure Bitcoin,
because of the 520-byte limit on the size of each value pushed to the 
evaluation stack~\cite{btc520bytes}.
In some cases, we managed to massage the BitML contract
so to make its compilation respect the constraint.
For instance, a pattern  that easily violates the constraint
is the following:

\begin{center}
\scalebox{1.18}{
  \lstinputlisting[language=bitml,classoffset=1,morekeywords={},classoffset=2,morekeywords={FundsA,Commit,Reveal},framexbottommargin=0pt,framextopmargin=0pt]{4p-lottery-nonstandard.rkt}
}
\end{center}

The \code{choice} is compiled
into a transaction whose redeem script encodes
the disjunction of \emph{three} logical conditions,
corresponding to the three branches of the \code{choice}.
Depending on the predicates \code{p0} and \code{p1},
and on the number of participants in the contract,
this script may violate the 520-byte constraint.
A workaround is to rewrite the pattern above into:

\begin{center}
\scalebox{1.18}{
  \lstinputlisting[language=bitml,classoffset=1,morekeywords={},classoffset=2,morekeywords={FundsA,Commit,Reveal},framexbottommargin=0pt,framextopmargin=0pt]{4p-lottery-standard.rkt}
}
\end{center}

In this case the compilation includes two transactions, 
corresponding to the two \code{choice}s.
The scripts of these transactions encode the disjunction of \emph{two} 
logical conditions, corresponding to the two branches of the \code{choice}s.
Using this workaround we have managed to compile a 4-players lottery
into standard transactions, 
at the price of increasing the number of transactions 
(587 for the standard version \emph{vs.}~138 for the nonstandard one).
Similar techniques (\eg simplification of predicates%
\footnote{\url{https://github.com/bitml-lang/bitml-compiler/blob/master/bitml/exp.rkt}}
) allowed us to compile 
all the contracts in~\Cref{fig:evaluation:benchmarks}
into standard Bitcoin transactions.

In general, the 520-byte constraint intrinsically 
limits the expressiveness of Bitcoin contracts:
for instance, since public keys are 33 bytes long, 
a contract which needs to simultaneously verify 15 signatures
can not be implemented using standard transactions.

\section{Variants of liquidity}
\label{sec:variants-liquidity}

We now discuss some variants of the notion of liquidity of Section~\ref{sec:liquidity}.

\paragraph{Liquidity under a strategy}

The runs $\runS = \confG \xrightarrow{} \cdots \xrightarrow{} \confGi$
in Definition~\ref{def:liquid} allow any participant to perform
\emph{any} enabled move.
For instance, consider the contract:
\[
\contrAdv
{\persdep{\pmvA}{1}{\varX} \mid \presecret{\pmvA}{\secrA} \mid \presecret{\pmvA}{\secrAi}}
{
  \big(
  \putC{}{\secrA} . \, \withdrawC{\pmvA}
  \, + \,
  \putC{}{\secrAi}. \, \authC{\pmvB}{\withdrawC{\pmvA}}
  \big)
}
\]
This contract is \emph{not} liquid for $\pmvA$: 
indeed, if $\pmvA$ performs $\putC{}{\secrAi}$, the reduct 
$\authC{\pmvB}{\withdrawC{\pmvA}}$ is not liquidable by $\pmvA$ alone,
since $\pmvB$ may refuse to give his authorization.
To overcome this issue, $\pmvA$ can follow the strategy of
always performing the $\putC{}{\secrA}$. 
In this way, she is sure to be able liquidate the contract
from any reachable state.
A possible extension of the notion of liquidity 
in Definition~\ref{def:liquid} is to make it parametric on 
$\pmvA$'s strategy, and consider only the runs
$\runS = \confG \xrightarrow{} \cdots \xrightarrow{} \confGi$
which are coherent with it.

\paragraph{Multiparty liquidity}

Definition~\ref{def:liquid} requires that $\pmvA$ \emph{alone} 
can liquidate each descendent of the contract.
We can relax it by considering a set of collaborative participants.
For instance, consider an escrow contract between $\pmvA$ and $\pmvB$,
involving also a mediator $\pmv{M}$:
\begin{align*}
  & \Escrow
  =
    \authC{\pmvA}{\withdrawC{\pmvB}}
    \; + \;
    \authC{\pmvB}{\withdrawC{\pmvA}}
    \; + \; 
    \authC{\pmvA}{\Resolve} 
    \; + \;
    \authC{\pmvB}{\Resolve}
  \\[2pt]
  & \Resolve
  =
  \splitname (
    \splitB{0.1}{\withdrawC{\pmvM}} 
  \\
  & \hspace{72pt}
    \mid
   \; \splitB{0.9}{\authC{\pmvM}{\withdrawC{\pmvA}} + \authC{\pmvM}{\withdrawC{\pmvB}}}
    )
\end{align*}

After the contract has been stipulated, 
$\pmvA$ can choose to pay $\pmvB$, by authorizing the first branch. 
Similarly, $\pmvB$ can allow $\pmvA$ to take her money back,
by authorizing the second branch. 
If they do not agree, any of them can invoke a mediator $\pmvM$ to resolve the dispute,
invoking a $\Resolve$ branch.
There, the initial deposit (say, of $\valV\BTC$) is split in two parts:
$0.1\valV$ goes to the mediator,
while $0.9\valV$ is assigned either to $\pmvA$ and $\pmvB$,
depending on $\pmvM$'s  choice.
This contract is \emph{not} liquid for $\pmvA$,
because $\pmvB$ can invoke the mediator, who can refuse to act, 
freezing the funds within the contract (similarly for $\pmvB$).
Instead, assuming that $\pmv{M}$ is collaborative, 
the contract is liquid for both $\pmvA$ and $\pmvB$.
Indeed, a collaborative $\pmv{M}$ will always authorize either the 
$\withdrawC{\pmvA}$ or the $\withdrawC{\pmvB}$
to unlock $0.9\valV$.
Multiparty liquidity where all participants are collaborative 
was used \eg in~\cite{Tsankov18ccs} in the context of Ethereum contracts.

\paragraph{Quantitative liquidity}

Definition~\ref{def:liquid} requires that no funds remain frozen within the contract.
However, in some cases $\pmvA$ could accept the fact that
a portion of the funds remain frozen, especially when these funds
would be assigned to other participants.
We could define a contract \emph{$\valV$-liquid} for $\pmvA$
if at least $\valV$ bitcoins are guaranteed to be redeemable
by anyone.
For instance, $\Lottery[\Win]$ of Section~\ref{sec:examples}
is non-liquid for $\pmvA$, but it is $4\BTC$-liquid.
Instead, $\Lottery[{\Win[2]}]$ is $6\BTC$-liquid, and then also liquid,
under this strategy.
A refinement of this notion could require that at least $\valV\BTC$
are transferred to $\pmvA$, rather than to any participant.
Under this notion, both $\Lottery[\Win]$ and $\Lottery[{\Win[2]}]$ 
would be $2\BTC$-liquid for $\pmvA$.

Other variants of liquidity may take into account the time when funds become liquid,
the payoff of strategies (\eg, ruling out irrational adversaries),
or fairness issues.

\section{Variants of contract primitives}
\label{sec:variants-bitml}

The renegotiation primitive we have proposed for BitML 
is motivated by its simplicity, and by the possibility of compiling 
into standard Bitcoin transactions. 
By adding some degree of complexity, we can devise more general primitives, 
which could be useful in certain scenarios.
We discuss below some alternatives.

\paragraph{Renegotiation-time parameters.}

The primitive $\adv{\callX{\vec{\sexp}}}$ 
allows participants to choose at run-time only the deposit variables 
used in the renegotiated contracts,
and to commit to new secrets.
A possible extension is to allow participants
to choose at run-time \emph{arbitrary} values for the renegotiation 
parameters $\vec{\sexp}$.

For instance, consider a mortgage payment, 
where a buyer $\pmvA$ must pay $10\BTC$ to a bank $\pmvB$ in 10 installments.
After $\pmvA$ has paid the first five installments (of $1\BTC$ each), 
the bank might propose to renegotiate the contract, 
varying the amount of the installment.
Using the BitML renegotiation primitive presented in Section~\ref{sec:bitml}, 
we could not model this contract, since the new amount and the number 
of installments are unknown at the time of the original stipulation. 
Technically, the issue is that the primitive $\adv{\callX{\vec{\sexp}}}$
only involves static expressions $\sexp$, 
the value of which is determined at stipulation time. 

\newcommand{\IPP}[2][]{\ensuremath{\cVar{IPP}{\langle {#2} \rangle}_{#1}}\xspace}

To cope with non-statically known values, 
we could extend guarded contracts with terms of the form 
$\adv{\callX{\pmvB : \valV}}$, 
declaring that the value $\valV$ is to be chosen by $\pmvB$ 
at renegotiation time.
For instance, this would allow to model our installments payment plan
as $\IPP{1}$, with the following defining equations:
\begin{align*}
  \IPP{\procParamA<5}
  & = \contrAdv{\persdep{\pmvA}{1}{\varphX}}
    {\big(
    \splitC{\splitB{1}{\withdrawC{\pmvB}} \mid \splitB{0}{\adv{\IPP{\procParamA+1}}}}
    \big)}
  \\
  \IPP{5}
  & = \contrAdv{\persdep{\pmvA}{1}{\varphX}}
    {\big(
    \splitC{\splitB{1}{\withdrawC{\pmvB}} \mid \splitB{0}{\adv{\callY{\pmvB:k,\pmvB:v}}}}
    \big)}
  \\
  \callY{\procParamA \neq 1,\procParamB}
  & = \contrAdv{\persdep{\pmvA}{\procParamB}{\varphX}}
    {\big(
    \splitC{\splitB{\procParamB}{\withdrawC{\pmvB}} \mid \splitB{0}{\adv{\callY{\procParamA-1,\procParamB}}}}
    \big)}
  \\
  \callY{1,\procParamB}
  & = \contrAdv{\persdep{\pmvA}{\procParamB}{\varphX}}
    {
    \;\withdrawC{\pmvB}
    }
\end{align*}
where in $\IPP{5}$, the bank chooses the number of installments $k$,
as well as the amount $v$ of each installment.
Note that if $\pmvA$ does not agree with these values,
the renegotiation fails. 
A more refined version of the contract should take this possibility into 
account, by adding suitable compensation branches.
Although adding the new primitive would moderately increase the complexity 
of the semantics and of the compiler, 
this extension can still be implemented on top of standard Bitcoin.

\paragraph{Renegotiation with a given set of participants.}

As we have remarked in Section~\ref{sec:bitml},
a renegotiation can be performed only if \emph{all} 
the participants of the contract agree.
We could relax this, by just requiring the agreement of 
a \emph{given} set of participants 
(possibly, not among those who originally stipulated the contract).

For instance, consider an escrow service between a buyer $\pmvA$ 
and a seller $\pmvB$ for the purchase of an item worth $1\BTC$.
The normal case is when the $\pmvA$ authorizes the transfer of $1\BTC$ 
after receiving the item, but it may happen that a dishonest $\pmvB$ 
never ships the item, or that a dishonest buyer never authorizes the payment.
To cope with these cases, the participants could renegotiate the contract, 
including an escrow service $\pmvM$ which mediates the dispute:
\begin{align*}
  & \authC{\pmvA}{\withdrawC{\pmvB}} \, + \, \authC{\pmvB}{\withdrawC{\pmvA}} 
   + \; \extadv{\pmvA:\pmvM}{\call{\Refund[\pmvA]}{}} 
    \, + \, \extadv{\pmvB:\pmvM}{\call{\Refund[\pmvB]}{}}
  \\[2pt]
  & \Refund[\pmvP]
  = \contrAdv{\persdep{\pmvP}{0.1}{\varphX}}
    {\;\splitC \big(
    \splitB{0.1}{\withdrawC{\pmvM}} \mid \splitB{1}{\withdrawC{\pmvP}}}
    \big)
\end{align*}
where $\extadv{\pmvA:\pmvM}{\call{\Refund[\pmvA]}{}}$
means that only $\pmvA$ and $\pmvM$ need to agree in order 
for the contract $\Refund[\pmvA]$ to be executed, resolving the dispute.
In this case it is crucial that the renegotiation is possible 
even without the agreement between $\pmvA$ and $\pmvB$.
Indeed, if $\pmvM$ decides to refund $\pmvA$
(by authorizing $\Refund[\pmvA]$),
it is not to be expected that also $\pmvB$ agrees.
Similarly to the one discussed before, 
also this extension can be implemented on-top of Bitcoin.

\paragraph{Non-consensual renegotiation.}
In the variants of \mbox{$\adv{}$} discussed before, 
renegotiation requires one or more participants to agree.
Hence, each use of \mbox{$\adv{}$} must include 
suitable alternative branches,
to be fired in case the renegotiation fails.
In certain scenarios, we may want to renegotiate the contract 
without the participants having to agree. 
To this purpose, we can introduce a new primitive
$\ncadv{\callX{}}$, which continues as $\callX{}$ without
requiring anyone to agree.
We assume that the defining equations of this primitive have the form
$\decl{\cVarX}{\vec{\procParamA}} = \contrAdv{\valV}{\contrC}$,
where $\valV$ represents the amount of \BTC added to the contract, 
by anyone.

We exemplify the new primitive in a two-players game which starts with  
a bet of $1\BTC$ from $\pmvA$, and a bet of $2\BTC$ from $\pmvB$. 
Then, starting from $\pmvA$, players take turns adding $2\BTC$ each to the pot. 
The first one who is not able to provide the additional $2\BTC$ within a given time 
loses the game, allowing the other player to take the whole pot.
The contract is as follows:
\begin{align*}
\contrC 
  & = \setenum{ \persdep{\pmvA}{1}{\varX} \mid \persdep{\pmvB}{2}{\varY}} 
  (\ncadv{\call{\cVarX[\pmvA]}{2}} + \afterC{1}{\withdrawC{\pmvB}})
  \\
  \call{\cVarX[\pmvA]}{n}
  & = \setenum{ 2 } 
    (\ncadv{\call{\cVarX[\pmvB]}{n+1}} + \afterC{n}{\withdrawC{\pmvA}})
  \\
  \call{\cVarX[\pmvB]}{n}
  & = \setenum{ 2 } 
    (\ncadv{\call{\cVarX[\pmvA]}{n+1}} + \afterC{n}{\withdrawC{\pmvB}})
\end{align*}

Unlike \mbox{$\adv{}$}, the action $\ncadv{}$
can be fired without the authorizations of all the players: 
it just requires that the authorization to gather $2 \BTC$ 
is provided, by anyone.
Even though the sender of these $2 \BTC$ is not specified in the contract,
it is implicit in the game mechanism:
for instance, when $\call{\cVarX[\pmvA]}{n}$ calls 
$\call{\cVarX[\pmvB]}{n+1}$, only participant $\pmvB$ 
is incentivized to add $2\BTC$,
since not doing so will make $\pmvA$ win.

Implementing the $\ncadv{}$ primitive on top of Bitcoin seems unfeasible: 
even if it were possible to use complex off-chain multiparty computation 
protocols \cite{Gudgeon19iacr}, 
doing so might be impractical. 
Rather, we would like to extend Bitcoin as much as needed 
for the new primitive. 
In our implementation of BitML, 
we compile contracts to sets of transactions and make participants sign them. 
In standard BitML this is doable since, at stipulation time, 
we can finitely over-approximate the reducts of the original contract. 
Recursion can make this set infinite, 
\eg $\call{\cVarX[\pmvA]}{2}, \call{\cVarX[\pmvA]}{3}, \ldots$, 
hence impossible to compile and sign statically. 
A way to cope with this is to extend Bitcoin with \emph{malleable} 
signatures which only cover the part of the transaction not affected 
by the parameter $n$ in $\call{\cVarX[\pmvB]}{n}$. 
Further, signatures must not cover the $\txIn{}$ fields of transactions, 
since they change as recursion unfolds.
In this way, the same signature can be reused for each call. 

Adding malleability provides flexibility, but poses some risks. 
For instance, instead of redeeming the transaction corresponding to 
$\call{\cVarX[\pmvA]}{n}$ with the transaction of 
$\call{\cVarX[\pmvB]}{n+1}$
one could instead use the transaction of 
$\call{\cVarX[\pmvB]}{n+100}$,
since the two transactions have the same signature. 
To overcome this problem, we could add a new opcode 
to allow the output script of $\call{\cVarX[\pmvB]}{n}$ 
to access the parameter in the redeeming transaction, 
so to verify that it is indeed $n+1$ as intended.
Similarly, to check that we have $2\BTC$ more in the new transaction, 
an opcode could provide the value of the new output.
The same goal could be achieved by exploiting
\emph{covenants}~\cite{Moser16bw,Oconnor17bw,BZ20isola}.

\section{Conclusions}
\label{sec:conclusions}

We have investigated linguistic primitives to renegotiate BitML contracts, and 
their implementation on standard Bitcoin.
More expressive primitives could be devised
by relaxing this constraint, 
\eg assuming the extended UTXO model~\cite{Chakravarty20wtsc}.

Our verification technique is based on a sound abstraction 
of the state space of contracts.
Since this abstraction is finite-state, 
it can be model-checked to verify the required properties.
If we assume that integers are unbounded, 
and that participants always accept renegotiations,
the extension of BitML presented in Section~\ref{sec:bitml} 
can simulate a counter machine,
so making BitML Turing-complete.
Hence, any verification technique for BitML cannot be sound and complete.
Alternative techniques to model checking 
(\eg, type-based approaches~\cite{Das19arxiv})
could be used to analyse relevant contract properties.

\paragraph{Acknowledgements} 
Massimo Bartoletti is partially supported by Aut.\ Reg.\ Sardinia projects 
\textit{Sardcoin}, \textit{Smart collaborative engineering}, and
Conv.\ Fondazione di Sardegna \& Atenei Sardi project F74I19000900007 \emph{ADAM}.
Maurizio Murgia and Roberto Zunino are partially supported by MIUR PON \textit{Distributed Ledgers for Secure Open Communities}.

\bibliographystyle{alphaurl}
\bibliography{main}

\end{document}